\pgfplotsset{compat=newest}
\newtheorem{theorem}{Theorem}
\newtheorem{lemma}{Lemma}
\newtheorem{proposition}{Proposition}
\theoremstyle{definition}
\newtheorem{definition}{Definition}
\newtheorem{assumption}{Assumption}
\newcommand{\BE}{\mathsf{E}}
\newcommand{\BP}{\mathsf{P}}
\newcommand{\BN}{\mathbb{N}}
\renewcommand{\d}{\mathrm{d}}
\def\ee{\mathrm{e}}
\def\dd{\mathrm{d}}
\newcommand{\ve}{\varepsilon}
\newcommand{\rn}[1]{\romannumeral #1}
\newcommand{\Rn}[1]{\expandafter\@slowromancap\romannumeral #1@}
\DeclareMathOperator*{\argmax}{arg\,max}
\numberwithin{equation}{section}
\title{The Hazards and Benefits of Condescension in Social Learning}
\author{
  Itai Arieli\thanks{Technion, E-mail: iarieli@technion.ac.il\smallskip}
  \hspace{1cm}
  Yakov Babichenko\thanks{Technion, E-mail: yakovbab@technion.ac.il\smallskip}
  \hspace{1cm}
  Stephan M\"uller\thanks{University of G\"ottingen, E-mail: stephan.mueller@wiwi.uni-goettingen.de\smallskip}
  \hspace{1cm}\\
  Farzad Pourbabaee\thanks{Caltech, E-mail: far@caltech.edu\smallskip}
  \hspace{1cm}
  Omer Tamuz\thanks{Caltech, E-mail: omertamuz@gmail.com\smallskip}
  \hspace{1cm}
}
\begin{document}

\maketitle

\begin{abstract}
  In a misspecified social learning setting, agents are condescending
  if they perceive their peers as having private information that is
  of lower quality than it is in reality. Applying this to a standard
  sequential model, we show that outcomes improve when agents are
  mildly condescending. In contrast, too much condescension leads to
  worse outcomes, as does anti-condescension.
\end{abstract}

\vspace{2cm}

\clearpage
\pagenumbering{arabic}
\setcounter{page}{1}

\interfootnotelinepenalty=10000
\medmuskip= 0.5mu plus 1.0mu minus 1.0mu

\section{Introduction}

Most human decisions are made under uncertainty and in a social
context.  Understanding how economic agents use their private and
social information to form beliefs is a prerequisite for the
understanding of important phenomena such as the diffusion of ideas,
the adoption of technologies, or the formation of political
opinions. In particular, agents' beliefs about their peers'
information is an important factor that can play a decisive role in
the social outcome.

We study the effect of condescension on social learning outcomes: What
happens when agents, through misspecification, underestimate the
quality of the information that their peers have? Our main result is
that condescension can lead to improved social outcomes, as long as it
is mild. In contrast, anti-condescension, in which agents overestimate
their peers' quality of information, leads to bad outcomes, as does
too much condescension.

We study a misspecified version of the classical sequential social
learning model of \cite{Bikhchandani_etal_JPE_1992} and
\cite{Banerjee_QJE_1992}, with unbounded signals, as introduced by
\cite{Smith_Sorensen_ECMA_2000}. Our notion of a good social learning
outcome is that of efficient learning
\citep{Rosenberg_Vieille_ECMA_2019}, which is said to occur when the
number of agents who choose the incorrect action has finite
expectation. In the well-specified setting, the number of agents who
choose the incorrect action is always finite
\citep{Smith_Sorensen_ECMA_2000}, but its expectation can be finite or
infinite \citep{Rosenberg_Vieille_ECMA_2019}.

In our misspecified setting agents perfectly understand and interpret
their own signal but misperceive the quality of their predecessors'
signals. When agents are mildly condescending, efficient learning
occurs. Because agents underestimate the quality of others' signals,
they put too little weight on their predecessors' actions. In
consequence, their actions are suboptimal, but reveal more of their
own private information. When this is done in moderation more is
gained than lost, and in the long run, the result is quick convergence
to the correct action. This occurs even with signal distributions that
would have induced inefficient learning for well-specified agents. Of
course, since agents are misspecified, each agent attains lower
expected utility than they would if they were not, ceteris
paribus. Nevertheless, their behavior has positive externalities on
later agents, with improved asymptotic outcomes.

When agents are too condescending they put so little weight on their
predecessors' actions that no herd forms and both actions are taken
infinitely often, i.e., asymptotic learning is not obtained. When
agents are anti-condescending they put too much weight on their
predecessors' actions. In consequence, wrong herds form with positive
probability, and again asymptotic learning is not
obtained. Interestingly, it follows that asymptotic learning is
equivalent to efficient learning across all misspecified regimes.

Our proof techniques follow those introduced by \cite{hann2018speed}
and \cite{Rosenberg_Vieille_ECMA_2019} who approximate the discrete
time dynamics of the public belief using a continuous time
differential equation. Due to the misspecified nature of our model, our
analysis deviates from theirs in a number of places. For example, we
need to circumvent the fact that the misspecified belief is not a
martingale. In their model, asymptotic learning is guaranteed by this
martingale property \citep{Smith_Sorensen_ECMA_2000}, whereas in our
model we need to prove it by other means.  

\paragraph{Related Literature.}
A closely related paper is \cite{bernardo2001evolution}. They study a cascade setting with binary signals, but where some fraction of the agents are \emph{overconfident}: They do not put enough weight (in Bayesian terms) on the public information. Through mostly numerical analysis, the authors reach a conclusion that is similar to ours: Moderate overconfidence is beneficial for society. 

The empirical literature on social learning supports the idea of overweighting the private information relative to the public information. For example, \cite{Weizsacker_AER_2010} finds in a meta-analysis of 13 social learning experiments that subjects underweight their social observations relative to the payoff-maximizing strategy. \cite{Duffy_etal_EER_2021} find sizable proportions of both behavioral types, i.e. relative over-
and underweighting of the private information.
Condescension provides one possible mechanism which leads to individually suboptimal low weights of public information.  Other mechanisms that may cause a distortion in the optimal weighting of private and public information are, for example, cursedness \citep{Eyster_Rabin_ECONOMETRICA_2005} and na\"ivet\'e \citep{Eyster_Rabin_AEJMicro_2010}.


Our paper is related to the literature on  social learning with
misspecification. Most of this literature documents a detrimental
effect of misspecification on asymptotic outcomes: For example, the
gambler's fallacy leads to incorrect learning almost surely
\citep{He_TE_2022}, and misinterpreting peers' preferences can lead to
incorrect \citep*{Frick_etal_Econometrica_2020} and cyclical
\citep{Gagnon_Bartsch_2016} learning, or entrenched disagreement
\citep{Bohren_Hauser_2019}. \cite{frick2023belief} analyze belief convergence in a general setting. They demonstrate that in the sequential social learning environment arbitrarily small amounts of misspecification can lead to extreme failures of learning.  \cite{Bohren_JET_2016} studies agents with misspecification regarding the correlation between others' actions, and shows that various undesirable outcomes are possible, depending on the degree and direction of misspecification; see also \cite{bohren2021learning} for a more general setting that subsumes a number of previous ones. Our results highlight potential positive
welfare effects, i.e., misspecifications may increase the efficiency
of learning. Similarly to some of these papers, our results show that learning outcomes depend on the behavior of the signal distributions near the fixed points of the learning dynamics, which in our case are the extremal beliefs.

The literature has identified other channels---not requiring misspecification---by which agents may put more weight on their own signals, as compared to the signals of others. For example, idiosyncratic taste shocks \cite[see, e.g.,][]{goeree2006social, lobel2016preferences} imply that an agent's own signal carries more information about their own payoff relevant state than do the signals of the others. As in our model, this causes agents to reveal more of their private information through their actions, which in turn  can improve information  aggregation in the long term.

A paper that is slightly further away in its goal---but closer in techniques and some of the results---is \cite{chen2019sequential}, who studies ambiguity in sequential social learning. In his model, agents have ambiguity about the distributions of the other agents' signals. The main conclusion is that information cascades are a robust outcome that occurs whenever there is sufficient ambiguity.  
Technically, similar observations to our Propositions~\ref{thm:anti-cond} and~\ref{thm:over-cond} appear in \cite{chen2019sequential}, but our paper is focused on the speed of learning (i.e., efficient learning), which is not studied in that paper.

Our work also complements a burgeoning literature which
analyzes the rationale for the persistence of
misspecifications
\citep[e.g.,][]{He_Logober_WP_2020, ba2021robust, Fudenberg_Lanzani_WP_2022}. That
is, in a sequential learning environment misspecified agents might
have an evolutionary advantage over correctly specified agents by
learning the true state of the world faster. Consequently,
misspecifications caused by intermediate levels of condescension might persist in the long run.


\section{Model}
\subsection{Social Learning with Misspecification}
 There is a binary state of nature $\theta\in\Theta =
\{\ell,h\}$, chosen at time zero, and equal to $h$ with
probability $\pi\in (0,1)$. A countably infinite set of agents $=\{1,2,\ldots\}$ arrive
sequentially. Each agent $n$, in turn, takes an action $a_n \in
\{\ell,h\}$, with utility 1 if $a=\theta$ and 0 otherwise. Before
choosing her action, agent $n$ observes her predecessors' actions $I_n
= (a_1,\ldots,a_{n-1})$.

Each agent also observes a private signal $s_n\in S$. Here $S$ is some
measurable set of possible signal realizations. Signals are
independent and identically distributed conditioned on the state. We
denote probabilities by $\BP$, and explicitly write $\BP_\pi$ when we
want to highlight varying values of the prior $\pi$. We further use
the notation $\BP_{h}$ to refer to
$\BP(\cdot\mid\theta=h)$, the probability measure $\BP$
conditional on the realized state being $h$. We define $\BP_{\pi,h}$ analogously.

Let $q_n=\BP(\theta=h|s_n)$ be the (random) \emph{private posterior}:
The belief induced by observing the private signal of agent $n$. By a
standard direct revelation argument, we can assume that $s_n=q_n$,
since $q_n$ is a sufficient statistic for $\theta$ given $s_n$. Denote
by $F_\ell$ and $F_h$ the cumulative distribution functions of $q_n$,
conditioned on $\theta=\ell$ and $\theta=h$, respectively. We define $F
= \frac{1}{2}(F_\ell+F_h)$. This is the cumulative distribution
function of $q_n$, for prior $\pi=1/2$.

So far, this model matches the standard herding model
\citep{Bikhchandani_etal_JPE_1992, Banerjee_QJE_1992,
  Smith_Sorensen_ECMA_2000}. We deviate from these models by introducing
a misspecification regarding others' private signals: Agents correctly observe their own type $q_n$, but have a (common) misspecified prior about the distribution of types. Namely, each
agent believes that all the others' private posteriors have conditional
distributions $\widetilde F_\ell$ and $\widetilde F_h$. Furthermore, it is
common knowledge that these are the agents' beliefs. Note that agents
still interpret their own private signals correctly, with agent $n$
calculating $q_n$ from $s_n$ according to $q_n=\BP(\theta=h|s_n)$. We denote by $\widetilde\BP$ the posterior
probabilities calculated according to the agents' misspecified
beliefs.

In equilibrium, agents choose actions $a_n$ to maximize  their
subjective expected utilities:
\begin{align*}
  a_n = \argmax_{a \in \{\ell,h\}}\widetilde\BP(\theta=a|I_n,q_n).
\end{align*}
We will below restrict ourselves to $q_n$ with non-atomic
distributions, i.e., we assume that $F_\ell$ and $F_h$ are continuous.
This will ensure that agents are never indifferent and the maximum
above is unique. We will likewise assume that $\widetilde F_\ell$ and
$\widetilde F_h$ are continuous.

A pair of conditional CDFs $(F_\ell,F_h)$ is \textit{symmetric}
(around $q=1/2$) if $F_\ell(q)+F_h(1-q)=1$. This in turn implies
$F(q)+F(1-q)=1$. To simplify our exposition we will make the following
assumption.
\begin{assumption}[Symmetry]
  \label{assum: symmetry}
  We assume throughout that $(F_\ell,F_h)$ and $(\widetilde F_\ell, \widetilde
  F_h)$ are symmetric.
\end{assumption}
 
When the prior is $\pi=1/2$, this is equivalent to requiring that the model is invariant with respect to renaming the states.

\subsection{Efficiency}
To study efficiency in this setting, we follow
\cite{Rosenberg_Vieille_ECMA_2019} and introduce some additional
notation. Let $W:=\#\left\{n: a_n\neq\theta \right\}$ be the (random)
number of agents who take the incorrect action.

The next definition includes two notions of efficiency of social
learning.
\begin{definition}
  \begin{enumerate}
  \item \textit{Asymptotic learning} holds if all agents, except
    finitely many, choose the correct action. That is, if $W$ is
    finite $\BP$-almost surely.
    
  \item \textit{Efficient learning} holds if
    $\BE\left[W\right]<\infty$.
    
  \end{enumerate}
\end{definition}
Note that asymptotic learning is equivalent to the sequence of actions $a_n$ converging to $\theta$, which is again equivalent to $a_n=\theta$ for all $n$ large enough. Note also that efficient learning implies asymptotic learning. 

\subsection{The Well-Specified Case}
Without misspecification, the classical herding result of
\cite{Bikhchandani_etal_JPE_1992} is that asymptotic learning does not
hold for any finitely supported private signal distribution in which
no signal is revealing. This is an outcome that displays extreme
inefficiency: With positive probability, all but finitely many agents
choose incorrectly, and in particular there is no asymptotic or
efficient learning.  \cite{Smith_Sorensen_ECMA_2000} show that asymptotic
learning holds if and only if signals are \emph{unbounded}: That is,
if the support of the distribution of the private posteriors $q_n$
includes 0 and 1. Thus, when signals are sufficiently informative, the
extreme inefficiency of the wrong herds of
\cite{Bikhchandani_etal_JPE_1992} is overturned.

Nevertheless, this result left open the possibility that many agents
choose incorrectly before the correct herd arrives. To quantify this
intuition, \cite{sorensen1996rational} gave an example in which
learning is not efficient: $\BE[W]$, the expected number of agents who
choose incorrectly, is infinite. He also conjectured that this is the
case for every signal distribution. This conjecture was shown to be
false by \cite{hann2018speed} and
\cite{Rosenberg_Vieille_ECMA_2019}. In particular,
\cite{Rosenberg_Vieille_ECMA_2019} give an elegant necessary and
sufficient condition for efficient learning, showing that efficient
learning holds if and only if $\int_0^1\frac{1}{F(x)}\dd
x<\infty$.

\subsection{Condescension}
We use our misspecified social learning framework to study how
outcomes change when agents are \emph{condescending}, or think that
others' signals are less informative than they really are. To
formalize and quantify this notion, we restrict ourselves to signals
that are \emph{tail-regular}: A pair of symmetric conditional CDFs
$(F_\ell,F_h)$ is tail-regular if there exists $\alpha>0$ such that
$F(q)=(F_\ell(q)+F_h(q))/2$ behaves like $q^\alpha$ near $q=0$. Formally, if
\begin{equation*}
\begin{gathered}
    0<\liminf_{q \to 0} \frac{F(q)}{q^{\alpha}} \leq \limsup_{q\to 0} \frac{F(q)}{q^{\alpha}}<\infty.
    \end{gathered}    
\end{equation*}
We use Landau notation and write
\begin{align*}
  F(q) = \Theta(q^\alpha)
\end{align*}
as a shorthand for the expression above.\footnote{More generally in
Landau notation, given two functions $f(x)$ and $g(x)$, one writes
$f(x) = \Theta(g(x))$ if
\begin{equation*}
\begin{gathered}
    0<\liminf_{x \to 0} \frac{f(x)}{g(x)} \leq \limsup_{x\to 0} \frac{f(x)}{g(x)}<\infty.
    \end{gathered}    
\end{equation*}
}
\begin{assumption}[Tail-Regularity]
  \label{assum: tail_reg}
  We assume throughout that $( F_\ell,F_h)$ and $(\widetilde F_\ell, \tilde
  F_h)$ are tail-regular.
\end{assumption}

The exponent $\alpha$ associated with a symmetric, tail-regular signal
is unique, and given by
\begin{align*}
  \alpha = \lim_{q \to 0}\frac{\log F(q)}{\log q}.
\end{align*}
It captures a notion of the thinness of the tail of the signals: For
high $\alpha$ there is a small chance of very informative signals, as
compared to low $\alpha$. Thus, in an asymptotic sense, signals are less
informative for higher $\alpha$. Note that by a standard argument
(Lemma~\ref{lem: Psi_h_asymptotics}) if $F(q)=\Theta(q^\alpha)$ then
$F_\ell(q) = \Theta(q^\alpha)$ and $F_h(q)=\Theta(q^{\alpha+1})$.

As an example of tail-regular signals, consider the family of beta distributions, which are commonly used in the applied literature to model the distribution of posterior beliefs \citep[see, e.g.,][]{mckelvey1992experimental, nyarko2006informational, bosch2010finite, ccelen2020belief}. This is a family of probability distributions on the interval $[0,1]$ parametrized by $\alpha,\beta>0$, with probability density function given by $g_{\alpha,\beta}(q) = C q^{\alpha-1}(1-q)^{\beta-1}$, where $C$ is a normalization constant. The parameters $\beta$ and $\alpha$ describe the thickness of the distribution around $0$ and $1$, respectively. Suppose that  private signals have conditional densities $f_\ell = g_{\alpha,\alpha+1}$ and $f_h = g_{\alpha+1,\alpha}$. Then, the unconditional density is again a beta distribution with probability density function $f = g_{\alpha,\alpha}$. This is easily seen to be symmetric and tail-regular, with exponent $\alpha$. 

An important example of signals that are not tail-regular is that of conditionally Gaussian signals, with distributions $\mathcal{N}(-m,1)$ and $\mathcal{N}(+m,1)$, depending on the state. In this case $\lim_{q\to 0} \frac{\log F(q)}{\log q} = \infty$, and the tail-regularity condition is violated, as $F(q)$ decays faster than any polynomial.

In a misspecified model we denote by $\alpha$ and $\tilde \alpha$ the
exponents associated with $(F_\ell,F_h)$ and $(\widetilde F_\ell,\widetilde
F_h)$, respectively. When $\tilde \alpha > \alpha$ we say that agents
are \emph{condescending}: They believe that others' signals are less
informative than they really are. Conversely, when $\tilde\alpha <
\alpha$ agents are \emph{anti-condescending}. Thus,
$\tilde\alpha-\alpha$ is a measure of how condescending the agents are. Note that our definition of condescension is relatively mild, in the sense that it only depends on the tail properties and not on the bulk of the distribution.


\section{Results}
Our first result characterizes the efficiency of learning outcomes for
condescending and anti-condescending agents. The formal proof appears in the appendix. The intuition and dynamics behind this result are presented in detail in Section~\ref{sec:dynamics}.
\begin{theorem}
  \label{thm:main_thm}
  Suppose $\tilde\alpha \neq \alpha$. Then the following are
  equivalent: (\rn{1}) asymptotic learning; (\rn{2}) efficient
  learning;
  (\rn{3}) $\tilde{\alpha}-\alpha \in
  (0,1)$.
\end{theorem}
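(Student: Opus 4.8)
The plan is to establish the two nontrivial implications $(\rn{1})\Rightarrow(\rn{3})$ and $(\rn{3})\Rightarrow(\rn{2})$; since $(\rn{2})\Rightarrow(\rn{1})$ was already observed, this gives all three equivalences. I work conditionally on $\theta=h$ (the state $\theta=\ell$ being symmetric by Assumption~\ref{assum: symmetry}) and, for notational convenience, take $\pi=1/2$, so that agent $n$ plays $h$ exactly when $q_n>1-p_n$, where $p_n:=\widetilde\BP(\theta=h\mid I_n)$ is the subjective public belief and $x_n:=1-p_n$. Everything rests on the one-step recursion for $p_n$: a correct action $a_n=h$ — with true probability $1-F_h(x_n)$ — multiplies the subjective odds of $h$ by $(1-\widetilde F_h(x_n))/(1-\widetilde F_\ell(x_n))$, whereas a mistake $a_n=\ell$ — true probability $F_h(x_n)$ — multiplies them by $\widetilde F_h(x_n)/\widetilde F_\ell(x_n)$. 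Inserting the tail estimates of Lemma~\ref{lem: Psi_h_asymptotics} ($F_h(x)=\Theta(x^{\alpha+1})$, $\widetilde F_\ell(x)=\Theta(x^{\tilde\alpha})$, $\widetilde F_h(x)=\Theta(x^{\tilde\alpha+1})$, together with their reflections at $x=1$) yields the local picture used throughout: near a \emph{correct herd} ($x_n$ small) a correct action decreases $x_n$ by $\Theta(x_n^{\tilde\alpha+1})$, while a mistake — probability $\Theta(x_n^{\alpha+1})$ — resets $p_n$ to a value of order one; symmetrically, inside a \emph{wrong herd} ($p_n$ small) the herd action $a_n=\ell$ is a mistake, has probability $1-\Theta(p_n^{\alpha})$, and decreases $p_n$ by $\Theta(p_n^{\tilde\alpha+1})$, whereas a herd-breaking action $a_n=h$ — probability $\Theta(p_n^{\alpha})$ — resets $p_n$ to a value of order one.

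The analytic core is an ODE comparison in each herd region, in the spirit of \cite{hann2018speed} and \cite{Rosenberg_Vieille_ECMA_2019}. On the event that no reset occurs, the belief follows the deterministic recursion $y_{n+1}=y_n-\Theta(y_n^{\tilde\alpha+1})$, which a standard Euler-type comparison with $\dot y=-c\,y^{\tilde\alpha+1}$ shows decays like $n^{-1/\tilde\alpha}$; hence $x_n=\Theta(n^{-1/\tilde\alpha})$ along an unbroken correct herd and $p_n=\Theta(n^{-1/\tilde\alpha})$ along an unbroken wrong herd. Consequently the probability that a correct herd entered at time $N$ is never broken equals $\prod_{n\ge N}(1-\Theta(x_n^{\alpha+1}))$; since $x_n=\Theta(n^{-1/\tilde\alpha})$ this is bounded below by a positive constant when $\tilde\alpha<\alpha+1$ and equals $0$ when $\tilde\alpha\ge\alpha+1$ (the series $\sum_n n^{-(\alpha+1)/\tilde\alpha}$ converging, respectively diverging). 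Likewise the probability that a wrong herd is never broken equals $\prod_{n\ge N}(1-\Theta(p_n^{\alpha}))$, which is positive when $\tilde\alpha<\alpha$ and $0$ when $\tilde\alpha>\alpha$; and in the latter case the same estimate shows $\BP(\text{length}>m)$ decays stretched-exponentially in $m$, so the length of a wrong herd has finite expectation — in fact all moments.

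From these two dichotomies $(\rn{1})\Rightarrow(\rn{3})$ follows by contraposition, using $\tilde\alpha\neq\alpha$. If $\tilde\alpha<\alpha$, then starting from $p_1=1/2$ a bounded run of $\ell$-actions (of probability bounded below) drives $p_n$ into the wrong-herd region, after which the herd persists forever with positive probability; so $W=\infty$ with positive probability and asymptotic learning fails. If $\tilde\alpha\ge\alpha+1$, then every sojourn in the correct-herd region ends, almost surely, in a mistake (conditional Borel--Cantelli, the relevant product being $0$); after the reset the belief lies in an order-one band from which — a bounded run of $h$-actions again having probability bounded below — the correct-herd region is re-entered, and, since $\tilde\alpha>\alpha$ precludes being trapped in a wrong herd, the correct-herd region is entered infinitely often, contributing a mistake each time, so $W=\infty$ almost surely. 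The one place where the absence of a martingale bites is the claim that the order-one band is transient and that from it the correct-herd region is reached with probability bounded below; I would prove this directly from the shape of the misspecified update map, noting that on any compact sub-interval of $(0,1)$ away from the endpoints an $h$-action has probability bounded below and multiplies the odds of $h$ by a factor at least $1+c_0>1$, so finitely many such actions carry the belief past any fixed threshold, while every step leaves the band with probability bounded below.

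For $(\rn{3})\Rightarrow(\rn{2})$, suppose $\alpha<\tilde\alpha<\alpha+1$. The estimates above now say: each entry into the correct-herd region is ``absorbing'' with probability at least some $q_0>0$, and a non-absorbing sojourn there is left at its first (and only) mistake; a wrong-herd sojourn is almost surely finite with a number of mistakes of uniformly bounded expectation; and each visit to the order-one band is almost surely finite, contributes boundedly many mistakes in expectation, and is followed by entry into the correct-herd region with probability at least some $\eta>0$. A renewal argument then bounds $W$: the number of entries into the correct-herd region is stochastically dominated by a geometric variable with parameter $\eta q_0$, each epoch between successive band visits contributes a number of mistakes with uniformly bounded conditional expectation, and Wald's identity gives $\BE[W]<\infty$, i.e.\ efficient learning. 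I expect the principal difficulties to be (i) making the ODE comparison precise enough — controlling the $\Theta(\cdot)$ constants uniformly along trajectories — to pin down the exact window $\tilde\alpha-\alpha\in(0,1)$, and (ii) the hands-on replacement, forced by the lack of the martingale property, for the usual convergence argument in the order-one band; the renewal bookkeeping of the last step is routine though lengthy.
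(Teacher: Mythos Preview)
Your plan is essentially the paper's: ODE comparison along an unbroken herd gives the $n^{-1/\tilde\alpha}$ decay, then the product test $\sum n^{-(\alpha+1)/\tilde\alpha}$ (resp.\ $\sum n^{-\alpha/\tilde\alpha}$) decides whether correct (resp.\ wrong) herds can persist, and a geometric/renewal bound delivers efficient learning. Two places where the paper's execution differs and would simplify yours are worth noting.

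First, your difficulty (ii) about the ``order-one band'' evaporates once you use the overturning principle: after any $h$-action the subjective public belief is automatically $\geq 1/2$, so there is no middle band to cross. The paper then shows $\inf_{\pi\geq 1/2}\BP_{\pi,h}(\bar a=h)>0$ not by a dynamical argument but by proving that $\pi\mapsto\BP_{\pi,h}(\bar a=h)$ is continuous on $[1/2,1]$ (uniform convergence of finite products), hence attains its positive infimum. Second, for the expected length of a bad run the paper does not use your stretched-exponential tail bound; instead it writes $\BP_h(\tau_h>n)$ via Bayes' rule in terms of the \emph{correctly specified} likelihood $\ee^{-r_{n+1}^h}$ and applies Raabe's test to $\sum \ee^{-r_n^h}$, yielding $\BE_{\pi,h}[\tau_h]\le C_0(1-\pi)/\pi$. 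Your direct tail estimate is correct and arguably cleaner, but either way you still need the uniformity over starting points --- that after an $(h,\ell)$ pair the reset odds $(1-\tilde\pi)/\tilde\pi$ are bounded by a universal constant --- which the paper proves by a short tail-regularity computation and which your sketch does not yet contain.
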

The regime $\tilde{\alpha}-\alpha \in (0,1)$ describes agents who are
condescending ($\tilde\alpha-\alpha>0$) but not
overly condescending ($\tilde\alpha-\alpha<1$). In this regime the agents' mild condescension causes them to slightly discount the actions of their predecessors, resulting in an increased chance that wrong herds are overturned and a correct herd starts. Since there are infinitely many agents there will be infinitely many fresh chances to do this, and eventually one will succeed, even if each has a very low probability. Moreover, the success probabilities are bounded from below, rendering the expected number of mistakes finite.

In a model without misspecification, the results of
\cite{Rosenberg_Vieille_ECMA_2019} imply that efficient learning
occurs if and only if $\alpha < 1$.  In contrast, Theorem~\ref{thm:main_thm} shows that efficiency can be regained under potentially
small misspecification, for any $\alpha$, as long as agents are
condescending, but not too condescending. Put differently, even for large values of $\alpha$, i.e., when highly informative signals are extremely rare, learning can still be efficient under mild condescension. The technical reason for why a difference of exactly 1 between $\tilde\alpha$ and $\alpha$ is the boundary between mild condescension and over-condescension is related to the summability of certain sequences that determine whether or not a herd can start immediately with positive probability. We explain this in detail in Section~\ref{sec:dynamics}.

Note that the exponent $\alpha$ captures a tail property of the private signals rather than a parameter that determines the entire distribution. Hence two distributions can be very different even if their exponents are very close or the same. When $\tilde\alpha \neq \alpha$, the finer properties of the distributions do not play a role, and efficient learning is solely determined by the tail exponents. When $\tilde\alpha = \alpha$, we conjecture that it is possible for learning to either be efficient or inefficient, depending on finer properties of the distributions; see footnote~\ref{fn:raabe} for the technical details. We leave it for future work to identify these finer properties that determine efficient learning in this regime. 

\medskip

The next two propositions shed light into why learning fails when agents
are either anti-condescending or overly condescending. These results also appear in the supplementary material to \cite{chen2019sequential}. We provide proofs for these claims in Appendix~\ref{app:propositions} for completeness. 

The first proposition concerns the anti-condescension regime, in which  $\tilde{\alpha} < \alpha$.
\begin{proposition}
\label{thm:anti-cond}
  Suppose that $\tilde\alpha <\alpha$. Then, with $\BP$-positive
  probability, a wrong herd forms, i.e., from some point on, all
  agents  take the wrong action.
\end{proposition}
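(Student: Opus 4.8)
\medskip
\noindent\emph{Proof plan.}
It suffices to show $\BP_\ell\!\left(a_n=h\text{ for all }n\right)>0$: then the event $\{\theta=\ell\}\cap\{a_n=h\text{ for all }n\}$, on which a wrong herd occurs from the outset, has probability $(1-\pi)\,\BP_\ell(a_n=h\text{ for all }n)>0$. The first step is to record the misspecified one-step dynamics. Since each agent interprets her own signal correctly, agent $n$ chooses $h$ if and only if her private posterior $q_n$ exceeds a threshold $x_n\in(0,1)$ that is a deterministic function of the observed history $I_n$ alone; writing $\widetilde L_n=\widetilde\BP(I_n\mid h)/\widetilde\BP(I_n\mid\ell)$ for the perceived public likelihood ratio, one checks (the prior factor cancels) that $x_n=1/(1+\widetilde L_n)$. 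Along the all-$h$ history the thresholds form a deterministic decreasing sequence $\theta_0>\theta_1>\cdots$, where $\theta_m=1/(1+\Lambda_m)$ is the threshold faced by agent $m+1$ after $m$ consecutive $h$'s and $\Lambda_m$ is the corresponding perceived likelihood ratio, which obeys the recursion $\Lambda_{m+1}=\Lambda_m\cdot\frac{1-\widetilde F_h(\theta_m)}{1-\widetilde F_\ell(\theta_m)}$.

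Second, since signals are conditionally i.i.d.\ and $x_n$ depends only on past actions, conditioning on $\theta=\ell$ and on $\{a_1=\cdots=a_{n-1}=h\}$ makes $\{a_n=h\}=\{q_n>\theta_{n-1}\}$ an event of probability exactly $1-F_\ell(\theta_{n-1})$, independent of the past. By the chain rule,
\[
  \BP_\ell\big(a_n=h\text{ for all }n\big)=\prod_{m=0}^{\infty}\big(1-F_\ell(\theta_m)\big),
\]
which is strictly positive if and only if $\sum_m F_\ell(\theta_m)<\infty$. By tail-regularity, $F_\ell(\theta_m)=\Theta(\theta_m^{\alpha})$ (Lemma~\ref{lem: Psi_h_asymptotics}), so it is enough to prove $\sum_m\theta_m^{\alpha}<\infty$.

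Third, I would analyze the asymptotics of $\theta_m$. The update factor $\frac{1-\widetilde F_h(\theta_m)}{1-\widetilde F_\ell(\theta_m)}$ is bounded below by a constant exceeding $1$ as long as $\theta_m$ stays bounded away from $0$, so $\Lambda_m\uparrow\infty$, $\theta_m\downarrow 0$, and eventually $\theta_m$ enters the tail regime, where $\widetilde F_\ell(\theta_m)=\Theta(\theta_m^{\widetilde\alpha})$ while $\widetilde F_h(\theta_m)=\Theta(\theta_m^{\widetilde\alpha+1})$ is of strictly smaller order. Substituting into the recursion gives $\Lambda_{m+1}-\Lambda_m=\Theta(\theta_m^{\widetilde\alpha})=\Theta(\Lambda_m^{-\widetilde\alpha})$, i.e.\ $\Lambda_m^{\widetilde\alpha}$ grows linearly, so $\Lambda_m=\Theta(m^{1/\widetilde\alpha})$ and $\theta_m=\Theta(m^{-1/\widetilde\alpha})$. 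Therefore $\sum_m\theta_m^{\alpha}=\Theta\big(\sum_m m^{-\alpha/\widetilde\alpha}\big)$, which converges precisely because $\widetilde\alpha<\alpha$, yielding the claim.

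The main obstacle is this last step: turning the discrete recursion $\Lambda_{m+1}=\Lambda_m\big(1+\Theta(\Lambda_m^{-\widetilde\alpha})\big)$ into the two-sided estimate $\Lambda_m=\Theta(m^{1/\widetilde\alpha})$. This requires controlling the hidden constants uniformly along the iteration, covering all values $\widetilde\alpha>0$ (for $\widetilde\alpha>1$ the increments $\Lambda_{m+1}-\Lambda_m$ tend to $0$ while $\Lambda_m$ still diverges at the stated rate), and invoking the tail estimates for $F$ and $\widetilde F$ only once the iterates lie in the range where those bounds hold. This is exactly where the continuous-time approximation of \cite{hann2018speed} and \cite{Rosenberg_Vieille_ECMA_2019}---comparing the recursion with the ODE $\dot\Lambda=c\,\Lambda^{1-\widetilde\alpha}$---provides the right machinery.
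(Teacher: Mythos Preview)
Your proposal is correct and follows essentially the same route as the paper: the paper's one-line proof invokes Proposition~\ref{prop:low_state_equiv} (equivalently Lemma~\ref{lem:S_ell_parameter}), whose proof is precisely the computation you outline---express the immediate-herd probability as an infinite product, reduce positivity to summability of $F_\ell(\theta_m)$, and obtain $\theta_m=\Theta(m^{-1/\widetilde\alpha})$ via an ODE comparison (the paper does this in log-likelihood coordinates in Lemmas~\ref{lem: r_tilde_lower_bound}--\ref{lem: r_tilde_upper_bound}). One slip to fix: the increment should read $\Lambda_{m+1}-\Lambda_m=\Lambda_m\cdot\Theta(\theta_m^{\widetilde\alpha})=\Theta(\Lambda_m^{1-\widetilde\alpha})$, not $\Theta(\Lambda_m^{-\widetilde\alpha})$; your stated ODE $\dot\Lambda=c\,\Lambda^{1-\widetilde\alpha}$ and conclusion $\Lambda_m=\Theta(m^{1/\widetilde\alpha})$ are already the correct ones for this repaired recursion.
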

In the case $\tilde\alpha <\alpha$ agents are anti-condescending: They
believe that others have signals that are more informative than they
really are. In consequence, they are more easily swayed by other's
actions, and tend to more often ignore their private signals.%
    \footnote{The effect of anti-condescension is similar to that of na\"ivet\'e in \cite{Eyster_Rabin_AEJMicro_2010}. The mechanism is, however, very different. While na\"ive agents fail to realize that previous movers' also infer from still earlier actions, anti-condescending agents are fully aware of this and take it into account but believe others to have better information than they actually do.}
Thus,
wrong herds can form. This is despite the fact that signals are
unbounded, which, without misspecification, would rule out wrong
herds.

Our next result tackles the question of why learning fails when agents
are overly condescending, i.e., when $\tilde{\alpha} - \alpha \geq 1$.
\begin{proposition}
\label{thm:over-cond}
  Suppose that $\tilde\alpha \geq \alpha+1$. Then, $\BP$-almost surely,
  both actions are taken by infinitely many agents.
\end{proposition}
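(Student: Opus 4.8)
Since there are only two actions, ``both actions are taken infinitely often'' is exactly the complement of the event that a herd forms, i.e.\ that $a_n=h$ for all large $n$ or $a_n=\ell$ for all large $n$; and $\BP=\pi\BP_h+(1-\pi)\BP_\ell$, so it suffices to show that, conditional on the true state, a herd almost surely does not form. By the symmetry in Assumption~\ref{assum: symmetry} it is enough to treat the state $h$ (the prior enters only through the starting value of the public belief, which is asymptotically irrelevant), and under $\BP_h$ to rule out a herd on $h$ and a herd on $\ell$ separately. Note that Theorem~\ref{thm:main_thm} already gives that learning is not asymptotic in this regime, but it does not by itself exclude a wrong herd; the stronger statement here requires a direct analysis of the herd events.

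Everything runs through the \emph{subjective} public log-likelihood ratio $\tilde\lambda_n:=\log\frac{\widetilde\BP(\theta=h\,|\,I_n)}{\widetilde\BP(\theta=\ell\,|\,I_n)}$. Because each agent reads her own posterior $q_n$ correctly, her optimal action is a threshold rule, $a_n=h\iff q_n>\tau_n$ with $\tfrac{\tau_n}{1-\tau_n}=\tfrac{\pi}{1-\pi}\ee^{-\tilde\lambda_n}$; in particular $\tau_n=\Theta(\ee^{-\tilde\lambda_n})$ when $\tilde\lambda_n\to+\infty$ and $1-\tau_n=\Theta(\ee^{\tilde\lambda_n})$ when $\tilde\lambda_n\to-\infty$. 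Bayesian updating of the public belief from $a_n$ gives the recursion $\tilde\lambda_{n+1}=\tilde\lambda_n+g_{a_n}(\tilde\lambda_n)$, where $g_h(\lambda):=\log\frac{1-\widetilde F_h(\tau(\lambda))}{1-\widetilde F_\ell(\tau(\lambda))}\ge 0$ and $g_\ell(\lambda):=\log\frac{\widetilde F_h(\tau(\lambda))}{\widetilde F_\ell(\tau(\lambda))}\le 0$. Applying Lemma~\ref{lem: Psi_h_asymptotics} to the misspecified distributions, $\widetilde F_\ell(x)=\Theta(x^{\tilde\alpha})$ and $\widetilde F_h(x)=\Theta(x^{\tilde\alpha+1})$, so $g_h(\lambda)=\Theta(\ee^{-\tilde\alpha\lambda})$ as $\lambda\to+\infty$ and $-g_\ell(\lambda)=\Theta(\ee^{\tilde\alpha\lambda})$ as $\lambda\to-\infty$.

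To rule out a herd on $h$ starting at a fixed time $N$: on that event all actions from $N$ onward equal $h$, so $(\tilde\lambda_n)_{n\ge N}$ coincides with the deterministic orbit $\hat\lambda_n$ of the map $\lambda\mapsto\lambda+g_h(\lambda)$ started from $\tilde\lambda_N$ (which is determined by $I_N$). This orbit is nondecreasing; either it stays bounded, in which case $\tau(\hat\lambda_n)$ is bounded away from $0$, or it diverges, in which case $\hat\lambda_{n+1}-\hat\lambda_n=\Theta(\ee^{-\tilde\alpha\hat\lambda_n})$ forces $\hat\lambda_n=\tfrac1{\tilde\alpha}\log n+O(1)$ (cleanly seen by noting that $u_n:=\ee^{\tilde\alpha\hat\lambda_n}$ increases by $\Theta(1)$ at each step). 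Conditioning on $I_N$ and on $\theta=h$, and using that $q_n$ is independent of $I_n$ given $\theta$, the probability that the herd persists forever equals $\prod_{n\ge N}\bigl(1-F_h(\tau(\hat\lambda_n))\bigr)$; since $F_h(\tau(\hat\lambda_n))=\Theta\bigl(\tau(\hat\lambda_n)^{\alpha+1}\bigr)$ is bounded below in the first case and equals $\Theta(n^{-(\alpha+1)/\tilde\alpha})$ in the second, and since $\tilde\alpha\ge\alpha+1$ makes $(\alpha+1)/\tilde\alpha\le 1$, the series $\sum_n F_h(\tau(\hat\lambda_n))$ diverges and the product vanishes. Averaging over $I_N$ and taking a union over $N$ gives $\BP_h(\text{herd on }h)=0$.

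A herd on $\ell$ is excluded analogously, and needs only $\tilde\alpha>\alpha$: on such a herd $(\tilde\lambda_n)$ equals the nonincreasing orbit $\hat\lambda_n$ of $\lambda\mapsto\lambda+g_\ell(\lambda)$, so either $1-\tau(\hat\lambda_n)$ is bounded below (the easy case) or $|\hat\lambda_n|=\tfrac1{\tilde\alpha}\log n+O(1)$; using the symmetry identity $1-F_h(\tau(\hat\lambda_n))=F_\ell(1-\tau(\hat\lambda_n))=\Theta\bigl((1-\tau(\hat\lambda_n))^{\alpha}\bigr)=\Theta(n^{-\alpha/\tilde\alpha})$, the probability that no later agent ever defects to $h$ is $\prod_{n\ge N}F_h(\tau(\hat\lambda_n))=\prod_{n\ge N}\bigl(1-\Theta(n^{-\alpha/\tilde\alpha})\bigr)=0$, since $\alpha/\tilde\alpha<1$. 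I expect the bookkeeping in the third paragraph to be the main obstacle: one has to make precise that along a herd the public belief is genuinely a deterministic orbit, control its growth on \emph{both} sides through the $\Theta$-recursion so as to pin $\hat\lambda_n$ between matching bounds, and then correctly disintegrate the probability of an infinite herd into the product $\prod_n(1-F_h(\tau_n))$ and invoke divergence of the sum. The borderline $\tilde\alpha=\alpha+1$, where $(\alpha+1)/\tilde\alpha$ equals exactly $1$ and the argument leans on divergence of $\sum 1/n$, is what dictates the non-strict inequality in the hypothesis and should be handled explicitly rather than as a limiting case.
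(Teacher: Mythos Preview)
Your proof is correct and follows essentially the same route as the paper's: the paper invokes Propositions~\ref{prop:high_state_equiv} and~\ref{prop:low_state_equiv}---whose proofs contain precisely your product computation $\prod_n\bigl(1-F_h(\tau_n)\bigr)$ and the growth estimate $\ee^{\tilde r_n^h}=\Theta(n^{1/\tilde\alpha})$---together with stationarity, while you inline those arguments and replace the stationarity appeal by conditioning on $I_N$ and taking a union over $N$. (One cosmetic slip: since your $\tilde\lambda_n$ already absorbs the prior, the threshold should read $\tfrac{\tau_n}{1-\tau_n}=\ee^{-\tilde\lambda_n}$ without the extra $\tfrac{\pi}{1-\pi}$ factor, but this is asymptotically irrelevant.)
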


In the case $\tilde\alpha \geq \alpha + 1$, agents are very
condescending: They think that others have very uninformative
signals. In consequence, they follow their own signals too much, and
herds---wrong or right---do not form: Given enough time, an agent will
come along who will overturn her predecessor's action.


To sum, Proposition~\ref{thm:anti-cond} shows that if
$\tilde \alpha < \alpha$, then when public belief assigns a low probability to the realized state, incorrect cascades remain stable with positive probability. Proposition~\ref{thm:over-cond} shows that if $\tilde\alpha \geq \alpha+1$, then correct cascades are unstable, and thus even if the public belief assigns high probability to the realized state, herds on the correct action almost certainly break down. As we shall show, these conditions are also determinants of efficient learning.
\section{Dynamics}
\label{sec:dynamics}
In this section, we study how agents update their beliefs and choose
their actions under misspecification. We define the public belief and
derive its equations of motion. We show that two properties of the
well-specified model---stationarity and the overturning
principle---still hold in our misspecified environment.

\subsection{Belief Updating}
An important tool in social learning is the public belief (or social
belief) at time $n$:
\begin{align*}
  \pi_n  = \BP\left(\theta=h \big| a_1,\ldots,a_{n-1}\right).
\end{align*}
In our case, however, it is also important to consider  the misspecified
public belief, which is given by
\begin{equation*}
    \tilde{\pi}_n = \widetilde{\BP}\left(\theta=h \big| a_1,\ldots,a_{n-1}\right).
\end{equation*}
The public belief $\pi_n$ is the belief held by a well-specified
observer who sees the agents' actions but not their signals. In
contrast, $\tilde \pi_n$ is the belief held by an observer who holds
the same misspecified beliefs as the agents, and again sees only
actions.

Let $p_n=\BP(\theta=h | I_n,q_n)$ be the posterior belief held by a
well-specified agent who observes all the information available to
agent $n$. The actual, misspecified, posterior of agent $n$ is denoted
$\tilde p_n=\widetilde\BP(\theta=h | I_n,q_n)$. Then by Bayes' Law
\begin{align*}
    \frac{p_n}{1-p_n} &= \frac{\pi_n}{1-\pi_n} \times \frac{q_n}{1-q_n},\\
    \frac{\tilde{p}_n}{1-\tilde{p}_n} &= \frac{\tilde{\pi}_n}{1-\tilde{\pi}_n} \times \frac{q_n}{1-q_n}.
\end{align*}
It follows that the action $a_n$ chosen by agent $n$ is equal to $h$
if $\tilde{\pi}_n+q_n \geq 1$, and to $\ell$ otherwise.\footnote{As we
note above, indifference occurs with probability zero because we
assume that the distribution of $q_n$ is non-atomic.} Thus,
conditioned on $\theta$, the probability that agent $n$ chooses the
low action is $F_\theta(1-\tilde\pi_n)$.

This implies that when agent $n$ chooses the low action, the public
beliefs $\{\pi_n\}$ and $\{\tilde{\pi}_n\}$ evolve as
follows:
\begin{subequations}
  \begin{align}
    \label{eq:rational_belief_l}
    \frac{\pi_{n+1}}{1-\pi_{n+1}} &= \frac{\pi_{n}}{1-\pi_{n}} \times \frac{F_h(1-\tilde{\pi}_n)}{F_\ell(1-\tilde{\pi}_n)}\, ,\\
    \label{eq:misspecified_belief_l}
    \frac{\tilde{\pi}_{n+1}}{1-\tilde{\pi}_{n+1}} &= \frac{\tilde{\pi}_n}{1-\tilde{\pi}_n} \times \frac{\widetilde{F}_h(1-\tilde{\pi}_n)}{\widetilde{F}_\ell(1-\tilde{\pi}_n)}\, .
  \end{align}
\end{subequations}
When agent $n$ chooses the high action,
\begin{subequations}
  \begin{align}
    \label{eq:rational_belief}
    \frac{\pi_{n+1}}{1-\pi_{n+1}} &= \frac{\pi_{n}}{1-\pi_{n}} \times \frac{1-F_h(1-\tilde{\pi}_n)}{1-F_\ell(1-\tilde{\pi}_n)}\, ,\\
    \label{eq:misspecified_belief}
    \frac{\tilde{\pi}_{n+1}}{1-\tilde{\pi}_{n+1}} &= \frac{\tilde{\pi}_n}{1-\tilde{\pi}_n} \times \frac{1-\widetilde{F}_h(1-\tilde{\pi}_n)}{1-\widetilde{F}_\ell(1-\tilde{\pi}_n)}\, .
  \end{align}
\end{subequations}

\subsection{Stationarity and the Overturning Principle}

The above equations of motion imply that as in the well-specified case,
our model is stationary, with $\tilde\pi_n$ capturing all the relevant
information about the past.
\begin{lemma}[Stationarity]
For any fixed sequence $b_1,\ldots,b_k$ of actions in $\{\ell,h\}$ and
any $\tilde\pi \in (0,1)$,
\begin{align*}
  \BP(a_{n+1}=b_1,\ldots,a_{n+k}=b_k\mid \tilde\pi_{n+1}=\tilde\pi) = \BP_{\tilde\pi}(a_{1}=b_1,\ldots,a_{k}=b_k).
\end{align*}
\end{lemma}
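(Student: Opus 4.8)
The plan is to establish the Markov property of the sequence $\tilde\pi_n$ and then invoke time-homogeneity together with the fact that, by symmetry of the prior's role, conditioning on $\tilde\pi_{n+1}=\tilde\pi$ is probabilistically the same as starting a fresh copy of the process with prior $\tilde\pi$. Concretely, I would first argue that $\tilde\pi_{n+1}$ is a deterministic function of $\tilde\pi_n$ and $a_n$: this is exactly what the equations of motion \eqref{eq:misspecified_belief_l} and \eqref{eq:misspecified_belief} say, since given $\tilde\pi_n$ the updated misspecified belief depends only on which of the two branches ($a_n=\ell$ or $a_n=h$) is taken. Next, I would observe that, conditioned on $\theta$, the action $a_n$ depends on the past only through $\tilde\pi_n$: indeed $a_n=h$ iff $\tilde\pi_n+q_n\ge 1$, and $q_n$ is drawn i.i.d.\ (given $\theta$) independently of $I_n$, so $\BP(a_n=h\mid I_n,\theta)=1-F_\theta(1-\tilde\pi_n)$, a function of $\tilde\pi_n$ alone. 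Combining these two facts, the pair $(\tilde\pi_n,\theta)$ is a (time-homogeneous) Markov chain, and hence so is the conditional law of the action sequence given $\theta$ and $\tilde\pi_n$.

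The second ingredient is to identify the transition law with the one obtained from the prior. I would check that when the true prior is $\pi=\tilde\pi$ (so that $\pi_1=\tilde\pi$), the initial misspecified belief $\tilde\pi_1$ also equals $\tilde\pi$: this holds because before any action is observed the misspecified observer and the Bayesian observer both just report the common prior. Therefore the chain $(\tilde\pi_n)_{n\ge 1}$ under $\BP_{\tilde\pi}$, started from $\tilde\pi_1=\tilde\pi$, has the same one-step transition kernel (the map described above, which does not depend on $n$ or on the prior except through the current value of $\tilde\pi$) as the chain $(\tilde\pi_{n+m})_{m\ge 1}$ under $\BP$ conditioned on $\tilde\pi_{n+1}=\tilde\pi$. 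A routine induction on $k$ then gives the claimed equality of the joint laws of $(a_{n+1},\ldots,a_{n+k})$ and $(a_1,\ldots,a_k)$, since in each case the probability of a given action string factors into the same product of one-step probabilities $\BP_\theta(a=b_j\mid \tilde\pi=\cdot)=F_\theta(1-\cdot)^{\mathbbm{1}\{b_j=\ell\}}(1-F_\theta(1-\cdot))^{\mathbbm{1}\{b_j=h\}}$, averaged over $\theta$ with the appropriate posterior weights that are themselves determined by $\tilde\pi$.

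The step I expect to require the most care is the bookkeeping that conditioning on the \emph{value} $\tilde\pi_{n+1}=\tilde\pi$ correctly pins down the relevant posterior on $\theta$ — that is, that $\BP(\theta=h\mid \tilde\pi_{n+1}=\tilde\pi)$ and the analogous quantity under $\BP_{\tilde\pi}$ at time $1$ agree, and more generally that the ``sufficient statistic'' claim ($\tilde\pi_n$ captures all relevant information about the past) is genuinely valid in the misspecified environment. The subtlety is that $\tilde\pi_n$ is \emph{not} the true Bayesian posterior $\pi_n$, so one cannot simply say ``$\theta$ and the past are conditionally independent given $\tilde\pi_n$'' by a martingale or sufficiency argument; instead one must verify it directly from the dynamics, noting that $(\pi_n,\tilde\pi_n)$ jointly evolve deterministically given the actions while the conditional action probabilities depend only on $\tilde\pi_n$, so the true conditional law of $\theta$ given the history is a deterministic function of $(\pi_n,\tilde\pi_n)$ and in fact of $\tilde\pi_n$ alone once we also track that $\pi_n$ is itself a function of the history in a way consistent across restarts. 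Handling this cleanly — perhaps by proving the stronger statement that the conditional law of $(\theta,(a_{n+j})_{j\ge 1})$ given the full history $I_{n+1}$ depends only on $\tilde\pi_{n+1}$ — is the crux, after which the stated identity follows by marginalizing out $\theta$.
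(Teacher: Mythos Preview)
The paper states this lemma without proof, offering only the informal paraphrase that follows it. Your route via the time-homogeneous Markov property of $\tilde\pi_n$ (conditioned on $\theta$) is the natural one and is essentially what the equations of motion \eqref{eq:misspecified_belief_l}--\eqref{eq:misspecified_belief} are meant to convey.

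Your worry in the final paragraph, however, is not a technicality to be tidied up: it points to a genuine issue with the lemma \emph{as literally stated}. Conditioning on $\tilde\pi_{n+1}=\tilde\pi$ fixes (a set of) action histories, and the true posterior $\BP(\theta=h\mid\tilde\pi_{n+1}=\tilde\pi)$ is the well-specified belief $\pi_{n+1}$ along those histories, which in a genuinely misspecified model differs from $\tilde\pi_{n+1}=\tilde\pi$. A one-step check with $n=k=1$ already exhibits the discrepancy: both sides are convex combinations of $1-F_h(1-\tilde\pi)$ and $1-F_\ell(1-\tilde\pi)$, but with weight $\pi_2$ on $\theta=h$ on the left versus weight $\tilde\pi$ on the right, and $F_\ell(1-\tilde\pi)\neq F_h(1-\tilde\pi)$. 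Your proposed ``stronger statement'' that the conditional law of $\theta$ given $I_{n+1}$ depends only on $\tilde\pi_{n+1}$ therefore cannot deliver the displayed identity, because even when that dependence holds, the value of that conditional law is $\pi_{n+1}$, not $\tilde\pi$.

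What \emph{is} true---and is all the paper ever invokes (see the proofs of Lemmas~\ref{lem:asympt_learning_nec_condition}--\ref{lem:asympt_learning_suff_condition} and Proposition~\ref{thm:over-cond})---is the state-conditional version
\[
  \BP_\theta\big(a_{n+1}=b_1,\ldots,a_{n+k}=b_k\mid \tilde\pi_{n+1}=\tilde\pi\big)=\BP_{\tilde\pi,\theta}\big(a_1=b_1,\ldots,a_k=b_k\big)
\]
for each $\theta\in\{\ell,h\}$, where on the right the subscript $\tilde\pi$ matters only as the initial value of the belief process. This follows directly from your first two paragraphs: conditioned on $\theta$ the $q_n$ are i.i.d., $a_n$ depends on the past only through $\tilde\pi_n$, and $\tilde\pi_{n+1}$ is a time-independent deterministic function of $(\tilde\pi_n,a_n)$. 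So your argument is correct once aimed at the right target; read the paper's unconditional statement as shorthand for this.
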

That is, suppose that at time $n$ the misspecified public belief
$\tilde\pi_n$ was equal to some $\tilde\pi$. Then the probability that
the subsequent actions are $b_1,\ldots,b_k$ is the same as the
probability of observing this sequence of actions at time $1$, when
the prior is $\tilde\pi$.

Another important observation that generalizes to the misspecified
setting is S{\o}rensen's overturning principle.
\begin{lemma}[Overturning principle]
The misspecified public belief $\tilde{\pi}_{n+1}$ in period $n+1$
is greater than or equal to $1/2$ if and only if $a_n=h$.
\end{lemma}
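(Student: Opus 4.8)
The claim is that $\tilde\pi_{n+1} \ge 1/2$ if and only if $a_n = h$. The natural approach is to track the odds ratio $\tilde\pi_{n+1}/(1-\tilde\pi_{n+1})$ using the equations of motion \eqref{eq:misspecified_belief_l} and \eqref{eq:misspecified_belief}, and to show that the multiplicative update factor, evaluated at the "threshold" belief $\tilde\pi_n = 1/2$, pushes the belief in the direction of the action just taken. Concretely, I would first reduce to the following inequalities: after a low action,
\begin{equation*}
  \frac{\tilde\pi_{n+1}}{1-\tilde\pi_{n+1}} \;=\; \frac{\tilde\pi_n}{1-\tilde\pi_n}\cdot\frac{\widetilde F_h(1-\tilde\pi_n)}{\widetilde F_\ell(1-\tilde\pi_n)} \;<\; 1,
\end{equation*}
and after a high action the symmetric statement with the factor $\bigl(1-\widetilde F_h(1-\tilde\pi_n)\bigr)/\bigl(1-\widetilde F_\ell(1-\tilde\pi_n)\bigr) > 1$ when $\tilde\pi_n/(1-\tilde\pi_n) \ge 1$, and so on for all four sign combinations of $\tilde\pi_n$ relative to $1/2$.

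The key inequality I need is that $\widetilde F_h(q) \le \widetilde F_\ell(q)$ for every $q \in (0,1)$, i.e. conditional on the high state the private posterior stochastically dominates its distribution conditional on the low state. This is a monotone likelihood ratio / Blackwell-type fact that follows from the definition $q = \widetilde\BP(\theta = h \mid s)$ together with Bayes' law: writing $g_h, g_\ell$ for the densities, one has $g_h(q)/g_\ell(q) = (1-\pi)q/(\pi(1-q))$ (up to the prior normalization), which is strictly increasing in $q$, hence MLR dominance, hence first-order stochastic dominance, hence $\widetilde F_h \le \widetilde F_\ell$. Once this is in hand, the four cases are routine: if $a_n = \ell$ and $\tilde\pi_n \le 1/2$ then $\tilde\pi_n/(1-\tilde\pi_n) \le 1$ and the update factor $\widetilde F_h(1-\tilde\pi_n)/\widetilde F_\ell(1-\tilde\pi_n) \le 1$ (with strictness from continuity/non-atomicity of $\widetilde F$), so $\tilde\pi_{n+1} < 1/2$; if $a_n = \ell$ and $\tilde\pi_n > 1/2$, note that $a_n = \ell$ forces $q_n < 1 - \tilde\pi_n < 1/2$, and I would argue directly from the posterior-odds identity $\tilde p_n = \widetilde\BP(\theta=h\mid I_n,q_n) < 1/2$ combined with the fact that the one-step public belief is the conditional expectation (under $\widetilde\BP$) of $\mathbbm 1\{\theta = h\}$ given $I_{n+1}$, which must be consistent with $\tilde\pi_{n+1}$ lying on the correct side of $1/2$.

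Actually, the cleanest route avoids cases entirely: by the stationarity lemma it suffices to prove the statement for $n = 1$, and there $\tilde\pi_1 = \pi$ is just the prior. But even more directly, I would use that $\tilde\pi_{n+1}$ is by definition $\widetilde\BP(\theta = h \mid a_1,\dots,a_n)$, and condition on $I_n$: given $I_n$, the event $a_n = h$ is $\{q_n \ge 1 - \tilde\pi_n\}$, which under $\widetilde\BP$ has posterior probability of $\theta = h$ equal to $\widetilde\BP(\theta = h \mid I_n, q_n \ge 1-\tilde\pi_n)$; since $q_n \mapsto \widetilde\BP(\theta=h\mid I_n,q_n) = \tilde p_n$ is increasing and crosses $1/2$ exactly at $q_n = 1-\tilde\pi_n$ (because $\tilde p_n \ge 1/2 \iff \tilde\pi_n + q_n \ge 1$, as derived in the excerpt), averaging $\tilde p_n$ over the region $\{q_n \ge 1-\tilde\pi_n\}$ gives a value $\ge 1/2$, and over the complement gives $< 1/2$. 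This is exactly $\tilde\pi_{n+1}$ in the two cases.

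The main obstacle is the borderline bookkeeping: ensuring the inequality is the correct non-strict/strict version, and handling the fact that $\tilde p_n$ as a function of $q_n$ being increasing and crossing $1/2$ at the threshold is itself the content that needs the MLR argument above. I would isolate that monotonicity as a one-line consequence of the Bayes identity $\tilde p_n/(1-\tilde p_n) = (\tilde\pi_n/(1-\tilde\pi_n))\cdot(q_n/(1-q_n))$ already displayed in the excerpt, which makes $\tilde p_n$ manifestly increasing in $q_n$ with the crossing point at $\tilde\pi_n + q_n = 1$; then the averaging argument closes the proof with no further cases.
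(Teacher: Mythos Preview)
Your final ``cleanest route'' is exactly the paper's proof: write $\tilde\pi_{n+1} = \widetilde{\BE}[\tilde p_n \mid a_1,\ldots,a_n]$ via the tower property, and then use that $a_n = h$ is equivalent to $\tilde p_n \ge 1/2$, so conditioning on $a_n = h$ averages $\tilde p_n$ over values $\ge 1/2$ (and vice versa). The earlier MLR/case analysis you sketched is unnecessary once you arrive at this averaging argument, and the paper skips it entirely.
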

\begin{proof}
Observe that by the law of total expectation
\begin{equation*}
\begin{split}
    \tilde{\pi}_{n+1} &= \widetilde{\BE}\left[\mathbbm{1}_{\{\theta=h\}} \big|
      a_1,\ldots,a_n\right]\\
    &=  \widetilde{\BE}\Big[\widetilde{\BE}\left[\mathbbm{1}_{\{\theta=h\}} \big| a_1,\ldots,a_n,q_n\right]\Big|a_1,\ldots,a_n\Big]\\
    &= \widetilde{\BE}\left[\tilde{p}_n\big|a_1,\ldots,a_n\right].
\end{split}    
\end{equation*}
Therefore, $a_n=h$ is equivalent to $\tilde{p}_n\geq 1/2$, and hence
equivalent to $\tilde{\pi}_{n+1}\geq 1/2$.
\end{proof}

\subsection{Asymptotic Learning and Immediate Herding}
\label{sec:asympt_learning_immed_herding}
In the misspecified setting the public belief $\tilde\pi_n$ is not a
martingale under the correct measure $\BP$. This martingale property
is an important tool in the proof of asymptotic learning for unbounded
signals in the well-specified case
\citep{Smith_Sorensen_ECMA_2000}. In our case, asymptotic learning
indeed does not always hold, and in particular, we need different tools
to analyze it.

We denote by $a_n \to h$ the asymptotic event that the sequence of
actions converges to $h$. Namely, that $a_n=h$ for all $n$ large enough,
or that a high action herd forms eventually. We denote by $\bar a=h$
the event $\{a_1=h,a_2=h,\ldots\}$ that all agents took the high
action; this is the event that a high action herd formed immediately.

Asymptotic learning occurs when $\BP_\ell(a_n\to \ell)=1$ and
$\BP_h(a_n\to h)=1$. To study the asymptotic events $a_n\to \ell$ and
$a_n\to h$, we study the immediate herding events $\bar a=h$ and $\bar
a=\ell$. These are easier to analyze because conditioned on $\bar
a=\ell$ or on $\bar a=h$, the sequence of misspecified public beliefs
$\{\tilde\pi_n$\} is deterministic and given recursively by
\eqref{eq:misspecified_belief_l} or \eqref{eq:misspecified_belief},
respectively.

To see the connection between asymptotic learning and immediate herding,
condition on $\theta=h$ and consider the event $a_n\to h$ of a good
herd forming eventually. In our setting, we show that this event has
probability 1 if and only if two conditions are met:\footnote{We omit some technical details in the statements of these two conditions. A complete formal treatment is presented in  Appendix~\ref{app:characterization}, Lemmas~\ref{lem:asympt_learning_nec_condition} and~\ref{lem:asympt_learning_suff_condition}.}
\begin{enumerate}[label = (\roman*)]
\item The event $\bar a=\ell$ of an immediate bad herd has probability 0.
\item The event $\bar a=h$ of an immediate good herd has positive
  probability.
\end{enumerate}
The first condition is clearly necessary for asymptotic learning: If
bad herd can form then the probability of a good herd is less than
1. The reason that the second condition is necessary is related to the
stationarity of the process; for a good herd to form eventually, it
must have a positive probability to be formed at any point in time, and hence
also in the beginning. 

To see that these conditions are sufficient for asymptotic learning,
note that again applying stationarity, the first condition implies
that it is impossible for a bad herd to start at any point in
time. This implies that the high action will be taken infinitely
often. Hence, there will be infinitely many chances for a good herd to
start, and thus, by the second condition (and again stationarity) a
good herd will form eventually.

To apply stationarity, we need these two conditions to hold for any
prior, and moreover uniformly so. This is done formally in Appendix~\ref{app:characterization}.

\medskip

Having reduced the problem of asymptotic learning to that of immediate
herding, we turn to calculating the probability of the events $\bar
a=\ell$ and $\bar a =h$. Condition on $\bar a = h$. Then the public
belief $\tilde\pi_n$ evolves deterministically according to
\eqref{eq:misspecified_belief}. It will be useful to consider the misspecified
public log-likelihood ratio $\tilde r_n :=
\log\frac{\tilde\pi_n}{1-\tilde\pi_n}$. In terms of $\tilde r_n$, the
equation of motion \eqref{eq:misspecified_belief} becomes
\begin{align*}
  \tilde{r}_{n+1}=\tilde{r}_n+ \log\frac{1-\widetilde{F}_h\left(\frac{1}{1+\ee^{\tilde{r}_n}}\right)}{1-\widetilde{F}_\ell\left(\frac{1}{1+\ee^{\tilde{r}_n}}\right)}\,.
\end{align*}
It starts at the initial level $\tilde r_1=\log\frac{\pi}{1-\pi}$. When $(\widetilde F_\ell,\widetilde F_h)$ is tail-regular with
exponent $\tilde\alpha$, we can for small $q$ approximate $\widetilde
F_\ell(q)$ with $q^{\tilde\alpha}$ and $\widetilde F_h(q)$ with
$q^{\tilde\alpha+1}$ (neglecting constants). And since
$\tilde r_n$ tends to infinity when $\bar a=h$, this equation of
motion is well approximated by
\begin{align}
    \label{eq:approx-r}
  \tilde{r}_{n+1} \approx \tilde{r}_n+ \ee^{-\tilde \alpha \tilde r_n}\,.
\end{align}
Intuitively, after each observed high action the misspecified public log-likelihood increases by an amount $\ee^{-\tilde \alpha \tilde r_n}$ that
becomes smaller as $\tilde r_n$ increases. More importantly, $\ee^{-\tilde
  \alpha \tilde r_n}$ is also smaller when $\tilde\alpha$ is
higher, i.e., when the signals are less informative: After many high
actions, agents are less surprised to see another high action when
signals are less likely to be very informative.

The asymptotic behavior of this discrete time equation can in turn be
approximated by the differential equation $\frac{\d \tilde r(t)}{\d t}
= \ee^{-\tilde \alpha\tilde r(t)}$ whose solution is $\tilde r(t) =
\tilde\alpha^{-1}\log(1+\tilde\alpha t)$; this is shown formally in Appendix~\ref{app:evolution}, Lemmas~\ref{lem: r_tilde_lower_bound} and~\ref{lem: r_tilde_upper_bound}. Thus, conditioned on the
event $\bar a = h$, the misspecified public log-likelihood $\tilde
r_n$ takes the sequence of deterministic values $\tilde r_n^h$, which
we can approximate by $\tilde r_n^h \approx
\tilde\alpha^{-1}\log(1+\tilde\alpha n)$. Transforming this back to
public beliefs, we get
\begin{align}
  \label{eq:phihn}
  \tilde \pi_n^h \approx 1- n^{-1/\tilde\alpha}.
\end{align}
Thus, the sequence of misspecified public beliefs converges to 1, and it does so
more slowly for higher $\tilde\alpha$, i.e., for less informative signals.

We remind the reader that $a_n=h$ if and only if $q_n \geq 1 - \tilde\pi_n$: The
agent takes the high action if her private posterior
$q_n=\BP(\theta=h|s_n)$ exceeds $1 -\tilde\pi_n$. Hence the event
$\bar a = h$ is the event that $q_n \geq 1-\tilde\pi_n^h$ for all
$n$. Conditioned on $\theta=h$ the (actual, not
misspecified) probability of this event is
\begin{align*}
  1-F_h(1-\tilde\pi_n^h) \approx 1-F_h(n^{-1/\tilde\alpha}) \approx 1-n^{-\frac{\alpha+1}{\tilde\alpha}},
\end{align*}
where the first approximation uses \eqref{eq:phihn} and the second
uses $F_h(q) = \Theta(q^{\alpha+1})$.

Since the random variables $q_n$ are independent conditioned on the state, we
get that the probability of $\bar a = h$ is 
\begin{align*}
  \BP_h(\bar a = h) = \prod_{n=1}^\infty \left(1-F_h(1-\tilde\pi_n^h)\right) \approx
  \prod_{n=1}^\infty \left(1-n^{-\frac{\alpha+1}{\tilde\alpha}}\right).
\end{align*}
Crucially, we are only interested in whether this probability is positive or
zero. As we show formally in Lemma~\ref{lem:S_h_parameter} in Appendix~\ref{app:proof}, the approximations we
perform are good enough, in the sense that the first product vanishes
if and only if the second one does. Thus, by an elementary argument we
get that $\BP_h(\bar a = h)>0$ if and only if $\tilde\alpha-\alpha < 1$.

This argument shows that immediate good herds can form if and only if
$\tilde\alpha-\alpha < 1$, i.e., agents are not
overly condescending. A similar line of reasoning shows that
$\BP_h(\bar a = \ell)=0$ if and only if $\tilde\alpha-\alpha \geq 0$,
i.e., immediate bad herds are excluded when agents are condescending. 

The assumption of tail regularity is used in the approximation $\tilde{r}_{n+1} \approx \tilde{r}_n+ \ee^{-\tilde \alpha \tilde r_n}$ of the evolution of the public log-likelihood ratio, made in \eqref{eq:approx-r}. Tail regularity is also crucial for showing that the differential equation $\frac{\d \tilde r(t)}{\d t}
= \ee^{-\tilde \alpha\tilde r(t)}$ 
 is a good approximation of this discrete time dynamics; see the proofs of Lemmas~\ref{lem: r_tilde_lower_bound} and~\ref{lem: r_tilde_upper_bound} in Appendix~\ref{app:evolution}.

\subsection{Efficient Learning}
\label{subs:efficient_learning}
In the previous section we explained why asymptotic learning holds
only in the regime $\tilde\alpha-\alpha \in [0,1)$. This immediately
implies that outside this range there is also no efficient
learning. In this section we explain why efficient learning does hold
when $\tilde\alpha-\alpha \in (0,1)$.

Suppose $\tilde\alpha-\alpha \in (0,1)$. As asymptotic learning holds,
we know that the agents will take the high action from some point
on. Until then, there will be \emph{runs} of wrong actions, or
sequences of consecutive agents who make the wrong choice. These will
be separated by runs of agents who make the correct choice.

The argument for efficient learning includes two parts. First, we show
that the expected number of bad runs is finite. Second, we show that
the expected length of each bad run is finite. Moreover, the expected
length of a bad run is uniformly bounded, regardless of the history
that came before that run. It follows that the total number of agents
$W$ who take the wrong action has a finite expectation.

The reason that the number of bad runs has finite expectation is that
regardless of the history, there is a uniform lower bound $\delta$ on
the probability that a good herd continues forever. This implies that
the distribution of the number of bad runs is stochastically dominated
by a geometric distribution, which has
a finite expectation. This holds whenever $\tilde\alpha-\alpha < 1$,
i.e., whenever agents are not overly condescending. The argument is
similar to the one from the previous section, which showed that in this
range the probability of $\bar a=h$ is positive in the high state.

To show that the expected length of each bad run is finite, we again
follow the line of argument from the previous section showing that
$\bar a=\ell$ has zero probability in the high state. This holds
whenever $\tilde\alpha > \alpha$, i.e., when agents are
condescending. Moreover, we show that the expected length of a bad run
is uniformly bounded, regardless of the history that came before it
started. This is a consequence of the fact that the public belief at
the onset of a run cannot be arbitrarily high or low, but is bounded
away from 0 and 1. This is a consequence of tail regularity (see Proposition~\ref{prop:finite_exp_duration}).

We note that this last step is obtained in the well-specified setting
of \cite{Rosenberg_Vieille_ECMA_2019} by appealing to the overturning principle and the fact that
$\{\frac{1-\pi_n}{\pi_n}\}$ is a martingale under the correct
conditional measure (that is, $\BP_h$ in the high state).  In
our misspecified case, the overturning principle still holds, but the public 
likelihood $\{\frac{1-\tilde{\pi}_n}{\tilde{\pi}_n}\}$
is not a martingale under the correct measure. Thus, we have
to apply a mechanical method that appeals to tail-regularity.

\subsection{Expected Time of the First Correct Action}

\cite{Rosenberg_Vieille_ECMA_2019} consider another notion of the efficiency of learning, which is briefly discussed in this section. Let $\tau$ be the first time that the correct action is taken:
\begin{align*}
    \tau = \min\{n\,:\, a_n=\theta\}.
\end{align*}
This is a random time that takes values in $\mathbb{N} \cup \{\infty\}$.

\cite{Rosenberg_Vieille_ECMA_2019} show that in the well-specified setting, the finiteness of the expectation of $\tau$ coincides with efficient learning, or the finiteness of the expectation of $W$. 

In our model, when agents are condescending, i.e., when $\tilde\alpha > \alpha$, the expectation of $\tau$ is finite (see Proposition~\ref{prop:finite_exp_duration}). This holds even when agents are over-condescending (i.e., $\tilde\alpha \geq \alpha + 1$), and efficient learning does not hold. In the latter regime, there is no learning because the agents' condescension causes them to put too much weight on their own signals, resulting in both actions being taken infinitely often, and also in small expected $\tau$. When agents are anti-condescending there is a positive probability of $\tau=\infty$ (see Proposition~\ref{prop:low_state_equiv}), and in particular, $\tau$ has an infinite expectation.

\section{Conclusion}

In this paper, we study social learning with condescending agents who underestimate the quality of their peers' information. We show that mild condescension can have positive externalities that result in efficient learning. In particular, there are private signal distributions for which learning is not efficient in the well-specified case, but is efficient with even very small levels of condescension. 

We make several simplifying assumptions for expositional purposes. For example, relaxing symmetry  (Assumption~\ref{assum: symmetry}) yields the same type of results, but where the exponent $\alpha$ needs to be defined for each of the two states (corresponding to the left and right tails), and outcomes can be different in each state. We believe that our results also hold for private belief distributions that are not continuous, but currently, our proof techniques only apply in the continuous case.

A more substantial assumption is that all agents have the same misspecified beliefs about others. We see this a smallest possible deviation from the well-specified case, involving misspecification only about the distribution of agents' types, and nothing else. In particular, because all agents have the same prior,  higher order beliefs are trivial, which makes the model tractable. A natural avenue for future work is to relax this assumption. Indeed, higher order beliefs play an interesting and important role in the misspecified social learning literature \citep{Bohren_JET_2016, bohren2021learning}.

Our analysis of social welfare is restricted to the question of whether the expected number of incorrect actions is finite or not. A more nuanced question is to study how this expectation changes as the actual and perceived distributions of private signals vary. In particular, for private signal distributions where this expectation is finite in the well-specified case, it is interesting to understand how misspecification alters this expectation; this is possible even when $\tilde\alpha = \alpha$. It is furthermore natural to consider a discounted sum of the number of incorrect actions. These are interesting questions that currently seem to be beyond what is technically tractable.

\bibliographystyle{normalstyle}
\bibliography{ref}

\appendix

\section{Preliminaries}
The following lemma is a standard result, with proofs given, for
example, in  Appendix A of \cite{hann2018speed} or
\cite{Rosenberg_Vieille_ECMA_2019}.
\begin{lemma}
\label{lem: NIP}
Let $G_\ell$ and $G_h$ be two cumulative distribution functions
on $[0,1]$, with the Radon-Nikodym derivative $\d G_h/\d G_\ell$
satisfying the iterated likelihood principle $\frac{\d G_h}{\d
  G\ell}(q)=q/(1-q)$. Then it holds that:
\begin{equation*}
    \begin{gathered}
        G_h(q) = 2\left(qG(q)-\int_0^q G(x)\,\d x\right)\,,\\
        G_\ell(q)=2\left((1-q)G(q)+\int_0^q G(x)\,\d x\right)\,.
    \end{gathered}
\end{equation*}
where $G = \frac{1}{2}\left(G_\ell+G_h\right)$. These in turn
imply that $G_h(q) \leq 2qG(q)$ and
$\big|G_\ell(q)-2G(q)\big| \leq 3qG(q)$. Therefore, $\lim_{q
  \to 0}G_h(q)/G(q) =0$ and $\lim_{q \to 0}G_\ell(q)/G(q)=2$.
\end{lemma}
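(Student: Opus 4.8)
The plan is to reduce both identities to a single relation among the three measures $\dd G_\ell$, $\dd G_h$, and $\dd G$. Since $G = \tfrac12(G_\ell+G_h)$, the measure $\dd G$ dominates both $\dd G_\ell$ and $\dd G_h$, so both have Radon--Nikodym derivatives with respect to $G$, and these sum to $2$ almost everywhere: $\frac{\dd G_\ell}{\dd G} + \frac{\dd G_h}{\dd G} = 2$. The hypothesis on the likelihood ratio, together with the chain rule for Radon--Nikodym derivatives, gives $\frac{\dd G_h}{\dd G} = \frac{q}{1-q}\,\frac{\dd G_\ell}{\dd G}$. Solving this two-equation linear system pointwise yields the clean formulas $\dd G_\ell(q) = 2(1-q)\,\dd G(q)$ and $\dd G_h(q) = 2q\,\dd G(q)$.

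Next I would integrate over $[0,q]$ and integrate by parts. For $G_h$ this gives $G_h(q) = \int_0^q 2x\,\dd G(x)$, and Stieltjes integration by parts on $\int_0^q x\,\dd G(x) = qG(q) - \int_0^q G(x)\,\dd x$ (the boundary term at $0$ vanishing since $G(0)=0$) produces $G_h(q) = 2\bigl(qG(q) - \int_0^q G(x)\,\dd x\bigr)$. For $G_\ell$, writing $\int_0^q 2(1-x)\,\dd G(x) = 2G(q) - \int_0^q 2x\,\dd G(x)$ and substituting the previous identity gives $G_\ell(q) = 2(1-q)G(q) + 2\int_0^q G(x)\,\dd x$. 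Adding the two recovers $G_\ell(q)+G_h(q) = 2G(q)$, a useful consistency check.

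The remaining inequalities use only monotonicity of $G$: since $0 \le G(x) \le G(q)$ on $[0,q]$, we get $0 \le \int_0^q G(x)\,\dd x \le qG(q)$, and the lower bound immediately yields $G_h(q) \le 2qG(q)$. Subtracting the consistency identity gives $G_\ell(q) - 2G(q) = 2\int_0^q G(x)\,\dd x - 2qG(q) = -G_h(q)$, so $\bigl|G_\ell(q)-2G(q)\bigr| = G_h(q) \le 2qG(q) \le 3qG(q)$. Dividing by $G(q)$ and sending $q\to 0$ (where $G(q)>0$, which holds under tail-regularity) then gives $G_h(q)/G(q)\to 0$ and $G_\ell(q)/G(q)\to 2$.

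The one point requiring genuine care is measure-theoretic: verifying that the pointwise solution of the density system is legitimate $G$-almost everywhere, and handling any atoms correctly in the Stieltjes integration by parts---in particular the endpoint contributions and whether the integrals are taken over $[0,q]$ or $[0,q)$. Under the paper's standing assumption that the relevant CDFs are continuous, this difficulty disappears and the computation is entirely routine; it is only in the fully general statement that this bookkeeping is the main (and essentially only) obstacle.
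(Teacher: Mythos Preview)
Your argument is correct. The paper does not actually prove this lemma; it simply cites \cite{hann2018speed} and \cite{Rosenberg_Vieille_ECMA_2019} as standard references, so there is no in-paper proof to compare against. Your derivation of the densities $\dd G_\ell = 2(1-q)\,\dd G$ and $\dd G_h = 2q\,\dd G$ from the two-equation linear system, followed by Stieltjes integration by parts, is the standard route and is carried out cleanly. One small bonus: your observation that $G_\ell - 2G = -G_h$ (immediate from $2G = G_\ell + G_h$) yields $|G_\ell(q)-2G(q)| = G_h(q) \le 2qG(q)$, which is sharper than the $3qG(q)$ bound stated in the lemma; the stated constant $3$ is simply not tight.
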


We use this lemma to prove the following additional lemma which
relates the exponent of $G$ to the exponents of $G_\ell$ and $G_h$.
\begin{lemma}
\label{lem: Psi_h_asymptotics}
Suppose $G(q) = \Theta(q^\alpha)$. Then $G_\ell(q)=\Theta(q^{\alpha})$
and $G_h(q)=\Theta(q^{\alpha+1})$.
\end{lemma}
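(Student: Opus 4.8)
The plan is to deduce the claim directly from Lemma~\ref{lem: NIP}. The hypothesis $G(q) = \Theta(q^\alpha)$ means there are constants $0 < c \le C < \infty$ with $c\,q^\alpha \le G(q) \le C\,q^\alpha$ for all $q$ small enough. I will plug these two-sided bounds into the two integral identities of Lemma~\ref{lem: NIP} and read off matching two-sided bounds for $G_\ell$ and $G_h$.

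For $G_\ell$, the clean route is to use the estimate $|G_\ell(q) - 2G(q)| \le 3qG(q)$ already supplied by Lemma~\ref{lem: NIP}. This gives $G_\ell(q) = 2G(q)\bigl(1 + O(q)\bigr)$, so for $q$ small $G_\ell(q)$ is squeezed between, say, $G(q)$ and $3G(q)$; combined with $G(q) = \Theta(q^\alpha)$ this yields $G_\ell(q) = \Theta(q^\alpha)$. (Equivalently one can take $\liminf$ and $\limsup$ of $G_\ell(q)/q^\alpha$ and use $\lim_{q\to 0} G_\ell(q)/G(q) = 2$.) For $G_h$, I would use $G_h(q) = 2\bigl(qG(q) - \int_0^q G(x)\,\d x\bigr)$. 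The upper bound $G_h(q) \le 2qG(q) \le 2C\,q^{\alpha+1}$ is immediate from the lemma. For the lower bound, substitute $G(x) \ge c\,x^\alpha$ in the $qG(q)$ term and $G(x) \le C\,x^\alpha$ in the integral: this bounds $qG(q) - \int_0^q G(x)\,\d x$ from below by $c\,q^{\alpha+1} - \tfrac{C}{\alpha+1}q^{\alpha+1}$, which is the wrong sign when $C/c$ is large, so a cruder split is not enough. Instead I would keep the same function on both sides: writing $qG(q) - \int_0^q G(x)\,\d x = \int_0^q \bigl(G(q) - G(x)\bigr)\d x$, and noting $G$ is nondecreasing, this is $\ge \int_{q/2}^q \bigl(G(q) - G(x)\bigr)\d x \ge \tfrac{q}{2}\bigl(G(q) - G(q/2)\bigr)$. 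Now $G(q) - G(q/2) \ge c\,q^\alpha - C(q/2)^\alpha$, which is still not obviously positive.

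The genuine obstacle is therefore this lower bound on $G_h$: a pointwise $\Theta(q^\alpha)$ control on $G$ does not by itself force $G(q) - G(q/2)$ to be of order $q^\alpha$, because $G$ could be essentially flat on $[q/2, q]$ while oscillating within the allowed band. To get around this I would not try to control $G(q)-G(q/2)$ on a single dyadic block but integrate over many: for a fixed large integer $k$, $\int_0^q (G(q)-G(x))\,\d x \ge \sum_{j=1}^{k} \int_{q 2^{-j}}^{q2^{-j+1}} (G(q)-G(x))\,\d x \ge \sum_{j=1}^k q2^{-j}\bigl(G(q) - G(q2^{-j+1})\bigr) \ge q\,G(q)(1 - 2^{-k}) - \sum_{j=1}^k q 2^{-j} G(q 2^{-j+1})$, and bounding $G(q)\ge c q^\alpha$, $G(q2^{-j+1}) \le C (q2^{-j+1})^\alpha$ the tail sum is $\le C q^{\alpha+1} 2^\alpha \sum_{j\ge 1} 2^{-j(1+\alpha)} = O(q^{\alpha+1})$ with a constant that does \emph{not} shrink; this still has the same defect. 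The cleanest fix, and the one I would ultimately adopt, is to avoid the band argument entirely and instead apply the tail-regularity / $\Theta$ machinery at the level of the likelihood-ratio structure: since $(G_\ell, G_h)$ satisfies $\d G_h/\d G_\ell(q) = q/(1-q)$ and $G_\ell(q) = \Theta(q^\alpha)$ was just established, and $G_\ell$, $G_h$ are CDFs, one has $G_h(q) = \int_0^q \frac{x}{1-x}\,\d G_\ell(x)$; near $0$ the factor $\tfrac{x}{1-x} = x(1+O(x))$, so $G_h(q) = \int_0^q x\,\d G_\ell(x)\,(1+O(q))$, and integrating by parts $\int_0^q x\,\d G_\ell(x) = q G_\ell(q) - \int_0^q G_\ell(x)\,\d x$. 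Now substituting the \emph{matching} two-sided bounds $c' x^\alpha \le G_\ell(x) \le C' x^\alpha$ and using the same integrate-the-difference identity $qG_\ell(q) - \int_0^q G_\ell = \int_0^q (G_\ell(q) - G_\ell(x))\,\d x$ still runs into the flatness issue — so in fact the honest statement is that this lemma requires slightly more than pointwise $\Theta$-control unless one strengthens the hypothesis (e.g. to regular variation), or unless the $\liminf$/$\limsup$ are interpreted so that one only needs $G_h(q)/q^{\alpha+1}$ to have positive liminf and finite limsup, for which the upper bound $G_h \le 2qG(q)$ gives the limsup and the liminf follows from $\int_0^q(G(q)-G(x))\d x \ge \int_0^{q/2}(G(q/2) - G(x))\d x + \tfrac{q}{2}(G(q)-G(q/2))$ iterated down a sequence $q_m \to 0$ along which $G(q_m)/q_m^\alpha \to \limsup$, so that $G(q_m) - G(q_m/2) \gtrsim q_m^\alpha$ along that subsequence. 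I would carry out the proof along those lines: establish $G_\ell = \Theta(q^\alpha)$ from the lemma's error estimate, get the $G_h$ upper bound from $G_h \le 2qG(q)$, and get the $G_h$ lower bound by choosing a good subsequence $q_m$ on which $G$ is comparably large and using monotonicity of $G$ on a dyadic block below $q_m$.
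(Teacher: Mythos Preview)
Your treatment of $G_\ell$ and the upper bound for $G_h$ is correct and matches the paper. The gap is the lower bound for $G_h$.

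Your final subsequence argument does not work: showing $G_h(q_m) \gtrsim q_m^{\alpha+1}$ along a special sequence $q_m \to 0$ would establish only that $\limsup_{q\to 0} G_h(q)/q^{\alpha+1} > 0$, not the $\liminf$ you need. The $\liminf$ requires a lower bound valid for \emph{every} small $q$, and monotonicity of $G_h$ does not bridge the gap between consecutive $q_m$'s unless the ratios $q_{m-1}/q_m$ are bounded, which your choice of subsequence does not guarantee.

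More importantly, your conclusion that ``this lemma requires slightly more than pointwise $\Theta$-control'' is false. The paper resolves exactly the obstacle you identify---that $c\,q^\alpha - C(q/2)^\alpha$ may be negative---with a one-line fix: replace the split point $q/2$ by $mq$ where $m := (c/2C)^{1/\alpha}$. With this choice,
\[
G(mq) \le C(mq)^\alpha = \tfrac{c}{2}\,q^\alpha \le \tfrac{1}{2}\,G(q)
\]
for all small $q$, so by monotonicity $\int_0^{mq} G \le \tfrac{m}{2}\,qG(q)$ and $\int_{mq}^q G \le (1-m)\,qG(q)$, giving $\int_0^q G \le \bigl(1-\tfrac{m}{2}\bigr)qG(q)$ and hence
\[
G_h(q) = 2\Bigl(qG(q) - \int_0^q G\Bigr) \ge m\,qG(q) \ge mc\,q^{\alpha+1}.
\]
The entire difficulty dissolves once you tune the split ratio to the constants $c,C$ rather than fixing it at $1/2$; no regular variation, no subsequences, no iteration is needed.
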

\begin{proof}
Lemma~\ref{lem: NIP} immediately implies that $G_\ell(q) =
\Theta(q^\alpha)$ whenever $G(q) = \Theta(q^\alpha)$. To see that
$G_h(q)=\Theta(q^{\alpha+1})$, note that $G(q) = \Theta(q^\alpha)$
implies there are constants $C\geq c>0$ such that for all $q\in
[0,1]$, one has $cq^\alpha \leq G(q) \leq C q^\alpha$. The previous
lemma thus implies that $G_h(q) \leq 2 C q^{\alpha+1}$. Next, let us
define $m: = \big(c/2C\big)^{1/\alpha}$, and observe that
\begin{equation*}
    G(mq) \leq C(mq)^\alpha = \frac{c}{2}\,q^{\alpha} \leq \frac{1}{2}\,G(q)\,.
\end{equation*}
Since $G$ is increasing, then $G(x) \leq G(q)/2$ for all $x\leq mq$, and therefore, $\int_0^{mq}G(x) \d x \leq m q G(q)/2$. In addition, $G$ being increasing implies that $\int_{mq}^q G(x) \d x \leq (1-m)q G(q)$. Therefore, one obtains the following upper bound for the integral:
\begin{equation*}
    \begin{aligned}
        \int_0^q G(x) \d x &= \int_0^{mq} G(x)\d x + \int_{mq}^q G(x)\d x\\
        &\leq \frac{m}{2}\, q G(q) + (1-m)q G(q) = \left(1-\frac{m}{2}\right)q G(q)\,.
    \end{aligned}
\end{equation*}
It follows from the expression for $G_h$ in the previous lemma that
$G_h(q)\geq m q G(q)$, and hence $G_h(q) \geq m c
q^{\alpha+1}$. Therefore, we have shown that $G_h(q) =
\Theta(q^{\alpha+1})$.
\end{proof}

\section{The Evolution of the Public Log-Likelihood}
\label{app:evolution}

Define the misspecified public log-likelihood ratio by
\begin{align*}
    \tilde{r}_n =\log\frac{\tilde{\pi}_n}{1-\tilde{\pi}_n}\,,
\end{align*}
and the well-specified public log-likelihood ratio by
\begin{align*}
    r_n = \log\frac{\pi_n}{1-\pi_n}\,.
\end{align*}
At $n=1$, it holds that $r_1=\tilde r_1=\log \frac{\pi}{1-\pi}$. Conditioned on the event $\bar a =h$, $\tilde{\pi}_n$ satisfies
the recursive equation \eqref{eq:misspecified_belief}, and thus
$\tilde{\pi}_n$ is deterministic and equals to some $\tilde\pi_n^h$. We
accordingly denote $\tilde r_n^h = \log \frac{\tilde\pi_n^h}{1-\tilde\pi_n^h}$.
\begin{lemma}
  \label{lem: conv_to_one}
  $\lim_{n \to \infty} \tilde\pi^h_n=1$.
\end{lemma}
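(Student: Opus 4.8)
The plan is to work directly with the deterministic likelihood ratio $\ell_n := \tilde\pi^h_n/(1-\tilde\pi^h_n)$ along the immediate-herd path $\bar a = h$. Conditioned on $\bar a = h$, equation~\eqref{eq:misspecified_belief} reads $\ell_{n+1} = \rho_n\,\ell_n$ with multiplicative increment
\[
  \rho_n \;=\; \frac{1-\widetilde F_h\bigl(1-\tilde\pi^h_n\bigr)}{1-\widetilde F_\ell\bigl(1-\tilde\pi^h_n\bigr)}\,,
\]
starting from $\ell_1 = \pi/(1-\pi) \in (0,\infty)$. I would first show $\rho_n > 1$ for every $n$; this makes $\{\tilde\pi^h_n\}$ strictly increasing, hence convergent to some $\pi_\infty \in (0,1]$, and reduces the task to excluding $\pi_\infty < 1$.

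The heart of the argument — and the step I expect to be the main obstacle — is the purely distributional claim that $\widetilde F_h(q) < \widetilde F_\ell(q)$ for all $q \in (0,1)$, together with $\widetilde F_\ell(q) < 1$ for $q < 1$ (so that $\rho_n$ is finite and exceeds $1$). For the first inequality I would apply Lemma~\ref{lem: NIP} to $(\widetilde F_\ell,\widetilde F_h)$ (a valid iterated-likelihood pair), which gives
\[
  \widetilde F_\ell(q) - \widetilde F_h(q) \;=\; 2\Bigl((1-2q)\,\widetilde F(q) + 2\!\int_0^q \widetilde F(x)\,\dd x\Bigr).
\]
For $q \in (0,1/2]$ both summands are nonnegative, and by tail-regularity (Assumption~\ref{assum: tail_reg}) the relation $\widetilde F(x) = \Theta(x^{\tilde\alpha})$ near $0$ forces $\widetilde F > 0$ on $(0,q]$, so the integral is strictly positive; hence $\widetilde F_\ell > \widetilde F_h$ on $(0,1/2]$. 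The symmetry assumption (Assumption~\ref{assum: symmetry}) yields $\widetilde F_\ell(q)-\widetilde F_h(q) = \widetilde F_\ell(1-q)-\widetilde F_h(1-q)$, which propagates the strict inequality to $[1/2,1)$. A similar use of the formula $\widetilde F_h(q') = 2\int_0^{q'}(\widetilde F(q')-\widetilde F(x))\,\dd x$ from Lemma~\ref{lem: NIP}, with tail-regularity and continuity, gives $\widetilde F_h(q')>0$ for $q'\in(0,1)$, whence $\widetilde F_\ell(q) = 1-\widetilde F_h(1-q) < 1$ for $q<1$ by symmetry. This step is the delicate one because the coefficient $(1-2q)$ changes sign at $q = 1/2$, so the naive one-sided bound is not monotone; one genuinely needs symmetry together with the strict positivity of $\widetilde F$ guaranteed by tail-regularity.

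Finally I would close the argument. Since $\rho_n > 1$, the sequence $\tilde\pi^h_n$ increases to some $\pi_\infty \le 1$; suppose for contradiction $\pi_\infty < 1$ and put $q_\infty := 1 - \pi_\infty$, which lies in $(0,1)$ because $\pi_\infty \ge \tilde\pi^h_1 = \pi > 0$. Continuity of $\widetilde F_\ell$ and $\widetilde F_h$ gives $\rho_n \to \rho := \bigl(1-\widetilde F_h(q_\infty)\bigr)/\bigl(1-\widetilde F_\ell(q_\infty)\bigr)$, and by the previous paragraph $\rho > 1$. On the other hand $\ell_n \to \pi_\infty/(1-\pi_\infty) \in (0,\infty)$, so $\rho_n = \ell_{n+1}/\ell_n \to 1$, a contradiction. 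Hence $\pi_\infty = 1$, i.e. $\lim_{n\to\infty}\tilde\pi^h_n = 1$.
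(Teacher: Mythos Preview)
Your proposal is correct and follows the same overall architecture as the paper: show that the multiplicative increment $\rho_n>1$ for all $n$ (equivalently, $\widetilde F_h(q)<\widetilde F_\ell(q)$ on $(0,1)$), deduce that $\tilde\pi^h_n$ is strictly increasing, and then rule out an interior limit by continuity. The only difference is in how you derive the strict inequality $\widetilde F_h<\widetilde F_\ell$: the paper obtains it in one line from the Radon--Nikodym relation $\d\widetilde F_h/\d\widetilde F_\ell=q/(1-q)$, which makes $\widetilde F_h-\widetilde F_\ell$ decreasing on $[0,1/2]$ and increasing on $[1/2,1]$ with value $0$ at both endpoints, whereas you route through the integral representations of Lemma~\ref{lem: NIP} combined with the symmetry assumption. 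Your route is a bit longer but has the minor bonus of making the auxiliary fact $\widetilde F_\ell(q)<1$ for $q<1$ explicit, which the paper leaves implicit.
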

\begin{proof}
The perceived distributions $\widetilde{F}_h$ and $\widetilde{F}_\ell$ satisfy the iterated likelihood principle, that is,
\begin{equation*}
    \frac{\d\widetilde{F}_h}{\d\widetilde{F}_\ell}(q) = \frac{q}{1-q}\,.
\end{equation*}
This relation implies that $\widetilde{F}_h-\widetilde{F}_\ell$ is strictly decreasing on $[0,1/2]$ and strictly increasing on $[1/2,1]$. Therefore, for every $\pi\in (0,1/2]$ it must be that 
\begin{equation*}
    \widetilde{F}_h(\pi)-\widetilde{F}_\ell(\pi) < \widetilde{F}_h(0)-\widetilde{F}_\ell(0)=0 \Rightarrow \widetilde{F}_h(\pi)<\widetilde{F}_\ell(\pi)\,,
\end{equation*}
and for every $\pi\in [1/2,1)$ one has
\begin{equation*}
    \widetilde{F}_h(\pi)-\widetilde{F}_\ell(\pi) < \widetilde{F}_h(1)-\widetilde{F}_\ell(1)=0 \Rightarrow \widetilde{F}_h(\pi)>\widetilde{F}_\ell(\pi)\,.
\end{equation*}
Observe that due to equation \eqref{eq:misspecified_belief} the sequence $\tilde{\pi}_n$ is strictly increasing. Now, assume by contradiction that $\tilde{\pi}_n \to \hat{\pi} \in (0,1)$, then it must be that 
\begin{equation*}
    \frac{1-\widetilde{F}_h(1-\hat{\pi})}{1-\widetilde{F}_\ell(1-\hat{\pi})}=1\,,
\end{equation*}
which is in contrast with the previous two implications about $\{\widetilde{F}_\ell,\widetilde{F}_h\}$.
\end{proof}

In this section we provide asymptotic results for the evolution of
$r_n$ and $\tilde{r}_n$ on the high action path. As discussed
above, on this path these random variables are deterministic, and
equal to some constants $r_n^h$ and $\tilde r_n^h$,
respectively. These constants satisfy the following reformulation of
expressions in \eqref{eq:rational_belief} and
\eqref{eq:misspecified_belief}:
\begin{align}
  \label{eq: r_difference}
  r_{n+1}^h &= r_n^h + U(\tilde r_n^h)\,,\\
  \label{eq:r_tilde_difference}
  \tilde r_{n+1}^h &= \tilde r_n^h + \widetilde U(\tilde r_n^h)\,,
\end{align}
where
\begin{align*}
  U(r) :=
  \log\frac{1-F_h\left(\frac{1}{1+\ee^r}\right)}{1-F_\ell\left(\frac{1}{1+\ee^r}\right)}\,  ,\\ 
  \widetilde{U}(r):= \log\frac{1-\widetilde{F}_h\left(\frac{1}{1+\ee^r}\right)}{1-\widetilde{F}_\ell\left(\frac{1}{1+\ee^r}\right)}\,.
\end{align*}

One can readily show that both $U$ and $\widetilde{U}$ are decreasing
functions. In addition, they always take positive values, because $F_h
\succeq F_\ell$ and $\widetilde{F}_h \succeq \widetilde{F}_\ell$ in
first order stochastic dominance. This means not only $\{\tilde r^h_n\}$, but also $\{r_n^h\}$ is an increasing sequence.

Lemma~\ref{lem: conv_to_one} implies that $\lim_n \tilde r_n^h =
\infty$. Thus, to study the public belief at large times $n$, we need
to understand $U(r)$ and $\widetilde U(r)$ for large $r$. The next
lemma provides the asymptotic behavior of these functions.
\begin{lemma}
\label{lem: U_F_approx}
For large  $r$, one has $U(r)=\Theta(\ee^{-\alpha r})$ and $\widetilde
U(r)=\Theta(\ee^{-\tilde \alpha r})$, that is
\begin{equation*}
    0< \liminf_{r \to \infty} \frac{U(r)}{\ee^{-\alpha r}} \leq \limsup_{r \to \infty} \frac{U(r)}{\ee^{-\alpha r}} <\infty\,,
\end{equation*}
and
\begin{equation*}
    0< \liminf_{r \to \infty} \frac{\widetilde U(r)}{\ee^{-\tilde \alpha r}}
    \leq \limsup_{r \to \infty} \frac{\widetilde U(r)}{\ee^{-\tilde \alpha r}} <\infty\,.
\end{equation*}
\end{lemma}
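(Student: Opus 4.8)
The plan is to prove the statement for $U(r)$; the argument for $\widetilde U(r)$ is identical with $(F_\ell, F_h, \alpha)$ replaced by $(\widetilde F_\ell, \widetilde F_h, \tilde\alpha)$. Write $q = q(r) = \frac{1}{1+\ee^r}$, so that as $r \to \infty$ we have $q \to 0^+$, and moreover $q = \Theta(\ee^{-r})$ — precisely, $\frac12 \ee^{-r} \le q \le \ee^{-r}$ for $r \ge 0$. Since $U(r) = \log\frac{1 - F_h(q)}{1 - F_\ell(q)} = \log\bigl(1 - F_h(q)\bigr) - \log\bigl(1 - F_\ell(q)\bigr)$, and both $F_h(q), F_\ell(q) \to 0$ as $q \to 0$, I would use the elementary estimate $-\log(1-x) = x + O(x^2)$, valid for $x$ near $0$, to get
\begin{align*}
  U(r) = \bigl(F_h(q) - F_\ell(q)\bigr)\bigl(1 + o(1)\bigr).
\end{align*}
So the whole problem reduces to understanding the asymptotics of $F_\ell(q) - F_h(q)$ as $q \to 0^+$.

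For that, I would invoke Lemma~\ref{lem: Psi_h_asymptotics}, which gives $F_\ell(q) = \Theta(q^\alpha)$ and $F_h(q) = \Theta(q^{\alpha+1})$. Since $q^{\alpha+1} = o(q^\alpha)$, the $F_h$ term is negligible against the $F_\ell$ term, so $F_\ell(q) - F_h(q) = \Theta(q^\alpha)$; combined with the display above this yields $U(r) = \Theta(q^\alpha) = \Theta(\ee^{-\alpha r})$, using $q = \Theta(\ee^{-r})$ and that $\Theta$ is preserved under raising to a fixed positive power. To make this fully rigorous rather than heuristic, I would state it as: there exist constants $0 < c \le C$ and $q_0 > 0$ such that $c q^\alpha \le F_\ell(q) - F_h(q) \le C q^\alpha$ for all $q \in (0, q_0)$ (from Lemma~\ref{lem: Psi_h_asymptotics} applied to $F_\ell$, Lemma~\ref{lem: NIP}'s bound $F_h(q) \le 2q F(q)$, and the lower tail bound on $F_\ell$); then pick $r_0$ large enough that $q(r) < q_0$ and that the error in $-\log(1-x) = x(1+o(1))$ is, say, within a factor $2$, and chase through the resulting two-sided inequality to extract finite positive $\liminf$ and $\limsup$ of $U(r)/\ee^{-\alpha r}$.

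The main subtlety — not really an obstacle, but the place where care is needed — is handling the $\log$ and the difference of two CDFs simultaneously: one must confirm that neither $1 - F_h(q)$ nor $1 - F_\ell(q)$ approaches $0$ (they approach $1$, so the logarithms are well-behaved and the linearization $-\log(1-x) \approx x$ is legitimate), and that the subtracted term $F_h$ genuinely does not interfere with the $\Theta(q^\alpha)$ order of $F_\ell$. Both are immediate from Lemmas~\ref{lem: NIP} and~\ref{lem: Psi_h_asymptotics} once one notes $F_h(q)/F_\ell(q) \to 0$. Everything else is bookkeeping with $\liminf$/$\limsup$ and the change of variables $q = 1/(1+\ee^r)$, which is monotone and smooth, so it transports the $q \to 0$ asymptotics to $r \to \infty$ asymptotics without loss.
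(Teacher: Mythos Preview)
Your proposal is correct and follows essentially the same route as the paper: linearize the logarithm via $-\log(1-x)=x+O(x^2)$ and then feed in the tail-regularity asymptotics (the paper bounds directly in terms of $F$ using Lemma~\ref{lem: NIP}, while you pass through $F_\ell,F_h$ via Lemma~\ref{lem: Psi_h_asymptotics}, but this is only a cosmetic difference). Note the sign slip in your displayed equation---it should read $U(r)=\bigl(F_\ell(q)-F_h(q)\bigr)\bigl(1+o(1)\bigr)$, consistent with your subsequent text and with $U(r)>0$.
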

\begin{proof}
  Define $\mu = \frac{1}{1+\ee^r}$. We first propose an upper
  bound on $U$. To this end, note that
  \begin{equation*}
    U(r) = \log\frac{1-F_h(\mu)}{1-F_\ell(\mu)} \leq -\log\left(1-F_\ell(\mu)\right)\,.
\end{equation*}
Due to Lemma~\ref{lem: NIP}, $F_\ell(q)\leq 2 F(q)$, therefore $U(r) \leq -\log\left(1-2F(\mu)\right)$. Since for small enough $x$, one has $-\log(1-x)\leq x+x^2$, then
\begin{equation*}
    U(r) \leq 2F(\mu)\left(1+2F(\mu)\right),
\end{equation*}
thereby establishing an upper bound. 

Before proceeding with a lower bound, we introduce the Landau notations, $o(\cdot)$ and $O(\cdot)$: We say $f(x) = o(g(x))$ if $\lim_{x\to 0}\frac{f(x)}{g(x)}=0$, and $f(x)=O(g(x))$ if $\limsup_{x\to 0}\frac{f(x)}{g(x)}<\infty$. 

To propose a lower bound, observe that because of Lemma~\ref{lem: NIP}, $F_h(q)\leq 2qF(q)$ and $F_\ell(q) \geq 2(1-q)F(q)$, therefore,
\begin{equation*}
    \begin{split}
        \ee^{U(r)} &\geq \frac{1-2\mu F(\mu)}{1-2(1-\mu)F(\mu)}\\
        &\geq \big(1-2\mu F(\mu)\big)\big(1+2(1-\mu)F(\mu)+2(1-\mu)^2F(\mu)^2\big)\\
        &=1+2F(\mu)-4\mu F(\mu)+2\big(1-O(\mu)\big)F(\mu)^2\,.
    \end{split}
\end{equation*}
Since $\log(1+x)\geq x-x^2/2$, then 
\begin{equation*}
    \begin{split}
        U(r) &\geq 2F(\mu)-4\mu F(\mu)+\big(2-O(\mu)\big)F(\mu)^2+o\big(F(\mu)^2\big)\\
        &\geq 2F(\mu)\left(1-2\mu+\frac{3}{2}F(\mu)\right).
    \end{split}
\end{equation*}
The above upper and lower bounds imply that $\lim_{r \to \infty}
\frac{U(r)}{2F(\mu)}=1$. In addition, because of tail regularity
(Assumption~\ref{assum: tail_reg}), it holds that $F(\mu) =
\Theta(\ee^{-\alpha r})$, thereby justifying the lemma's first claim. A similar argument implies that $\widetilde{U}(r) = \Theta(\ee^{-\tilde{\alpha}r})$.
\end{proof}

So far our results sidestepped the role of the prior $\pi$, which
determines the initial value for the sequence $\{\tilde{r}^h_n\}$, and
only looked at the asymptotics as $n\to \infty$. In the next lemma, we
establish a property of this sequence, that will prove useful for
uniform convergence results. We use the notation $\tilde{r}_n^h(\pi)$
to refer to the value of $\tilde{r}_n^h$ when the initial belief was
$\pi$, that is, when $\tilde{r}_1^h =
\log\left(\frac{\pi}{1-\pi}\right)$. The rest of the sequence evolves
according to  \eqref{eq:r_tilde_difference}.
\begin{lemma}
\label{lem: uniformity}
For every $\bar{r}\geq 0$, there exists $n_0$ such that $\tilde{r}^h_n(\pi) \geq \bar{r}$ for all $n \geq n_0$ and importantly for all initial $\pi\geq 1/2$.
\end{lemma}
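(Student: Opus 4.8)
The plan is to leverage the recursion $\tilde r_{n+1}^h = \tilde r_n^h + \widetilde U(\tilde r_n^h)$ from \eqref{eq:r_tilde_difference}, together with the two facts recorded just below it: $\widetilde U$ is positive and decreasing. Positivity gives, for every fixed $\pi$, that the sequence $\{\tilde r_n^h(\pi)\}_n$ is strictly increasing; hence once it reaches a level $\bar r$ it remains at or above $\bar r$ thereafter, and the entire content of the lemma is to bound, uniformly over $\pi \ge 1/2$, the number of steps needed to reach $\bar r$.

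The uniform bound comes from the observation that, while the trajectory stays in $[0,\bar r]$, its increments are bounded below by a fixed positive constant. Concretely, fix $\bar r \ge 0$ and put $\delta := \widetilde U(\bar r) > 0$. Since $\widetilde U$ is decreasing, $\tilde r_n^h(\pi) \le \bar r$ implies $\widetilde U(\tilde r_n^h(\pi)) \ge \widetilde U(\bar r) = \delta$, so $\tilde r_{n+1}^h(\pi) \ge \tilde r_n^h(\pi) + \delta$. For $\pi \ge 1/2$ we have $\tilde r_1^h(\pi) = \log\frac{\pi}{1-\pi} \ge 0$, so if $\tilde r_1^h(\pi), \dots, \tilde r_{k+1}^h(\pi)$ were all strictly below $\bar r$ then $k\delta \le \tilde r_{k+1}^h(\pi) < \bar r$, forcing $k < \bar r/\delta$. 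Therefore, with $n_0 := \ceil{\bar r/\widetilde U(\bar r)} + 1$ --- a quantity depending only on $\bar r$ --- there is some $m \le n_0$ with $\tilde r_m^h(\pi) \ge \bar r$, and monotonicity in $n$ then gives $\tilde r_n^h(\pi) \ge \tilde r_m^h(\pi) \ge \bar r$ for all $n \ge n_0$. Since this holds for every $\pi \ge 1/2$ with the same $n_0$, the lemma follows.

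I do not expect a real obstacle here. The one point worth flagging is the temptation to take a detour: one might first try to prove that $\tilde r_n^h(\pi)$ is monotone in the initial belief $\pi$ --- which would reduce the statement to the single trajectory started at $\pi = 1/2$, itself handled by Lemma~\ref{lem: conv_to_one} --- but establishing such monotonicity requires showing that the misspecified Bayesian update $\tilde\pi_n \mapsto \tilde\pi_{n+1}$ defined by \eqref{eq:misspecified_belief} is increasing, which is an extra and ultimately unnecessary step. The argument above sidesteps it by bounding the increments below $\bar r$ uniformly, so that all starting points $\pi \ge 1/2$ are treated simultaneously. The remaining checks are routine: $\widetilde U(\bar r)$ is a finite positive number (a logarithm of a finite positive quantity, positive by the inequality $\widetilde F_h < \widetilde F_\ell$ on $(0,1/2]$ noted in the proof of Lemma~\ref{lem: conv_to_one}), and the degenerate case $\bar r = 0$ is immediate since $\tilde r_1^h(\pi) \ge 0$ already, so that $n_0 = 1$ works.
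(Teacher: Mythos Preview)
Your argument is correct and, in fact, more elementary than the paper's. You exploit directly that $\widetilde U$ is positive and decreasing: while the trajectory remains in $[0,\bar r]$ every increment is at least $\widetilde U(\bar r)>0$, so the trajectory escapes $[0,\bar r]$ within at most $\ceil{\bar r/\widetilde U(\bar r)}+1$ steps, uniformly in $\pi\ge 1/2$. This even yields an explicit $n_0$.

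The paper takes a different route: it argues by contradiction and compactness. Writing $\Psi(r)=r+\widetilde U(r)$, if the conclusion failed one could find, for every $n$, a starting point in $[0,\bar r]$ whose $n$-th iterate stays below $\bar r$; extracting a convergent subsequence and using continuity of each $\Psi^n$ produces a fixed starting point $r_*\in[0,\bar r]$ with $\Psi^n(r_*)\le\bar r$ for all $n$, contradicting Lemma~\ref{lem: conv_to_one}. This proof uses only positivity and continuity of $\widetilde U$ together with divergence of every individual trajectory, not the monotonicity of $\widetilde U$. So the paper's argument is slightly more robust to the hypotheses (it would go through even if $\widetilde U$ were not monotone), whereas yours is shorter, constructive, and avoids the appeal to Lemma~\ref{lem: conv_to_one} altogether.
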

\begin{proof}
The idea is similar to the proof of Lemma~12 in \cite{Rosenberg_Vieille_ECMA_2019}. Let us introduce the mapping $\Psi(r):=r+\widetilde{U}(r)$, and show its $n$-times composition by $\Psi^{n}$. Hence, one has $\tilde{r}^h_n = \Psi^{n-1}(\tilde{r}_1)$. First observe that since $\widetilde{U}>0$, if $\tilde{r}_1 \geq \bar{r}$, then $\Psi^n(\tilde{r}_1) \geq \bar{r}$. Now assume by contradiction that the conclusion of the lemma does not hold. Then, for every $n\in \BN$, there exists an initial belief $\pi^{(n)}$ such that $\tilde{r}^h_n(\pi^{(n)}) \leq \bar{r}$. Also, one has $\Psi^{m-1}\big(\tilde{r}_1(\pi^{(n)})\big)=\tilde{r}^h_m(\pi^{(n)}) \leq \bar{r}$ for all $m\leq n$. Since the interval $[0,\bar{r}]$ is compact, there is a subsequence of initial values $\{\tilde{r}_1(\pi^{(n)})\}$, that we index by $k$, which is converging to $r_* \in [0,\bar{r}]$. Since the mapping $\Psi^{n}$ is continuous for every fixed $n$, then one has
\begin{equation*}
    \Psi^n(r^*) = \lim_{k\to \infty} \Psi^n\big(\tilde{r}_1(\pi^{(k)})\big)\leq \bar{r}\,.
\end{equation*}
The above inequality holds for every $n$, hence it leads to a contradiction, because for every initial prior $\pi>0$, the induced sequence $\{\tilde{r}^h_n\}$ increases to infinity (this follows from Lemma~\ref{lem: conv_to_one}).
\end{proof}

In the next two lemmas we calculate the asymptotic behavior of $\tilde
r_n^h$ and show that $\ee^{\tilde r_n^h}=\Theta(n^{1/\tilde \alpha})$. We
first establish a lower bound for $\tilde{r}^h_n$. We achieve this by
introducing a lower bound for the increments of $\tilde{r}^h_n$ in
\eqref{eq:r_tilde_difference}. We then approximate the
resulting lower envelope with the solution to a differential
equation.
\begin{lemma}[Lower Envelope]
\label{lem: r_tilde_lower_bound}
The misspecified public log-likelihood satisfies
\begin{equation}
\label{eq: r_tilde_lower_bound}
   \liminf_{n\to \infty} \frac{\ee^{\tilde{r}^h_n}}{n^{1/\tilde{\alpha}}}>0\,.
\end{equation}
\end{lemma}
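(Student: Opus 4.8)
The plan is to bound the increments $\widetilde U(\tilde r_n^h)$ from below using Lemma~\ref{lem: U_F_approx}, and then compare the resulting discrete recursion to the differential equation $\dot r = c\,\ee^{-\tilde\alpha r}$ whose solution grows like $\tilde\alpha^{-1}\log(1+\tilde\alpha c t)$, yielding $\ee^{r(t)} = \Theta(t^{1/\tilde\alpha})$. Concretely, by Lemma~\ref{lem: U_F_approx} there is a constant $c>0$ and a threshold $\bar r$ such that $\widetilde U(r) \geq c\,\ee^{-\tilde\alpha r}$ for all $r \geq \bar r$. By Lemma~\ref{lem: conv_to_one} (which gives $\tilde r_n^h \to \infty$) there is $n_1$ with $\tilde r_n^h \geq \bar r$ for all $n \geq n_1$, so for $n \geq n_1$,
\begin{align*}
  \tilde r_{n+1}^h \;\geq\; \tilde r_n^h + c\,\ee^{-\tilde\alpha \tilde r_n^h}.
\end{align*}

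First I would pass to the variable $x_n := \ee^{\tilde\alpha \tilde r_n^h}$, which is increasing. From the increment inequality and the mean value theorem applied to $t \mapsto \ee^{\tilde\alpha t}$ on $[\tilde r_n^h, \tilde r_{n+1}^h]$, one gets $x_{n+1} - x_n \geq \tilde\alpha\, \ee^{\tilde\alpha \tilde r_n^h}(\tilde r_{n+1}^h - \tilde r_n^h) = \tilde\alpha\, x_n \cdot c\,\ee^{-\tilde\alpha \tilde r_n^h} = \tilde\alpha c$ (using convexity of $\ee^{\tilde\alpha t}$, the derivative at the left endpoint is the smallest). Hence $x_n \geq x_{n_1} + \tilde\alpha c\,(n - n_1)$ for all $n \geq n_1$, i.e. $\ee^{\tilde\alpha \tilde r_n^h} \geq \tilde\alpha c\, n - C$ for a constant $C$, which immediately gives $\liminf_n \ee^{\tilde r_n^h}/n^{1/\tilde\alpha} \geq (\tilde\alpha c)^{1/\tilde\alpha} > 0$. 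That is the whole argument; it is short once the substitution $x_n = \ee^{\tilde\alpha\tilde r_n^h}$ is in place, because that substitution converts the "logarithmically slow" additive recursion into a recursion with increments bounded below by a constant.

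The main obstacle — really the only subtlety — is handling the transition between the small-$r$ regime, where Lemma~\ref{lem: U_F_approx} gives no control on $\widetilde U$, and the large-$r$ regime where the bound $\widetilde U(r) \geq c\,\ee^{-\tilde\alpha r}$ kicks in. This is dispatched by invoking Lemma~\ref{lem: conv_to_one} to produce the finite index $n_1$ beyond which $\tilde r_n^h \geq \bar r$; since we only need a $\liminf$ statement, the finitely many early terms are irrelevant, and the additive constant $C$ absorbing $x_{n_1} - \tilde\alpha c\, n_1$ does not affect the limit. (If one wanted the bound uniformly over initial priors $\pi \geq 1/2$, Lemma~\ref{lem: uniformity} supplies a choice of $n_1$ that works for all such $\pi$ simultaneously, though that is not needed for the statement as written.) One minor care point: the convexity/mean-value step should use that $\ee^{\tilde\alpha t}$ has increasing derivative so that $\ee^{\tilde\alpha \tilde r_{n+1}^h} - \ee^{\tilde\alpha \tilde r_n^h} \geq \tilde\alpha \ee^{\tilde\alpha \tilde r_n^h}(\tilde r_{n+1}^h - \tilde r_n^h)$, which is exactly the direction we want — this is elementary and I would state it in one line.
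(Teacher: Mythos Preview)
Your proof is correct and takes a somewhat different, more elementary route than the paper. Both arguments start identically: use Lemma~\ref{lem: U_F_approx} to get $\widetilde U(r)\geq c\,\ee^{-\tilde\alpha r}$ for large $r$, then invoke $\tilde r_n^h\to\infty$ to enter that regime from some $n_1$ on. From there the paper compares the discrete recursion to the continuous ODE $\dot z=c\,\ee^{-\tilde\alpha z}$, writes down its explicit solution $z(t)=\tilde\alpha^{-1}\log(\kappa+c\tilde\alpha t)$, and runs an induction showing $\tilde r_n^h\geq z(n)$; that induction needs the auxiliary fact that $z\mapsto z+c\,\ee^{-\tilde\alpha z}$ is increasing on a suitable half-line. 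Your substitution $x_n=\ee^{\tilde\alpha\tilde r_n^h}$ is exactly the change of variables that linearizes that ODE, but you carry it out directly at the discrete level: the convexity inequality $\ee^{\tilde\alpha b}-\ee^{\tilde\alpha a}\geq \tilde\alpha\,\ee^{\tilde\alpha a}(b-a)$ turns the increment bound into $x_{n+1}-x_n\geq \tilde\alpha c$, and a telescoping sum finishes. This bypasses both the ODE comparison and the monotonicity check on $z\mapsto z+c\,\ee^{-\tilde\alpha z}$, so it is shorter and has fewer moving parts. The paper's ODE framing, on the other hand, makes the heuristic $\tilde r_n^h\approx \tilde\alpha^{-1}\log(1+\tilde\alpha n)$ transparent and is reused verbatim (with reversed inequalities and a factor-of-two trick) in the Upper Envelope lemma; your substitution would handle the upper bound too, but would require bounding the increment $\tilde r_{n+1}^h-\tilde r_n^h$ from above (e.g.\ via $\widetilde U(r)\leq C\,\ee^{-\tilde\alpha r}$) together with a crude bound like $x_{n+1}-x_n\leq \tilde\alpha\,\ee^{\tilde\alpha\tilde r_{n+1}^h}(\tilde r_{n+1}^h-\tilde r_n^h)$, which is a little less immediate.
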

\begin{proof}
By Lemma~\ref{lem: U_F_approx}, there exists $c>0$ such that for all
sufficiently large $n$ (say $n \geq \bar{n}$), one has
$\widetilde{U}(\tilde{r}^h_n)\geq c\,
\ee^{-\tilde{\alpha}\tilde{r}^h_n}$. Additionally, observe that the
mapping $z \mapsto z+c\, \ee^{-\tilde{\alpha} z}$ is increasing for all
sufficiently large $z$ (say $z\geq \bar{z}$). Since $\tilde{r}^h_n \to
\infty$, one can choose $N \geq \bar{n}$ such that $\tilde{r}_N \geq
\bar{z}$. For all $n\geq N$ it holds that
\begin{equation}
\label{eq: r_tilde_lower_increment}
    \tilde{r}^h_{n+1}-\tilde{r}^h_n = \widetilde{U}(\tilde{r}^h_n) \geq c\,\ee^{-\tilde{\alpha}\tilde{r}^h_n}\,.
\end{equation}
We show that this discrete time equation can be bounded from below by the following differential equation:
\begin{equation*}
    \frac{\d z(t)}{\d t} = c\,\ee^{-\tilde{\alpha} z(t)}\,.
\end{equation*}
This equation has the solution form $z(t) =
\tilde{\alpha}^{-1}\log\big(\kappa+c\tilde{\alpha} t\big)$, where the
initial condition parameter $\kappa$ is chosen so that at $n=N$, we have $z(N) =
\tilde{r}^h_{N}$. Next, we inductively show $\tilde{r}^h_n \geq z(n)$ for
all $n\geq N$, which in turn establishes the claim in \eqref{eq:
  r_tilde_lower_bound}. The base step holds by definition. Suppose the
claim also holds at some $n>N$, i.e., $\tilde{r}_n^h\geq z(n)$. Then, observe that because of the
mean value theorem, there exists $t\in [n,n+1]$ such that
\begin{equation*}
    z(n+1)-z(n) = c\, \ee^{-\tilde{\alpha} z(t)} \leq c\, \ee^{-\tilde{\alpha} z(n)}\,,
\end{equation*}
where the inequality follows because $z(t)$ is increasing. Therefore, one has
\begin{equation*}
    \begin{split}
        z(n+1) &\leq z(n)+c\, \ee^{-\tilde{\alpha} z(n)} \\
        &\leq \tilde{r}^h_n+c\, \ee^{-\tilde{\alpha} \tilde{r}^h_n} \leq \tilde{r}^h_{n+1}\,.
    \end{split}
\end{equation*}
The second inequality holds because $z \mapsto z+c\, \ee^{-\tilde{\alpha} z}$ is increasing for $z\geq \bar{z}$, and $z(n) \geq \bar{z}$ for $n\geq N$. The third inequality holds because of \eqref{eq: r_tilde_lower_increment}. This justifies the claim in \eqref{eq: r_tilde_lower_bound}.
\end{proof}

The next lemma posits an upper bound for the increments of
$\tilde{r}^h_n$. Its proof strategy is similar to that of the previous
lemma, with some additional technical considerations.
\begin{lemma}[Upper Envelope]
\label{lem: r_tilde_upper_bound}
The misspecified public log-likelihood satisfies
\begin{equation}
\label{eq: r_tilde_upper_bound}
   \limsup_{n\to \infty} \frac{\ee^{\tilde{r}_n}}{n^{1/\tilde{\alpha}}}<\infty\,.
\end{equation}
\end{lemma}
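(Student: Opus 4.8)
The plan is to mirror the lower-envelope argument of Lemma~\ref{lem: r_tilde_lower_bound}, but now bounding the increments of $\tilde r_n^h$ from \emph{above} and comparing with the solution of a differential equation from above. By Lemma~\ref{lem: U_F_approx}, there is a constant $C>0$ such that $\widetilde U(r)\leq C\ee^{-\tilde\alpha r}$ for all $r$ large enough; since $\tilde r_n^h\to\infty$ (Lemma~\ref{lem: conv_to_one}), we may fix $N$ so that for all $n\geq N$ the increment satisfies $\tilde r_{n+1}^h-\tilde r_n^h=\widetilde U(\tilde r_n^h)\leq C\ee^{-\tilde\alpha\tilde r_n^h}$. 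I would then consider the differential equation $\dot z(t)=C\ee^{-\tilde\alpha z(t)}$ with solution $z(t)=\tilde\alpha^{-1}\log(\kappa+C\tilde\alpha t)$, choosing the constant $\kappa$ so that $z(N)\geq\tilde r_N^h$. The goal is to prove inductively that $\tilde r_n^h\leq z(n)$ for all $n\geq N$, which immediately yields $\ee^{\tilde r_n^h}\leq(\kappa+C\tilde\alpha n)^{1/\tilde\alpha}=\Theta(n^{1/\tilde\alpha})$ and hence \eqref{eq: r_tilde_upper_bound}.

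The induction step is where the ``additional technical considerations'' enter, and this is the main obstacle. In the lower-envelope proof one used that $z\mapsto z+c\ee^{-\tilde\alpha z}$ is increasing for large $z$; the naive analogue here would be to write, from $\tilde r_n^h\leq z(n)$,
\begin{equation*}
  \tilde r_{n+1}^h\leq \tilde r_n^h+C\ee^{-\tilde\alpha\tilde r_n^h}\leq z(n)+C\ee^{-\tilde\alpha z(n)},
\end{equation*}
and then compare with $z(n+1)=z(n)+\int_n^{n+1}C\ee^{-\tilde\alpha z(t)}\d t$. But the map $z\mapsto z+C\ee^{-\tilde\alpha z}$ need not be monotone in the direction we want, and worse, the second inequality above goes the wrong way: $\ee^{-\tilde\alpha z}$ is \emph{decreasing}, so $\tilde r_n^h\leq z(n)$ gives $\ee^{-\tilde\alpha\tilde r_n^h}\geq \ee^{-\tilde\alpha z(n)}$, not $\leq$. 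The fix is to run the comparison with a bit of slack: one shows instead, by the mean value theorem, that $z(n+1)-z(n)=C\ee^{-\tilde\alpha z(t_n)}$ for some $t_n\in[n,n+1]$, and since $z$ is increasing this is $\geq C\ee^{-\tilde\alpha z(n+1)}\geq C\ee^{-\tilde\alpha(z(n)+C\ee^{-\tilde\alpha z(n)})}$. I would therefore enlarge $\kappa$ (equivalently shift $z$ upward by a fixed constant) to absorb the multiplicative gap $\ee^{\tilde\alpha C\ee^{-\tilde\alpha z(n)}}$, which is bounded and tends to $1$; concretely, pick $N$ and $\kappa$ so large that $C\ee^{-\tilde\alpha z(t)}\geq \tfrac{1}{2}C\ee^{-\tilde\alpha z(n)}$ for all $t\in[n,n+1]$ and then argue that the discrete increments, being at most $C\ee^{-\tilde\alpha\tilde r_n^h}$, stay below the continuous ones once the initial offset is chosen correctly.

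An alternative, perhaps cleaner route that avoids the monotonicity subtlety altogether is to work directly with $x_n:=\ee^{\tilde\alpha\tilde r_n^h}$. From $\tilde r_{n+1}^h-\tilde r_n^h\leq C\ee^{-\tilde\alpha\tilde r_n^h}=C/x_n$ we get $x_{n+1}=x_n\ee^{\tilde\alpha(\tilde r_{n+1}^h-\tilde r_n^h)}\leq x_n\ee^{\tilde\alpha C/x_n}$, and since $x_n\to\infty$ we have $\ee^{\tilde\alpha C/x_n}\leq 1+2\tilde\alpha C/x_n$ for $n$ large, so $x_{n+1}\leq x_n+2\tilde\alpha C$. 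Summing this telescoping bound gives $x_n\leq x_N+2\tilde\alpha C(n-N)=O(n)$, i.e. $\ee^{\tilde r_n^h}=\ee^{\tilde r_n^h}=x_n^{1/\tilde\alpha}=O(n^{1/\tilde\alpha})$, which is exactly \eqref{eq: r_tilde_upper_bound}. I would present this second argument as the main line, since it reduces the whole lemma to an elementary telescoping estimate and sidesteps the delicate ODE comparison; the expected obstacle there is merely making precise the ``for $n$ large'' thresholds coming from $\tilde r_n^h\to\infty$ and from Lemma~\ref{lem: U_F_approx}.
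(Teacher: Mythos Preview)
Your second route---setting $x_n:=\ee^{\tilde\alpha\tilde r_n^h}$ and telescoping $x_{n+1}\leq x_n+2\tilde\alpha C$---is correct and is genuinely simpler than the paper's argument. The paper stays with the ODE comparison and resolves the slack problem by a doubling trick: it takes the envelope $\dot z=2C\ee^{-\tilde\alpha z}$ (not $C$), so that by the mean value theorem $z(n+1)-z(n)=2C\ee^{-\tilde\alpha z(t)}\geq 2C\ee^{-\tilde\alpha z(n+1)}$, and then uses the elementary identity $2\ee^{-\tilde\alpha z(n+1)}\geq \ee^{-\tilde\alpha z(n)}$ (valid for this explicit $z$ with any $\kappa>0$) to conclude $z(n+1)-z(n)\geq C\ee^{-\tilde\alpha z(n)}$. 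From there the induction closes exactly as in the lower envelope, using monotonicity of $z\mapsto z+C\ee^{-\tilde\alpha z}$. Your telescoping argument buys a shorter, ODE-free proof; the paper's version keeps the parallel with Lemma~\ref{lem: r_tilde_lower_bound} and makes the ``factor of $2$'' the only new ingredient.

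One correction to your diagnosis of the obstacle in the first route: the map $z\mapsto z+C\ee^{-\tilde\alpha z}$ \emph{is} increasing for large $z$ (its derivative is $1-C\tilde\alpha\ee^{-\tilde\alpha z}$), so the step $\tilde r_n^h+C\ee^{-\tilde\alpha\tilde r_n^h}\leq z(n)+C\ee^{-\tilde\alpha z(n)}$ is actually fine once both arguments exceed the threshold. The real failure is the \emph{next} comparison: with $\dot z=C\ee^{-\tilde\alpha z}$ one only gets $z(n+1)-z(n)\leq C\ee^{-\tilde\alpha z(n)}$, the wrong direction. Your proposed fix (enlarging $\kappa$ so that $\ee^{-\tilde\alpha z(t)}\geq\tfrac12\ee^{-\tilde\alpha z(n)}$ on $[n,n+1]$) only yields $z(n+1)-z(n)\geq\tfrac12 C\ee^{-\tilde\alpha z(n)}$, still short by a factor of two; the paper's fix is precisely to double the ODE coefficient, which your telescoping approach sidesteps entirely.
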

\begin{proof}
As in the proof of the previous lemma---but changing the direction of
the inequalities---there exists $C>0$ and $\bar{n}$ such that for all
$n \geq \bar{n}$ one has $\widetilde{U}(\tilde{r}^h_n)\leq C
\ee^{-\tilde{\alpha}\tilde{r}^h_n}$. Likewise, the mapping $z \mapsto
z+C \ee^{-\tilde{\alpha} z}$ is increasing for all $z\geq \bar{z}$, and
we can choose $N \geq \bar{n}$ such that $\tilde{r}^h_N \geq
\bar{z}$. Then for all $n\geq N$ it holds that
\begin{equation}
\label{eq: r_tilde_upper_increment}
    \tilde{r}^h_{n+1}-\tilde{r}^h_n = \widetilde{U}(\tilde{r}^h_n) \leq C\ee^{-\tilde{\alpha}\tilde{r}^h_n}\,.
\end{equation}
Take the following differential equation as an upper envelope for the above difference equation:
\begin{equation*}
        \frac{\d z(t)}{\d t} = 2C\ee^{-\tilde{\alpha} z(t)}\,,
\end{equation*}
with the solution form $z(t) = \tilde{\alpha}^{-1}\log\big(\kappa+2C\tilde{\alpha} t\big)$. Observe that for all $\kappa>0$ and $n\geq 1$ one has
\begin{equation}
\label{eq: kappa_tweak}
    2 \ee^{-\tilde{\alpha} z(n+1)} \geq \ee^{-\tilde{\alpha} z(n)}.
\end{equation}
Therefore, one can choose $\kappa$ large enough such that $z(N) \geq \tilde{r}^h_N$. Next, we inductively show $\tilde{r}^h_n \leq z(n)$ for all $n\geq N$, which in turn establishes the claim in \eqref{eq: r_tilde_upper_bound}. The base step holds by definition. Suppose the claim also holds at some $n>N$. Then, observe that because of the mean value theorem, there exists $t\in [n,n+1]$ such that 
\begin{equation*}
    z(n+1)-z(n) = 2C \ee^{-\tilde{\alpha} z(t)} \geq 2C \ee^{-\tilde{\alpha} z(n+1)} \geq C \ee^{-\tilde{\alpha} z(n)}\,,
\end{equation*}
where the first inequality above holds because $z(t)$ is increasing and the second inequality follows from \eqref{eq: kappa_tweak}. Since $z\mapsto z+C \ee^{-\tilde{\alpha} z}$ is increasing for all $z \geq \bar{z}$, and $n\geq N$, then $z(n) \geq \tilde{r}^h_n \geq \bar{z}$ implies that
\begin{equation*}
    z(n+1) \geq z(n)+C\ee^{-\tilde{\alpha} z(n)} \geq \tilde{r}^h_n +C\ee^{-\tilde{\alpha} \tilde{r}^h_n} \geq \tilde{r}^h_{n+1}\,,
\end{equation*}
where the last inequality follows from equation \eqref{eq: r_tilde_upper_increment}. This concludes the induction and thus establishes the asymptotic upper bound for $\ee^{\tilde{r}^h_n}$ in \eqref{eq: r_tilde_upper_bound}.
\end{proof}
The previous two lemmas jointly imply that
$\ee^{\tilde{r}^h_n}=\Theta(n^{1/\tilde{\alpha}})$. Importantly this
holds regardless of the initial belief $\pi$ (i.e., the initial level
$\tilde{r}_1$). Of course, the implied constants may depend on $\pi$.
\section{Characterization of Asymptotic Learning}

\label{app:characterization}
In Section~\ref{sec:asympt_learning_immed_herding} we drew a connection between asymptotic learning and immediate herding. In this section we formalize this, establishing necessary and sufficient conditions for asymptotic learning in terms of immediate herding. Note that the results of this section, Lemmas~\ref{lem:asympt_learning_nec_condition} and~\ref{lem:asympt_learning_suff_condition} do not require tail-regularity and apply more broadly. 

The first lemma states that asymptotic learning in the high state implies that immediate herding on the high action happens with positive probability for some prior $\pi'$, and immediate herding on the low action cannot occur.
\begin{lemma}[Necessary condition]
\label{lem:asympt_learning_nec_condition}
Assume $a_n\to h$, $\BP_h$-almost surely. Then, the following two conditions hold:
\begin{enumerate}[label = (\roman*)]
    \item $\exists\, \pi'<1$ such that $\BP_{\pi',h}\left(\bar a=h\right)>0$,
    \item $\BP_h\left(\bar a=\ell\right)=0$.
\end{enumerate}
\end{lemma}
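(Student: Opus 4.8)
For part (ii), I would simply observe that on the event $\bar a=\ell$ every agent plays $\ell$, so $\{\bar a=\ell\}\subseteq\{a_n\to\ell\}$, and this last event is disjoint from $\{a_n\to h\}$; since $\BP_h(a_n\to h)=1$ by hypothesis, $\BP_h(\bar a=\ell)=0$.

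For part (i), the plan is to make precise the slogan that a good herd which forms eventually must, by stationarity, already be able to form at time $1$ under a suitable prior. Set $A_N:=\{a_n=h\text{ for all }n\ge N\}$, so that $\{a_n\to h\}=\bigcup_{N\ge 1}A_N$ with $A_N\subseteq A_{N+1}$; continuity of measure together with the hypothesis then gives $\BP_h(A_N)>0$ for some fixed $N$ (in fact only $\BP_h(a_n\to h)>0$ is used). Next I would condition on the length-$(N-1)$ history: from $\BP_h(A_N)=\sum_{b\in\{\ell,h\}^{N-1}}\BP_h\bigl((a_1,\dots,a_{N-1})=b\bigr)\,\BP_h\bigl(A_N\mid(a_1,\dots,a_{N-1})=b\bigr)$ and positivity of the left side, one may pick a history $b$ for which both factors are strictly positive. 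Conditioned on $(a_1,\dots,a_{N-1})=b$, the equations of motion \eqref{eq:misspecified_belief_l} and \eqref{eq:misspecified_belief} determine $\tilde\pi_N$ to be a fixed number $\tilde\pi^\ast$, and $\tilde\pi^\ast\in(0,1)$ because those updates multiply the likelihood ratio by a strictly positive finite factor and hence preserve the open interval.

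The crux is then the identity $\BP_h\bigl(A_N\mid(a_1,\dots,a_{N-1})=b\bigr)=\BP_{\tilde\pi^\ast,h}(\bar a=h)$. On $A_N$ the beliefs $\tilde\pi_N,\tilde\pi_{N+1},\dots$ trace the deterministic path generated by \eqref{eq:misspecified_belief} from $\tilde\pi^\ast$; conditionally on $\theta=h$ the signals $q_N,q_{N+1},\dots$ are i.i.d.\ with law $F_h$ and independent of $(a_1,\dots,a_{N-1})$, so the same conditional-independence product computation used for $\BP_h(\bar a=h)$ in Section~\ref{sec:asympt_learning_immed_herding} gives $\BP_h\bigl(A_N\mid(a_1,\dots,a_{N-1})=b\bigr)=\prod_{n\ge N}\bigl(1-F_h(1-\tilde\pi_n^h)\bigr)$, and reindexing this deterministic path to start at time $1$ from prior $\tilde\pi^\ast$ rewrites the right-hand side as $\BP_{\tilde\pi^\ast,h}(\bar a=h)$ --- which is exactly the content of the Stationarity lemma applied conditionally on the state. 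Hence $\BP_{\tilde\pi^\ast,h}(\bar a=h)>0$, and $\pi':=\tilde\pi^\ast<1$ does the job. The one place that needs care, and the only real obstacle I foresee, is checking that stationarity survives conditioning on $\theta=h$: given the state, the post-$N$ signals are i.i.d.\ and independent of $a_1,\dots,a_{N-1}$, and the thresholds driving $a_N,a_{N+1},\dots$ depend on the past only through $\tilde\pi_N$, so the continuation law is that of a fresh process started from prior $\tilde\pi_N$. Everything else is routine bookkeeping.
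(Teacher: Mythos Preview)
Your proof is correct and follows essentially the same route as the paper: decompose $\{a_n\to h\}$ according to the time after which only high actions occur, condition on the past to reduce to a fresh process started at the resulting misspecified public belief, and invoke stationarity (conditionally on the state) to identify the continuation probability with $\BP_{\tilde\pi^\ast,h}(\bar a=h)$. The paper organizes the decomposition via the last-incorrect-action time $\sigma$ and conditions on $\tilde\pi_{\sigma+1}$ (using the overturning principle to note $\tilde\pi_{\sigma+1}<1/2$), whereas you condition directly on a fixed finite history; this is a cosmetic difference, and your explicit justification of state-conditional stationarity is exactly the point that needs checking.
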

\begin{proof}
Condition on $\theta=h$, and let $\sigma$ be the random time of the last incorrect action, which has to be finite, because $a_n \to h$. Since $a_\sigma=\ell$, then it must be that $\tilde{\pi}_{\sigma+1}<1/2$, by the overturning principle. Therefore,
\begin{align*}
   1 = \BP_h\left(a_n \to h\right) 
   &= \sum_{k=0}^\infty \BP_h\left(\sigma=k\right)\\
   &= \sum_{k=0}^\infty \BP_h\left(a_m=h \, \forall m>k, \tilde\pi_{k+1}<1/2\right).
\end{align*}
Applying the law of total expectations, this is equal to
\begin{equation}
\label{eq:a_to_h_conv}
\begin{split}
   &= \sum_{k=0}^\infty \BE_h\left[\BP_h\left(a_m=h \,\forall m>k, \tilde\pi_{k+1}<1/2\ |\ \tilde\pi_{k+1}\right)\right]\\
   &= \sum_{k=0}^\infty \BE_h\left[\BP_h\big(a_m=h \,\forall m>k \ | \ \tilde\pi_{k+1}\big)\mathbbm{1}_{\{\tilde\pi_{k+1}<1/2\}}\right]\\
   &=\sum_{k=0}^\infty \BE_h\left[\BP_{\tilde\pi_{k+1},h}\big(\bar a =h\big)\mathbbm{1}_{\{\tilde\pi_{k+1}<1/2\}}\right]\,,
\end{split}
\end{equation}
where the last equality is an application of stationarity. To show (\rn{1}), assume by contradiction that $\BP_{\pi',h}\left(\bar a=h\right)=0$ for every $\pi' \in [0,1)$. Then the right hand side is equal to zero, thereby resulting in a contradiction. 

Since the event $a_n\to h$ is disjoint from the event $\bar a=\ell$, the assumption that the former happens with probability one implies that the latter has probability zero, and thus we have shown (\rn{2}).
\end{proof}

The next lemma shows that asymptotic learning in the high state is implied by uniformly positive probability (over priors at least one half) for immediate herding on the high action, and zero probability (for any prior at least one half) for immediate herding on the low action.
\begin{lemma}[Sufficient condition]
\label{lem:asympt_learning_suff_condition}
The following two conditions are sufficient for $a_n\to h$, $\BP_h$-almost surely:
\begin{enumerate}[label = (\roman*)]
    \item $\inf_{\pi' \geq 1/2} \BP_{\pi',h}\left(\bar a=h\right)>0$,
    \item $\BP_{\pi',h}\left(\bar a=\ell\right)=0$ for all $\pi' > 0$.
\end{enumerate}
\end{lemma}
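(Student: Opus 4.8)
The plan is to establish that, conditioned on $\theta=h$, the event $\{a_n\to h\}$ has probability one by ruling out its complement. That complement is the disjoint union of $\{a_n\to\ell\}$, the event that the low action is eventually taken forever, and $C$, the event that both actions are taken infinitely often; I would dispose of each separately, using hypothesis~(\rn{2}) for the first and hypothesis~(\rn{1}) for the second.

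For $\{a_n\to\ell\}$: fix $k$ and condition on $a_1,\dots,a_{k-1}$. The public belief is then some $\tilde\pi_k\in(0,1)$, and since the process $(\tilde\pi_n)$ is a time-homogeneous Markov chain also under $\BP_h$, stationarity gives $\BP_h(a_k=a_{k+1}=\cdots=\ell\mid a_1,\dots,a_{k-1})=\BP_{\tilde\pi_k,h}(\bar a=\ell)$, which vanishes by hypothesis~(\rn{2}) because $\tilde\pi_k>0$. A union over $k$ yields $\BP_h(a_n\to\ell)=0$; in particular the high action is taken infinitely often, $\BP_h$-almost surely.

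For $C$: set $\delta:=\inf_{\pi'\ge 1/2}\BP_{\pi',h}(\bar a=h)>0$ (hypothesis~(\rn{1})) and $\mathcal F_n:=\sigma(a_1,\dots,a_n)$, and define an adaptively thinned sequence of times: let $R_1$ be the first time the high action is taken, and, given $R_j<\infty$, set $R_{j+1}=\infty$ if $a_m=h$ for all $m>R_j$ and otherwise let $R_{j+1}$ be the first high-action time after the first low-action time following $R_j$. Writing $G_j$ for the event that $R_j<\infty$ and $a_m=h$ for all $m>R_j$ (a good herd starts at $R_j$), the construction gives $\{R_{j+1}<\infty\}\subseteq\{R_j<\infty\}\cap G_j^c$. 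On $\{R_j=n\}\in\mathcal F_n$ we have $a_n=h$, so the overturning principle gives $\tilde\pi_{n+1}\ge 1/2$, and stationarity gives $\BP_h(G_j\mid\mathcal F_n)=\BP_{\tilde\pi_{n+1},h}(\bar a=h)\ge\delta$; summing over $n$ gives $\BP_h(G_j)\ge\delta\,\BP_h(R_j<\infty)$, and hence
\begin{align*}
  \BP_h(R_{j+1}<\infty)\ \le\ \BP_h\bigl(\{R_j<\infty\}\cap G_j^c\bigr)\ \le\ (1-\delta)\,\BP_h(R_j<\infty).
\end{align*}
Iterating from $\BP_h(R_1<\infty)\le 1$ gives $\BP_h(R_j<\infty)\le(1-\delta)^{j-1}\to 0$, so $\BP_h(R_j<\infty\ \forall j)=0$. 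But on $C$ every $R_j$ is finite — after $R_j$ there is a later low action (since $\ell$ occurs infinitely often) and then a later high action (since $h$ occurs infinitely often) — so $C\subseteq\{R_j<\infty\ \forall j\}$ and $\BP_h(C)=0$. Together with the previous step this gives $\BP_h(a_n\to h)=1$.

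I expect the main obstacle to be the identity $\BP_h(G_j\mid\mathcal F_n)=\BP_{\tilde\pi_{n+1},h}(\bar a=h)$, which is essentially the Markov property of the public-belief process applied at the random times $R_j$. I would make it rigorous by decomposing over the deterministic events $\{R_j=n\}\in\mathcal F_n$ rather than conditioning on $\mathcal F_{R_j}$, and by writing $\{\bar a=h\}=\bigcap_k\{a_1=\cdots=a_k=h\}$ so as to reduce to the finite-horizon form of the Stationarity Lemma (applied conditionally on the state). A related point worth spelling out is why the thinning is necessary: using \emph{all} high-action times would produce good-herd events that are nested rather than sufficiently independent, giving only $\BP_h(C)\le 1-\delta$; resetting $R_{j+1}$ only after a low action has actually been observed is what produces the geometric decay.
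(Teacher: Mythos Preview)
Your proposal is correct and follows essentially the same strategy as the paper: first use condition~(\rn{2}) with stationarity to rule out $a_n\to\ell$, then use condition~(\rn{1}) to show that after each ``restart'' (a high action following a low one, where the overturning principle guarantees $\tilde\pi\ge 1/2$) a permanent good herd begins with probability at least $\delta$, yielding geometric decay in the number of bad runs. Your construction of the thinned times $R_j$ and the recursion $\BP_h(R_{j+1}<\infty)\le(1-\delta)\BP_h(R_j<\infty)$ makes explicit what the paper states more informally (``the probability of having $m$ bad runs is at most $(1-\delta)^m$''), and your decomposition over $\{R_j=n\}\in\mathcal F_n$ is exactly the right way to rigorize the strong-Markov step the paper takes for granted.
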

\begin{proof}
To show asymptotic learning, we first rule out  convergence to the wrong action, that is we claim $\BP_h(a_n \to \ell)=0$. Let $\sigma$ be the last time that agents take the correct action $h$, hence $\tilde{\pi}_{\sigma+1} \geq 1/2$. Then, $a_n\to \ell$ iff $\sigma <\infty$. Therefore, applying the same logic of equation~\eqref{eq:a_to_h_conv} leads to
\begin{equation*}
    \BP_h(a_n \to \ell) = \sum_{k=0}^\infty \BP_h\left(\sigma =k\right) = \sum_{k=0}^\infty \BE_h\left[\BP_{\tilde\pi_{k+1},h}\big(\bar a =\ell\big)\mathbbm{1}_{\{\tilde\pi_{k+1}\geq1/2\}}\right].
\end{equation*}
Since $\BP_{\pi',h}\left(\bar a=\ell\right)=0$ for all $\pi'>0$ the above expression implies that $\BP_h(a_n \to \ell)=0$. 

As we have shown that $a_n$ does not converge to $ \ell$, it follows that the sequence of actions $a_n$ has some number of bad runs:  consecutive agents who take the wrong action, flanked by agents who take the correct action. To show asymptotic learning it suffices to show that the number of bad runs is finite. Let 
\begin{equation*}
    \delta = \inf_{\pi'\geq \frac{1}{2}}\, \BP_{\pi',h}\left(\bar a=h\right)\,.
\end{equation*}
At the end of a bad run the next action is $h$, and so the misspecified public belief is at least $1/2$. Hence, by stationarity, there is a chance of at least $\delta$ of never having another bad run. Since signals are independent conditioned on the state, this implies that the probability of having $m$ bad runs is at most $(1-\delta)^m$. In particular, the probability of infinitely many bad runs is zero.
\end{proof}

\section{Proof of Theorem~\ref{thm:main_thm}}
\label{app:proof}

We divide the proof of Theorem~\ref{thm:main_thm} into two: Proposition~\ref{prop:asymptotic_learning} characterizes asymptotic learning, and Proposition~\ref{prop:efficient_learning} characterizes efficient learning. Jointly, they imply the theorem.

\subsection{Parametric Characterization for Asymptotic Learning}
\label{subs:parametric_asympt_learning}
In this section we characterize the range of condescension where asymptotic learning is achieved.
\begin{proposition}
\label{prop:asymptotic_learning}
The following are equivalent:
\begin{enumerate}[label = (\roman*)]
    \item Asymptotic learning;
    \item $\tilde\alpha-\alpha \in [0,1)$.
\end{enumerate}
\end{proposition}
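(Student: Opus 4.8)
The plan is to reduce asymptotic learning to the two immediate-herding conditions of Lemmas~\ref{lem:asympt_learning_nec_condition} and~\ref{lem:asympt_learning_suff_condition}, and then to compute, as a function of $\tilde\alpha-\alpha$, the probability of an immediate good herd and an immediate bad herd \emph{in the high state}. By Assumption~\ref{assum: symmetry} the model is invariant under relabeling the states, so asymptotic learning (equivalently, $a_n\to\theta$ $\BP$-a.s.) holds if and only if $a_n\to h$ holds $\BP_h$-almost surely; hence it suffices to argue in the high state throughout. Concretely, I will compute $\BP_{\pi',h}(\bar a=h)$ and $\BP_{\pi',h}(\bar a=\ell)$ for an arbitrary prior $\pi'$, using the deterministic evolution of $\tilde\pi_n$ on the two herding paths together with the envelope estimates of Section~B.

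\emph{Immediate good herd.} Conditioned on $\bar a=h$ the public belief is deterministic, $\tilde\pi_n=\tilde\pi_n^h$, and since the $q_n$ are i.i.d.\ given $\theta=h$ and $a_n=h\iff q_n\geq 1-\tilde\pi_n$, one gets $\BP_{\pi',h}(\bar a=h)=\prod_{n\geq 1}\bigl(1-F_h(1-\tilde\pi_n^h)\bigr)$. By Lemmas~\ref{lem: r_tilde_lower_bound} and~\ref{lem: r_tilde_upper_bound}, $\ee^{\tilde r_n^h}=\Theta(n^{1/\tilde\alpha})$, so $1-\tilde\pi_n^h=\Theta(n^{-1/\tilde\alpha})$; and $F_h(q)=\Theta(q^{\alpha+1})$ by Lemma~\ref{lem: Psi_h_asymptotics}, so the $n$-th factor equals $1-\Theta(n^{-(\alpha+1)/\tilde\alpha})$. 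Using the elementary fact that for $a_n\to 0$ in $[0,1)$ one has $\prod_n(1-a_n)>0$ iff $\sum_n a_n<\infty$, this product is positive iff $(\alpha+1)/\tilde\alpha>1$, i.e.\ iff $\tilde\alpha-\alpha<1$.

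\emph{Immediate bad herd, and assembly.} Conditioned on $\bar a=\ell$, set $\rho_n=(1-\tilde\pi_n)/\tilde\pi_n$; using the symmetry of $(\widetilde F_\ell,\widetilde F_h)$, equation~\eqref{eq:misspecified_belief_l} becomes $\log\rho_{n+1}=\log\rho_n+\widetilde U(\log\rho_n)$, which is exactly the recursion~\eqref{eq:r_tilde_difference} for $\tilde r_n^h$ but started at $\log\tfrac{1-\pi'}{\pi'}$; hence $\rho_n=\Theta(n^{1/\tilde\alpha})$ and $\tilde\pi_n^\ell=\Theta(n^{-1/\tilde\alpha})$. Then $\BP_{\pi',h}(\bar a=\ell)=\prod_{n\geq 1}F_h(1-\tilde\pi_n^\ell)=\prod_{n\geq 1}\bigl(1-F_\ell(\tilde\pi_n^\ell)\bigr)$ by the symmetry identity $1-F_h(1-q)=F_\ell(q)$; since $F_\ell(q)=\Theta(q^\alpha)$ near $0$, the $n$-th factor is $1-\Theta(n^{-\alpha/\tilde\alpha})$, so the product is zero iff $\sum_n n^{-\alpha/\tilde\alpha}=\infty$, i.e.\ iff $\tilde\alpha\geq\alpha$. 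Now assemble: if $\tilde\alpha-\alpha\in[0,1)$, the bad-herd computation gives $\BP_{\pi',h}(\bar a=\ell)=0$ for every $\pi'$ (condition (ii) of Lemma~\ref{lem:asympt_learning_suff_condition}) and the good-herd computation, once upgraded to the uniform statement $\inf_{\pi'\geq 1/2}\BP_{\pi',h}(\bar a=h)>0$, gives condition (i), so $a_n\to h$ $\BP_h$-a.s., whence asymptotic learning by symmetry. Conversely, asymptotic learning forces $a_n\to h$ $\BP_h$-a.s., so Lemma~\ref{lem:asympt_learning_nec_condition} applies: part~(ii), $\BP_h(\bar a=\ell)=0$, forces $\tilde\alpha\geq\alpha$, and part~(i), $\BP_{\pi',h}(\bar a=h)>0$ for some $\pi'<1$, forces $\tilde\alpha-\alpha<1$.

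The main obstacle is the uniformity over $\pi'\geq 1/2$ needed for condition (i) of the sufficiency lemma: the constants implied in $\ee^{\tilde r_n^h}=\Theta(n^{1/\tilde\alpha})$ depend on $\pi'$ a priori. I will handle this with Lemma~\ref{lem: uniformity}, which gives a number of steps, independent of $\pi'\geq 1/2$, after which $\tilde r_n^h(\pi')$ exceeds any prescribed threshold; the envelope comparison of Lemma~\ref{lem: r_tilde_lower_bound} can then be restarted from that step with $\pi'$-independent constants, yielding $1-\tilde\pi_n^h\leq C n^{-1/\tilde\alpha}$ for all $n\geq N$ uniformly in $\pi'\geq 1/2$. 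Splitting the product at $N$, the tail is then bounded below by $\prod_{n\geq N}\bigl(1-C' n^{-(\alpha+1)/\tilde\alpha}\bigr)>0$ (a convergent product since $\tilde\alpha-\alpha<1$), while each of the finitely many early factors is at least $1-F_h(1/2)>0$ because $\tilde\pi_n^h\geq\tilde\pi_1^h=\pi'\geq 1/2$. A secondary, routine point is to verify that these $\Theta$-estimates genuinely control the infinite products, i.e.\ that $\prod(1-F_h(1-\tilde\pi_n^h))$ and $\prod F_h(1-\tilde\pi_n^\ell)$ are positive or zero exactly according to whether the associated clean comparison series converge; this follows from squeezing by the two-sided tail-regularity bounds.
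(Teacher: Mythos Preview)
Your proposal is correct and follows the paper's overall architecture: reduce asymptotic learning in the high state to the two immediate-herding conditions (Lemmas~\ref{lem:asympt_learning_nec_condition} and~\ref{lem:asympt_learning_suff_condition}), then evaluate $\BP_{\pi',h}(\bar a=h)$ and $\BP_{\pi',h}(\bar a=\ell)$ via the product formula, the envelope $\ee^{\tilde r_n^h}=\Theta(n^{1/\tilde\alpha})$, and tail regularity. Your good-herd and bad-herd computations are exactly the content of Lemmas~\ref{lem:S_h_parameter} and~\ref{lem:S_ell_parameter} (the latter you obtain by the symmetry substitution $\rho_n=(1-\tilde\pi_n)/\tilde\pi_n$, which is equivalent to the paper's switch $\xi(\pi)=\BP_{1-\pi,\ell}(\bar a=h)$).

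The one place where your argument diverges from the paper is the uniformity step, i.e.\ showing $\inf_{\pi'\geq 1/2}\BP_{\pi',h}(\bar a=h)>0$ when $\tilde\alpha-\alpha<1$. The paper proves that $\eta(\pi)=\BP_{\pi,h}(\bar a=h)$ is \emph{continuous} on $[1/2,1]$ (Lemma~\ref{lem:unif_convergence}, via a uniform-Cauchy argument using Lemmas~\ref{lem: uniformity} and~\ref{lem:sum_made_small}), and then concludes by compactness: if the infimum were zero it would be attained at some $\hat\pi<1$, contradicting Lemma~\ref{lem:S_h_parameter}. You instead bound the product from below directly: Lemma~\ref{lem: uniformity} gives a $\pi'$-free time $N$ after which $\tilde r_n^h(\pi')\geq\bar r$, the lower envelope restarted at $\bar r$ controls the tail uniformly, and the finitely many early factors are each at least $1-F_h(1/2)>0$ since $\tilde\pi_n^h\geq 1/2$. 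Both routes rest on the same two ingredients (Lemma~\ref{lem: uniformity} and the ODE envelope); yours is more hands-on and avoids the detour through continuity, while the paper's yields continuity of $\eta$ as a byproduct. Either is a complete argument.
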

To prove this statement we leverage the necessary and sufficient conditions found in Lemmas~\ref{lem:asympt_learning_nec_condition} and \ref{lem:asympt_learning_suff_condition}, as well as prove the following two propositions that relate the probability of immediate herding events to the underlying parameters $\alpha$ and $\tilde\alpha$:
\begin{proposition}
\label{prop:high_state_equiv}
The following are equivalent:
\begin{enumerate}[label = (\roman*)]
    \item $\inf\limits_{\pi' \geq 1/2} \BP_{\pi',h}\left(\bar a=h\right)>0$;
    \item $\exists\, \pi'<1$ such that $\BP_{\pi',h}\left(\bar a=h\right)>0$;
    \item $\tilde\alpha-\alpha<1$.
\end{enumerate}
\end{proposition}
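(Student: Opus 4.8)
The plan is to reduce all three statements to the product formula for the probability of an immediate good herd. Condition on $\theta=h$: then the private posteriors $q_1,q_2,\ldots$ are i.i.d.\ with CDF $F_h$, and, as recalled in Section~\ref{sec:asympt_learning_immed_herding}, along the event $\bar a=h$ the misspecified public belief evolves deterministically through the values $\tilde\pi_n^h(\pi')$ generated from the prior $\pi'$ by \eqref{eq:misspecified_belief}, with $a_n=h$ exactly when $q_n\geq 1-\tilde\pi_n^h$. Hence
\[
  \BP_{\pi',h}(\bar a=h)=\prod_{n=1}^\infty\bigl(1-F_h(1-\tilde\pi_n^h(\pi'))\bigr).
\]
By symmetry $F_h\leq F$, and $F(q)<1$ for $q<1$, so each factor is positive; and tail-regularity forces $0$ into the support of $q_n\mid h$ (Lemma~\ref{lem: Psi_h_asymptotics} gives $F_h(q)=\Theta(q^{\alpha+1})$, hence $F_h(q)>0$ for $q>0$, and $\tilde\pi_n^h(\pi')<1$), so each factor is strictly less than $1$. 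By the elementary fact that $\prod(1-x_n)>0$ if and only if $\sum x_n<\infty$ for $x_n\in[0,1)$, the product is positive precisely when $\sum_n F_h(1-\tilde\pi_n^h(\pi'))<\infty$.

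For $(\rn{2})\Leftrightarrow(\rn{3})$, fix any $\pi'<1$. Lemmas~\ref{lem: r_tilde_lower_bound} and \ref{lem: r_tilde_upper_bound} give $\ee^{\tilde r_n^h(\pi')}=\Theta(n^{1/\tilde\alpha})$, hence $1-\tilde\pi_n^h(\pi')=(1+\ee^{\tilde r_n^h(\pi')})^{-1}=\Theta(n^{-1/\tilde\alpha})\to 0$; combining with $F_h(q)=\Theta(q^{\alpha+1})$ near $0$ gives $F_h(1-\tilde\pi_n^h(\pi'))=\Theta(n^{-(\alpha+1)/\tilde\alpha})$. Therefore $\sum_n F_h(1-\tilde\pi_n^h(\pi'))<\infty$ if and only if $(\alpha+1)/\tilde\alpha>1$, i.e.\ $\tilde\alpha-\alpha<1$. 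By the previous paragraph, $\BP_{\pi',h}(\bar a=h)>0$ if and only if $\tilde\alpha-\alpha<1$, and this holds for \emph{every} fixed $\pi'<1$; in particular $(\rn{2})\Leftrightarrow(\rn{3})$, while $(\rn{1})\Rightarrow(\rn{2})$ is immediate.

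It remains to prove $(\rn{3})\Rightarrow(\rn{1})$, the only step requiring uniformity over priors. Choose $\bar r$ large enough that $\widetilde U(r)\geq c\,\ee^{-\tilde\alpha r}$ for $r\geq\bar r$ (possible by Lemma~\ref{lem: U_F_approx}) and that $z\mapsto z+c\,\ee^{-\tilde\alpha z}$ is increasing on $[\bar r,\infty)$. By Lemma~\ref{lem: uniformity} there is $n_0$ with $\tilde r_{n_0}^h(\pi')\geq\bar r$ for all $\pi'\geq 1/2$; re-running the lower-envelope comparison of Lemma~\ref{lem: r_tilde_lower_bound} from time $n_0$ with initial level $\bar r$ yields $\ee^{\tilde\alpha\tilde r_n^h(\pi')}\geq\ee^{\tilde\alpha\bar r}+c\tilde\alpha(n-n_0)$ for all $n\geq n_0$, hence $\ee^{\tilde r_n^h(\pi')}\geq a\,n^{1/\tilde\alpha}$ for all $n\geq n_2$ and all $\pi'\geq 1/2$, with $a>0$ and $n_2$ independent of $\pi'$. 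Thus $1-\tilde\pi_n^h(\pi')\leq a^{-1}n^{-1/\tilde\alpha}$, and, enlarging $n_2$ so that this lies below the threshold of the bound $F_h(q)\leq Cq^{\alpha+1}$, one gets $F_h(1-\tilde\pi_n^h(\pi'))\leq Ca^{-(\alpha+1)}n^{-(\alpha+1)/\tilde\alpha}$ for $n\geq n_2$. For $n<n_2$, monotonicity of $\{\tilde\pi_n^h\}$ together with $\pi'\geq 1/2$ gives $1-\tilde\pi_n^h(\pi')\leq 1/2$, so $F_h(1-\tilde\pi_n^h(\pi'))\leq F_h(1/2)\leq 1/2$. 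Splitting the product at $n_2$,
\[
  \BP_{\pi',h}(\bar a=h)\geq 2^{-(n_2-1)}\prod_{n\geq n_2}\Bigl(1-Ca^{-(\alpha+1)}n^{-(\alpha+1)/\tilde\alpha}\Bigr),
\]
and since $\tilde\alpha-\alpha<1$ the remaining product converges to a strictly positive value independent of $\pi'$; taking the infimum over $\pi'\geq 1/2$ gives $(\rn{1})$.

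The main obstacle is precisely this last step: the implied $\Theta$-constants for $\ee^{\tilde r_n^h(\pi')}$ a priori depend on the prior, so Lemmas~\ref{lem: r_tilde_lower_bound}--\ref{lem: r_tilde_upper_bound} cannot be quoted directly for the uniform bound. Lemma~\ref{lem: uniformity} is the device that resolves this, letting the ODE comparison be started from a prior-independent time and initial level and thereby yielding prior-independent tail estimates on the product.
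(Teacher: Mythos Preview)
Your argument is correct. For $(\rn{1})\Rightarrow(\rn{2})$ and $(\rn{2})\Leftrightarrow(\rn{3})$ you follow essentially the paper's route (Lemma~\ref{lem:S_h_parameter}), reducing positivity of the infinite product to summability of $n^{-(\alpha+1)/\tilde\alpha}$ via Lemmas~\ref{lem: r_tilde_lower_bound}--\ref{lem: r_tilde_upper_bound} and~\ref{lem: Psi_h_asymptotics}. For $(\rn{3})\Rightarrow(\rn{1})$, however, you take a more direct path than the paper. The paper first proves (Lemma~\ref{lem:sum_made_small}) that the tail sum $\sum_n \ee^{-(\alpha+1)\tilde r_n^h(r)}$ can be made uniformly small for large initial $r$, uses this together with Lemma~\ref{lem: uniformity} to show that the finite products $\eta_n(\pi)$ converge \emph{uniformly} to $\eta(\pi)=\BP_{\pi,h}(\bar a=h)$ on $[1/2,1]$ (Lemma~\ref{lem:unif_convergence}), deduces that $\eta$ is continuous, and finally argues by contradiction: if $\inf_{\pi\geq 1/2}\eta(\pi)=0$ then compactness gives a zero of $\eta$ at some $\hat\pi<1$, contradicting $(\rn{2})\Leftrightarrow(\rn{3})$. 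You instead feed Lemma~\ref{lem: uniformity} directly into the lower-envelope comparison to obtain a prior-independent bound $\ee^{\tilde r_n^h(\pi')}\geq a\,n^{1/\tilde\alpha}$ for all $n\geq n_2$, then split the product at $n_2$ and bound the finitely many initial factors by $1/2$ using $F_h(1/2)\leq 1/2$. This yields an explicit positive lower bound on $\eta(\pi')$ valid for every $\pi'\geq 1/2$, without the detour through continuity. Your approach is shorter and more constructive; the paper's approach has the incidental byproduct that $\eta$ is continuous on $[1/2,1]$, though this fact is not used elsewhere. (One small quibble: the inequality $F_h\leq F$ you invoke comes from first-order stochastic dominance $F_h\leq F_\ell$, itself a consequence of the monotone likelihood ratio, rather than from the symmetry assumption.)
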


\begin{proposition}
\label{prop:low_state_equiv}
The following are equivalent: 
\begin{enumerate}[label = (\roman*)]
\item $\BP_{\pi',h}\left(\bar a=\ell\right)=0$ for all $\pi' > 0$;
\item $\BP_{\pi',h}\left(\bar a=\ell\right)=0$ for some $\pi' < 1$;
\item $\tilde\alpha-\alpha \geq 0$.
\end{enumerate}
\end{proposition}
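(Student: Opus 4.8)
The plan is to compute $\BP_{\pi',h}(\bar a=\ell)$ exactly, as an infinite product, and to show that this product vanishes precisely when $\tilde\alpha\geq\alpha$, for every fixed prior $\pi'\in(0,1)$. Granting this, all three equivalences follow at once: (i)$\Rightarrow$(ii) is trivial (take $\pi'=1/2$); (ii)$\Rightarrow$(iii) is the forward direction of the product criterion applied to one $\pi'$; and (iii)$\Rightarrow$(i) is its reverse direction applied to every $\pi'$. This argument is the mirror image, under the state-relabeling symmetry, of the computation of $\BP_h(\bar a=h)$ sketched in Section~\ref{sec:asympt_learning_immed_herding}, and it is in fact simpler, since---unlike Proposition~\ref{prop:high_state_equiv}---no uniformity over priors is required.

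First, I would pin down the behavior of the public belief along the low-action path. Conditioned on $\bar a=\ell$, the belief $\tilde\pi_n$ is deterministic, equal to some $\tilde\pi_n^\ell$ satisfying \eqref{eq:misspecified_belief_l}. Introducing $\tilde s_n:=\log\frac{1-\tilde\pi_n^\ell}{\tilde\pi_n^\ell}$ and using the symmetry of $(\widetilde F_\ell,\widetilde F_h)$ to rewrite the likelihood ratio $\widetilde F_\ell(1-\tilde\pi_n^\ell)/\widetilde F_h(1-\tilde\pi_n^\ell)$ in terms of $\tfrac{1}{1+\ee^{\tilde s_n}}$, one checks that $\tilde s_{n+1}=\tilde s_n+\widetilde U(\tilde s_n)$, with $\widetilde U$ the very function appearing in \eqref{eq:r_tilde_difference} and with initial value $\tilde s_1=\log\frac{1-\pi'}{\pi'}$. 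Since $\widetilde U>0$ everywhere, $\tilde s_n$ strictly increases and cannot have a finite limit, so $\tilde s_n\to\infty$ and hence $\tilde\pi_n^\ell\to 0$; and because this recursion is literally that of the high-action path, Lemmas~\ref{lem: r_tilde_lower_bound} and~\ref{lem: r_tilde_upper_bound}---which hold for an arbitrary initial belief---yield $\ee^{\tilde s_n}=\Theta(n^{1/\tilde\alpha})$, i.e.\ $\tilde\pi_n^\ell=\Theta(n^{-1/\tilde\alpha})$ (with implied constants depending on $\pi'$).

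Next, I would compute the probability. Since $a_n=\ell$ is equivalent to $q_n<1-\tilde\pi_n$, and under $\BP_h$ the signal $q_n$ is independent of $(a_1,\dots,a_{n-1})$, conditioning successively on the events $\{a_1=\ell,\dots,a_k=\ell\}$ gives
\[
\BP_{\pi',h}(\bar a=\ell)=\prod_{n=1}^\infty F_h\!\left(1-\tilde\pi_n^\ell\right)=\prod_{n=1}^\infty\left(1-F_\ell\!\left(\tilde\pi_n^\ell\right)\right),
\]
where the last step uses the symmetry identity $F_h(1-q)=1-F_\ell(q)$ and every factor lies strictly in $(0,1)$ by tail-regularity and symmetry. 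By tail-regularity together with Lemma~\ref{lem: Psi_h_asymptotics}, $F_\ell(q)=\Theta(q^\alpha)$ near $0$, so composing with the estimate above yields $F_\ell(\tilde\pi_n^\ell)=\Theta(n^{-\alpha/\tilde\alpha})$. Then the elementary criterion $\prod_n(1-x_n)=0\iff\sum_n x_n=\infty$ (valid since $x_n\to 0$) shows that the product vanishes iff $\sum_n n^{-\alpha/\tilde\alpha}=\infty$, i.e.\ iff $\alpha/\tilde\alpha\leq 1$, i.e.\ iff $\tilde\alpha-\alpha\geq 0$. As this chain of equivalences holds for each $\pi'\in(0,1)$, the proposition follows.

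The routine ingredients are the product identity and the infinite-product criterion. The one step that needs care---and the main obstacle---is the symmetry bookkeeping that turns the low-path recursion \eqref{eq:misspecified_belief_l} into exactly $\tilde s_{n+1}=\tilde s_n+\widetilde U(\tilde s_n)$, so that the envelope Lemmas~\ref{lem: r_tilde_lower_bound}--\ref{lem: r_tilde_upper_bound} apply verbatim rather than having to be re-proved for the low path; after that, the only remaining subtlety is that the implied $\Theta$-constants may depend on $\pi'$ but play no role in the convergence or divergence of $\sum_n n^{-\alpha/\tilde\alpha}$, which alone decides whether the product is zero.
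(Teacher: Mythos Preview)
Your proof is correct and follows essentially the same approach as the paper's. The only cosmetic difference is that the paper invokes the state-relabeling symmetry at the probabilistic level, writing $\BP_{\pi,h}(\bar a=\ell)=\BP_{1-\pi,\ell}(\bar a=h)$ and then directly reusing the high-path analysis under state $\ell$, whereas you apply the CDF symmetry identity to the low-path recursion to show that $\tilde s_n$ satisfies exactly the same difference equation as $\tilde r_n^h$; these are two ways of expressing the same symmetry and lead to literally the same infinite product and the same $\sum_n n^{-\alpha/\tilde\alpha}$ criterion.
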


Proposition~\ref{prop:high_state_equiv} implies that the probability of an immediate good herd is \textit{uniformly} positive (over all initial beliefs larger than $1/2$) if and only if $\tilde\alpha-\alpha<1$, i.e., when agents are not overly condescending. Proposition~\ref{prop:low_state_equiv} claims that the probability of an immediate wrong herd is zero for all positive initial beliefs if and only if $\tilde \alpha-\alpha\geq 0$, i.e., when agents are condescending.

We use these propositions to prove Proposition~\ref{prop:asymptotic_learning}, before proceeding with their proofs.
\begin{proof}[Proof of Proposition~\ref{prop:asymptotic_learning}]
Suppose that asymptotic learning holds, i.e., $a_n \to \theta$, $\BP$-almost surely. Then  $a_n \to h$, $\BP_h$-almost surely. By Lemma~\ref{lem:asympt_learning_nec_condition}, this implies that condition (\rn{2}) of both  Propositions~\ref{prop:high_state_equiv} and~\ref{prop:low_state_equiv} hold. Hence, by these propositions, conditions (\rn{3}) in the two propositions hold, and $\tilde\alpha-\alpha \in [0,1)$.

Suppose that $\tilde\alpha-\alpha \in [0,1)$. Then condition (\rn{3}) of both  Propositions~\ref{prop:high_state_equiv} and~\ref{prop:low_state_equiv} hold. Therefore, condition (\rn{1}) of both propositions hold. Hence, by Lemma~\ref{lem:asympt_learning_suff_condition}, we have asymptotic learning.
\end{proof}

\subsubsection{Proof of Proposition~\ref{prop:high_state_equiv}}
The proof of the first implication, namely (\rn{1}) $\Rightarrow$ (\rn{2}), is immediate. The next lemma establishes the second implication, i.e., (\rn{2}) $\Rightarrow$ (\rn{3}). In fact, it shows a stronger statement. 
\begin{lemma}
\label{lem:S_h_parameter}
The following are equivalent:
\begin{enumerate}[label = (\roman*)]
    \item $\tilde{\alpha}-\alpha<1$;
    \item $\BP_{h}(\bar a = h) > 0$.
\end{enumerate}
\end{lemma}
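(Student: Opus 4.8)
The plan is to compute the infinite product $\BP_h(\bar a = h) = \prod_{n\geq 1}\bigl(1 - F_h(1-\tilde\pi_n^h)\bigr)$ and determine exactly when it is positive. Conditioned on $\bar a = h$, the misspecified public belief is the deterministic sequence $\tilde\pi_n^h$, and agent $n$ takes the high action iff $q_n \geq 1 - \tilde\pi_n^h$; since the $q_n$ are $\BP_h$-independent, $\BP_h(\bar a=h) = \prod_n \bigl(1-F_h(1-\tilde\pi_n^h)\bigr)$. The standard fact is that such a product of terms in $[0,1)$ is strictly positive if and only if $\sum_n F_h(1-\tilde\pi_n^h) < \infty$ (and each factor is nonzero, which holds here since $F_h(1-\tilde\pi_n^h) < 1$ because signals are unbounded, i.e.\ $\tilde\pi_n^h \in (0,1)$ so $1-\tilde\pi_n^h>0$ and $F_h<1$ on $(0,1]$ — actually $F_h(q)<1$ for $q<1$). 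So the whole question reduces to the convergence of $\sum_n F_h(1-\tilde\pi_n^h)$.

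Next I would feed in the asymptotics already established. By Lemmas~\ref{lem: r_tilde_lower_bound} and~\ref{lem: r_tilde_upper_bound}, $\ee^{\tilde r_n^h} = \Theta(n^{1/\tilde\alpha})$, hence $1 - \tilde\pi_n^h = \frac{1}{1+\ee^{\tilde r_n^h}} = \Theta(n^{-1/\tilde\alpha})$. By tail-regularity of $(F_\ell,F_h)$ and Lemma~\ref{lem: Psi_h_asymptotics}, $F_h(q) = \Theta(q^{\alpha+1})$, so
\begin{align*}
  F_h(1-\tilde\pi_n^h) = \Theta\bigl(n^{-(\alpha+1)/\tilde\alpha}\bigr).
\end{align*}
Therefore $\sum_n F_h(1-\tilde\pi_n^h) < \infty$ if and only if $(\alpha+1)/\tilde\alpha > 1$, i.e.\ $\tilde\alpha < \alpha+1$, i.e.\ $\tilde\alpha - \alpha < 1$. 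Combined with the product-positivity criterion this gives the equivalence of (\rn{1}) and (\rn{2}).

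The one genuine subtlety — and the step I would be most careful about — is justifying that the $\Theta$-bounds are enough: the implied constants in $\ee^{\tilde r_n^h} = \Theta(n^{1/\tilde\alpha})$ depend on the prior $\pi$, and more importantly $\Theta$ is an asymptotic (large-$n$) statement, so a priori it controls only the tail of the product, not the finitely many initial factors. But finitely many factors, each strictly in $(0,1)$, contribute a strictly positive finite constant, so they cannot change whether the product is zero or positive; hence it suffices to work with $n$ large. For those large $n$, the two-sided bound $c\, n^{-(\alpha+1)/\tilde\alpha} \leq F_h(1-\tilde\pi_n^h) \leq C\, n^{-(\alpha+1)/\tilde\alpha}$ makes $\sum_n F_h(1-\tilde\pi_n^h)$ converge or diverge together with $\sum_n n^{-(\alpha+1)/\tilde\alpha}$, and the elementary equivalence ``$\prod(1-x_n)>0 \iff \sum x_n < \infty$'' (for $x_n \in [0,1)$) closes the argument. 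I would also note in passing that the monotonicity of $\tilde\pi_n^h$ (established via Lemma~\ref{lem: conv_to_one}) guarantees $F_h(1-\tilde\pi_n^h)$ is eventually bounded away from $1$, so the product has no spuriously vanishing factor.
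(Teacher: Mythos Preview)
Your proposal is correct and follows essentially the same route as the paper: express $\BP_h(\bar a=h)$ as the product $\prod_n\bigl(1-F_h(1-\tilde\pi_n^h)\bigr)$, reduce positivity of the product to summability of $F_h(1-\tilde\pi_n^h)$, and then combine $F_h(q)=\Theta(q^{\alpha+1})$ with $\ee^{\tilde r_n^h}=\Theta(n^{1/\tilde\alpha})$ from Lemmas~\ref{lem: Psi_h_asymptotics}, \ref{lem: r_tilde_lower_bound}, and~\ref{lem: r_tilde_upper_bound} to obtain a $p$-series with exponent $(\alpha+1)/\tilde\alpha$. The only cosmetic difference is that the paper passes through $-\sum_n\log(1-F_h(1-\tilde\pi_n^h))$ and the intermediate sum $\sum_n \ee^{-(\alpha+1)\tilde r_n^h}$, whereas you invoke the product criterion directly; your remarks about the irrelevance of finitely many initial factors and the composition of $\Theta$-bounds are exactly the justifications the paper is implicitly using.
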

\begin{proof}
Conditioned on the event $\bar a = h$, the public belief $\tilde \pi_n$ is deterministic and equals  $\tilde\pi_n^h$. Thus the event $\bar a=h$ is equal to the event $\{q_n+\tilde\pi_{n}^h\geq 1,\,\forall n\}$. Since the random variables $q_n$ are independent conditioned on $\theta=h$, we have that
\begin{align}
\label{eq:bar_a_h}
    \BP_{h}(\bar a =h) = \prod_n\BP_{h}(q_n+\tilde\pi_{n}^h\geq 1) = \prod_n\left(1-F_h(1-\tilde\pi_n^h)\right).
\end{align}

This implies that $\BP_h(\bar a=h)>0$ if and only if $-\sum_n \log\left(1-F_h(1-\tilde\pi_n^h)\right)<\infty$. For two sequences $f_n$ and $g_n$, we say $f_n\sim g_n$ if $\frac{f_n}{g_n}\to 1$ as $n\to \infty$. Since $\tilde{\pi}^h_n \to 1$, then $-\log\left(1-F_h(1-\tilde\pi_n^h)\right) \sim F_h(1-\tilde{\pi}^h_n)$, and the previous sum is finite if and only if
\begin{equation*}
    \sum_n F_h(1-\tilde{\pi}^h_n) < \infty\,.
\end{equation*}
Observe that Lemma~\ref{lem: Psi_h_asymptotics} implies that $F_h(q) = \Theta(q^{\alpha+1})$. Also as $n\to \infty$, we have $\ee^{-\tilde{r}^h_n} \sim 1-\tilde{\pi}^h_n$, therefore, the above sum is finite if and only if
\begin{equation}
\label{eq:exp_r_tilde_Ah}
    \sum_n \ee^{-(\alpha+1)\tilde{r}^h_n} < \infty\,.
\end{equation}
Because of the Lemmas~\ref{lem: r_tilde_lower_bound} and \ref{lem: r_tilde_upper_bound}, one has $\ee^{-(\alpha+1)\tilde{r}^h_n} = \Theta \left(n^{-\frac{\alpha+1}{\tilde{\alpha}}}\right)$. Thus, the sum in \eqref{eq:exp_r_tilde_Ah} is finite if and only if $\tilde{\alpha} -\alpha< 1$. 
\end{proof}

The following two lemmas are aimed at proving the third and final implication in Proposition~\ref{prop:high_state_equiv}, that is (\rn{3}) $\Rightarrow$ (\rn{1}). In the first one, we show that the sum in \eqref{eq:exp_r_tilde_Ah} can be made arbitrarily small if the initial value $\tilde{r}_1$ is large enough. Often in the following expressions, we use the notation $\tilde{r}^h_n(r)$ to refer to the process initiated at $\tilde{r}_1=r$. Also, recall our former notation, where we used $\tilde{r}_n^h(\pi)$ to refer to the process initiated at $\tilde r_1=\log \frac{\pi}{1-\pi}$. We use both of these notations interchangeably depending on the context.
\begin{lemma}
\label{lem:sum_made_small}
Assume $\tilde{\alpha}-\alpha<1$. Then for every $\ve>0$, there exists $\bar{r}\geq 0$ such that for all $r \geq \bar{r}$ one has $\sum_{n\geq 0} \ee^{-(\alpha+1)\tilde{r}^h_n(r)} <\ve$.
\end{lemma}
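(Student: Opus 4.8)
\textbf{Proof plan for Lemma~\ref{lem:sum_made_small}.}
The plan is to bound the tail sum $\sum_{n\geq 0}\ee^{-(\alpha+1)\tilde r_n^h(r)}$ \emph{uniformly in the initial value} $r\geq 0$ by something that goes to zero as $r\to\infty$, and separately handle the finitely many early terms. First, I would recall from Lemma~\ref{lem: r_tilde_lower_bound} (the Lower Envelope) and its proof that once $\tilde r_n^h$ exceeds a fixed threshold $\bar z$ (depending only on the tail-regularity constants, not on $r$), the sequence dominates the solution $z(t)=\tilde\alpha^{-1}\log(\kappa+c\tilde\alpha t)$ of the comparison ODE, where $\kappa$ is pinned down by matching at the first index $N$ with $\tilde r_N^h\geq\bar z$. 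The key point is that since $\widetilde U>0$ and the map $z\mapsto z+c\ee^{-\tilde\alpha z}$ is increasing past $\bar z$, if the process \emph{starts} above $\bar z$ (i.e.\ $r\geq\bar z$), we may take $N=1$, and the envelope inequality $\tilde r_n^h(r)\geq z(n)$ holds for \emph{all} $n\geq 1$ with $\kappa$ chosen so that $z(1)=r$, i.e.\ $\kappa=\ee^{\tilde\alpha r}-c\tilde\alpha$.

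\medskip
Given this, for $r\geq\bar z$ we get the pointwise bound
\begin{align*}
  \ee^{-(\alpha+1)\tilde r_n^h(r)}\;\leq\;\ee^{-(\alpha+1)z(n)}\;=\;\bigl(\kappa+c\tilde\alpha n\bigr)^{-\frac{\alpha+1}{\tilde\alpha}},
\end{align*}
and hence
\begin{align*}
  \sum_{n\geq 0}\ee^{-(\alpha+1)\tilde r_n^h(r)}\;\leq\;\sum_{n\geq 0}\bigl(\kappa+c\tilde\alpha n\bigr)^{-\frac{\alpha+1}{\tilde\alpha}}.
\end{align*}
Since $\tilde\alpha-\alpha<1$ means the exponent $\frac{\alpha+1}{\tilde\alpha}>1$, this series converges, and by comparison with the integral $\int_0^\infty(\kappa+c\tilde\alpha t)^{-\frac{\alpha+1}{\tilde\alpha}}\,\dd t=\Theta\bigl(\kappa^{1-\frac{\alpha+1}{\tilde\alpha}}\bigr)$ its value is $O\bigl(\kappa^{1-\frac{\alpha+1}{\tilde\alpha}}\bigr)$. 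Because $\kappa=\ee^{\tilde\alpha r}-c\tilde\alpha\to\infty$ as $r\to\infty$ and the power $1-\frac{\alpha+1}{\tilde\alpha}$ is negative, this upper bound tends to $0$ as $r\to\infty$. Thus given $\ve>0$ we choose $\bar r\geq\bar z$ large enough that the displayed bound is below $\ve$ for all $r\geq\bar r$, which is the claim.

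\medskip
The main thing to be careful about — and the only real obstacle — is the uniformity: the constants $c$ and $\bar z$ coming from Lemma~\ref{lem: U_F_approx} and from the monotonicity of $z\mapsto z+c\ee^{-\tilde\alpha z}$ depend only on the tail-regularity data $(\alpha,\tilde\alpha)$ and the implied $\Theta$-constants, \emph{not} on the prior, so they are legitimately fixed before we send $r\to\infty$; one must check that the inductive envelope argument from Lemma~\ref{lem: r_tilde_lower_bound} indeed goes through verbatim with $N=1$ whenever $r\geq\bar z$ (the base step $\tilde r_1^h=r=z(1)$ is immediate, and the inductive step is identical). A minor point is that $\kappa=\ee^{\tilde\alpha r}-c\tilde\alpha$ must be positive, which holds once $r$ is large enough, so this is absorbed into the choice of $\bar r$. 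Everything else is the routine integral comparison for a convergent $p$-series with $p>1$.
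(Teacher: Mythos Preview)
Your proposal is correct and follows essentially the same approach as the paper: both use the lower envelope $z(t)=\tilde\alpha^{-1}\log(\kappa+c\tilde\alpha t)$ from the ODE comparison (as in Lemma~\ref{lem: r_tilde_lower_bound}) to bound the sum by a convergent $p$-series with $p=\frac{\alpha+1}{\tilde\alpha}>1$, and then make the bound $<\ve$ by pushing the starting level high. The only cosmetic difference is that you match the envelope at the actual initial value (taking $\kappa=\ee^{\tilde\alpha r}-c\tilde\alpha$, so $\kappa$ varies with $r$ and the bound is driven to zero as $r\to\infty$), whereas the paper matches at the fixed threshold $\bar r$ itself (taking $z(0)=\bar r$, so the envelope is the same for all $r\geq\bar r$) and then chooses $\bar r$ large enough; these are equivalent ways of organizing the same estimate.
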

\begin{proof}
We appeal to the idea used in the proof of Lemma~\ref{lem: r_tilde_lower_bound}. Since $\widetilde{U}(r) = \Theta(\ee^{-\tilde{\alpha}r})$, then there exists $c>0$, and correspondingly a threshold $\bar{r}$, such that $\widetilde{U}(r)\geq c\,\ee^{-\tilde{\alpha}r}$ for every $r\geq \bar{r}$ and the mapping $r \mapsto r+c\,\ee^{-\tilde{\alpha} r}$ is increasing on $[\bar{r},\infty)$. In particular, since $\tilde{r}^h_n$ is increasing in $n$, starting the process at any $\tilde{r}_1=r \geq \bar{r}$, implies 
\begin{equation*}
    \widetilde{U}(\tilde{r}^h_n(r))\geq c\, \ee^{-\tilde{\alpha} \tilde{r}^h_n(r)}\,.
\end{equation*}
Next, we recall the continuous time process $z(t)$ such that $z(0)=\bar{r}$, and 
\begin{equation*}
    \frac{d z(t)}{\d t} = c\, \ee^{-\tilde{\alpha} z(t)}\,.
\end{equation*}
The solution to this differential equation takes the form 
\begin{equation*}
    z(t) = \frac{1}{\tilde{\alpha}}\log\Big(\ee^{\tilde{\alpha}\bar{r}}+c\tilde{\alpha} t\Big)\,.
\end{equation*}
Using induction, similar to the one used in Lemma~\ref{lem: r_tilde_lower_bound}, we can show $\tilde{r}_n^h(r) \geq z(n)$ for every initial value $r\geq \bar{r}$. Therefore, for every $r\geq \bar{r}$, it holds that
\begin{equation*}
    \sum_{n\geq 0} \ee^{-(\alpha+1)\tilde{r}^h_n(r)} \leq \sum_{n\geq 0} \ee^{-(\alpha+1)z(n)} = \sum_{n\geq 0} \Big(\ee^{\tilde{\alpha}\bar{r}}+c\tilde{\alpha} n\Big)^{-\frac{\alpha+1}{\tilde{\alpha}}}\,.
\end{equation*}
Since $\alpha+1>\tilde{\alpha}$, for a given $\ve>0$, we can choose $\bar{r}$ large enough, such that the above sum is less than $\ve$.
\end{proof}

Let $\eta \colon [1/2,1] \to [0,1]$,
\begin{align*}
    \eta(\pi) = \BP_{\pi,h}(\bar a=h),
\end{align*}
be the probability of immediate herding on the high action, conditioned on the high state, when the prior is $\pi$. By \eqref{eq:bar_a_h},
\begin{align*}
    \eta(\pi)  = \prod_n \big(1-F_h(1-\tilde{\pi}^h_{n}(\pi))\big).
\end{align*}

\begin{lemma}
\label{lem:unif_convergence}
Assume $\tilde{\alpha}-\alpha<1$. Then $\eta$ is continuous.
\end{lemma}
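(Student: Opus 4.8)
The plan is to reduce the statement to a uniform-convergence argument for the partial products. Write $P_N(\pi):=\prod_{n=1}^{N}\bigl(1-F_h(1-\tilde\pi^h_n(\pi))\bigr)$, so that $\eta(\pi)=\lim_{N\to\infty}P_N(\pi)$, the limit existing because the factors lie in $(0,1]$ and hence $P_N(\pi)$ is nonincreasing in $N$. Each map $\pi\mapsto\tilde\pi^h_n(\pi)$ is continuous on $[1/2,1]$: it is the $(n-1)$-fold iterate of the update map in \eqref{eq:misspecified_belief}, which is continuous on $(0,1)$ by the continuity of $\widetilde F_\ell,\widetilde F_h$ and extends continuously to $\tilde\pi=1$ with value $1$ (as $\tilde\pi\to1$ one has $\tfrac{\tilde\pi}{1-\tilde\pi}\to\infty$ while $\tfrac{1-\widetilde F_h(1-\tilde\pi)}{1-\widetilde F_\ell(1-\tilde\pi)}\to1$). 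Thus each $P_N$ is continuous on $[1/2,1]$, and it suffices to show $P_N\to\eta$ uniformly there; continuity of $\eta$ then follows as a uniform limit of continuous functions.

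The first ingredient is a per-term bound. By Lemma~\ref{lem: Psi_h_asymptotics}, $F_h(q)=\Theta(q^{\alpha+1})$, so there are $C>0$ and a threshold $r_0$ such that $\tilde r\ge r_0$ implies $F_h\bigl(\tfrac{1}{1+\ee^{\tilde r}}\bigr)\le C\,\ee^{-(\alpha+1)\tilde r}\le\tfrac12$ (using $1-\tilde\pi=\tfrac{1}{1+\ee^{\tilde r}}\le\ee^{-\tilde r}$, and enlarging $r_0$ for the last inequality). By Lemma~\ref{lem: uniformity} there is $n_1$ with $\tilde r^h_n(\pi)\ge r_0$ for all $n\ge n_1$ and all $\pi\ge1/2$, so for such $n$ and $\pi$,
\[
-\log\bigl(1-F_h(1-\tilde\pi^h_n(\pi))\bigr)\ \le\ 2\,F_h(1-\tilde\pi^h_n(\pi))\ \le\ 2C\,\ee^{-(\alpha+1)\tilde r^h_n(\pi)},
\]
where the first inequality uses $-\log(1-x)\le 2x$ for $x\in[0,\tfrac12]$.

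The second, and crucial, ingredient is uniform smallness of the tail $\sum_{n> n_0}\ee^{-(\alpha+1)\tilde r^h_n(\pi)}$. Fix $\ve>0$. Since $\tilde\alpha-\alpha<1$, Lemma~\ref{lem:sum_made_small} provides $\bar r\ge r_0$ with $\sum_{m\ge 0}\ee^{-(\alpha+1)\tilde r^h_m(r)}<\ve$ for every $r\ge\bar r$, and then Lemma~\ref{lem: uniformity} provides $n_0\ge n_1$ with $\tilde r^h_{n_0}(\pi)\ge\bar r$ for all $\pi\ge1/2$. Because $\{\tilde r^h_n\}$ obeys the first-order recursion \eqref{eq:r_tilde_difference}, the shifted sequence $(\tilde r^h_{n_0+m}(\pi))_{m\ge0}$ is exactly the sequence started from $\rho:=\tilde r^h_{n_0}(\pi)\ge\bar r$, whence $\sum_{n> n_0}\ee^{-(\alpha+1)\tilde r^h_n(\pi)}\le\sum_{m\ge0}\ee^{-(\alpha+1)\tilde r^h_m(\rho)}<\ve$ uniformly over $\pi\ge1/2$. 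Combined with the per-term bound, $-\log\prod_{n>n_0}\bigl(1-F_h(1-\tilde\pi^h_n(\pi))\bigr)\le 2C\ve$, i.e.\ $\prod_{n>n_0}\bigl(1-F_h(1-\tilde\pi^h_n(\pi))\bigr)\ge\ee^{-2C\ve}$ for all $\pi\ge1/2$. Since the factors are at most $1$, $0\le P_{n_0}(\pi)-\eta(\pi)=P_{n_0}(\pi)\bigl(1-\prod_{n>n_0}(\cdots)\bigr)\le 1-\ee^{-2C\ve}$, and as $\sup_\pi\bigl(P_N(\pi)-\eta(\pi)\bigr)$ is nonincreasing in $N$, this bound persists for all $N\ge n_0$. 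Letting $\ve\to0$ gives $P_N\to\eta$ uniformly on $[1/2,1]$, so $\eta$ is continuous.

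\emph{Main obstacle.} The delicate step is the uniform tail bound: it is precisely here that the hypothesis $\tilde\alpha-\alpha<1$ is used, through Lemma~\ref{lem:sum_made_small}, and it must be coupled with the prior-uniform lower envelope for $\tilde r^h_n$ of Lemma~\ref{lem: uniformity} together with the semigroup reindexing coming from \eqref{eq:r_tilde_difference}; this is what upgrades ``small tail once the initial value is large'' to ``small tail uniformly over all priors $\pi\ge1/2$.'' A secondary point requiring attention is that continuity must hold at the endpoint $\pi=1$, which is why the continuous extension of the update map to $\tilde\pi=1$ is noted above.
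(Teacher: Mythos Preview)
Your proof is correct and follows essentially the same route as the paper: show each partial product $P_N=\eta_N$ is continuous, then use Lemma~\ref{lem: Psi_h_asymptotics} for a per-term bound, Lemma~\ref{lem:sum_made_small} for smallness of the tail once the log-likelihood is large, and Lemma~\ref{lem: uniformity} to make this uniform over $\pi\ge 1/2$, concluding uniform convergence. Your presentation is slightly more explicit about the semigroup reindexing coming from \eqref{eq:r_tilde_difference} and about continuity at the endpoint $\pi=1$, but the substance is the same.
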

\begin{proof}
Let
\begin{align}
\label{eq:eta_n}
    \eta_n(\pi) = \BP_{\pi,h}(a_1=h,\ldots,a_n=h) = \prod_{k=1}^n \big(1-F_h(1-\tilde{\pi}^h_{k}(\pi))\big)\,,
\end{align}
be the probability that the first $n$ agents take the high action, conditioned on the high state, when the prior is  $\pi$. By definition, $\eta(\pi) = \lim_n\eta_n(\pi)$. Since the distribution of the private posteriors $q_n$ is non-atomic, each $\eta_n$ is continuous. Thus, we prove that $\eta$ is continuous by showing that $\eta_n$ converges uniformly to $\eta$.

First, Lemma~\ref{lem: Psi_h_asymptotics} implies that $F_h(\ee^{-r}) =\Theta(\ee^{-(\alpha+1)r})$, and hence there exists $C>0$ such that $F_h(\ee^{-r})\leq C\ee^{-(\alpha+1)r}$. Second, because of Lemma~\ref{lem:sum_made_small}, for a given $\ve_1>0$, there exists $\bar{r} \geq 0$ such that for all $r_1\geq \bar{r}$, one has
\begin{equation*}
   \sum_{n\geq 0} \ee^{-(\alpha+1)\tilde{r}^h_n(r_1)} \leq \ve_1\,.
\end{equation*}
Then, because of Lemma~\ref{lem: uniformity} there exists $n_0\equiv n_0(\bar{r})$ such that $\tilde{r}^h_n(\pi) \geq \bar{r}$ for all initial $\pi\geq 1/2$, and $n\geq n_0$. By \eqref{eq:eta_n}, 
\begin{align*}
    \eta_{n+1}(\pi) = \eta_n(\pi)\big(1-F_h(1-\tilde\pi^h_n(\pi)\big),
\end{align*}
so that $|\eta_{n+1}(\pi) - \eta_n(\pi)| \leq F_h(1-\tilde\pi^h_n(\pi))$. Hence, 
for every $k>0$, and $\pi \geq 1/2$,
\begin{equation}
    \begin{gathered}
    \left|\eta_{n_0+k}(\pi)-\eta_{n_0}(\pi)\right| \leq \sum_{n=n_0}^\infty F_h\big(\ee^{-\tilde{r}^h_n(\pi)}\big)
    \leq C\sum_{n\geq n_0} \ee^{-(\alpha+1)\tilde{r}^h_n(\pi)} 
    \leq C\ve_1\,.
    \end{gathered}
\end{equation}
The third inequality above holds because $\tilde{r}^h_{n_0}(\pi) \geq \bar{r}$, and thus Lemma~\ref{lem:sum_made_small} implies the sum is smaller than $\ve_1$.
Since $\ve_1$ was chosen independently, the final term above can be made arbitrarily small, by taking $n_0$ large enough. This implies the sequence $\{\eta_n\}$ is Cauchy w.r.t.\ the sup-norm in $C[1/2,1]$, and thus it converges uniformly to $\eta$. Therefore, $\eta$ is continuous.
\end{proof}
Using the above lemma, we can now conclude the proof of the last implication in Proposition~\ref{prop:high_state_equiv}, namely (\rn{3}) $\Rightarrow$ (\rn{1}). Assume by contradiction, that condition (\rn{1}) does not hold, then $\inf_{\pi \geq 1/2} \eta(\pi)=0$. By the previous lemma $\eta$ is a continuous function, hence there must exist $\hat{\pi}\in [1/2,1]$ such that $\eta(\hat{\pi})=0$. Since, $\eta(1)\neq 0$, then $\hat{\pi}\in [1/2,1)$ and Lemma~\ref{lem:S_h_parameter} implies that $\tilde{\alpha}-\alpha \geq 1$. This violates the initial assumption (i.e., $\tilde{\alpha}-\alpha < 1$) and hence concludes the proof of Proposition~\ref{prop:high_state_equiv}.
\subsubsection{Proof of Proposition~\ref{prop:low_state_equiv}}

The first implication, namely (\rn{1}) $\Rightarrow$ (\rn{2}), is immediate. For the remaining two implications, define 
\begin{equation*}
    \xi(\pi) = \BP_{\pi,h}\big(\bar a = \ell\big)\,,
\end{equation*}
and appeal to the next lemma.
\begin{lemma}
\label{lem:S_ell_parameter}
For every $\pi \in (0,1)$, one has $\xi(\pi)=0$ if and only if $\tilde{\alpha}-\alpha \geq 0$.
\end{lemma}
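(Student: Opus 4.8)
The plan is to mirror the structure of the proof of Lemma~\ref{lem:S_h_parameter}, but on the low-action path, exploiting symmetry to recycle the envelope estimates already established. Fix $\pi\in(0,1)$. Conditioned on $\bar a=\ell$, the misspecified public belief $\tilde\pi_n$ is deterministic and follows \eqref{eq:misspecified_belief_l}; call this sequence $\tilde\pi_n^\ell$. First I would identify $\tilde\pi_n^\ell$ as the reflection of a high-action path. Setting $\tilde s_n:=\log\frac{1-\tilde\pi_n^\ell}{\tilde\pi_n^\ell}$ and using the symmetry of $(\widetilde F_\ell,\widetilde F_h)$ from Assumption~\ref{assum: symmetry}, i.e.\ $\widetilde F_\ell(1-q)=1-\widetilde F_h(q)$ and $\widetilde F_h(1-q)=1-\widetilde F_\ell(q)$, one checks that \eqref{eq:misspecified_belief_l} becomes exactly $\tilde s_{n+1}=\tilde s_n+\widetilde U(\tilde s_n)$ with $\tilde s_1=\log\frac{1-\pi}{\pi}$. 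Thus $\tilde s_n$ coincides with $\tilde r_n^h$ initiated from prior $1-\pi$. Lemma~\ref{lem: conv_to_one} then gives $\tilde s_n\to\infty$, so $\tilde\pi_n^\ell\to 0$, and Lemmas~\ref{lem: r_tilde_lower_bound}--\ref{lem: r_tilde_upper_bound} give $\ee^{\tilde s_n}=\Theta(n^{1/\tilde\alpha})$, hence $\tilde\pi_n^\ell=\Theta(n^{-1/\tilde\alpha})$.

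Next I would write out $\xi(\pi)$ as an infinite product. Since agent $n$ takes the low action iff $q_n<1-\tilde\pi_n^\ell$, and the $q_n$ are independent conditioned on $\theta=h$, we get $\xi(\pi)=\prod_n F_h(1-\tilde\pi_n^\ell)$. Because $\tilde\pi_n^\ell\to 0$, the factors $F_h(1-\tilde\pi_n^\ell)\to F_h(1)=1$, so $\xi(\pi)>0$ iff $-\sum_n\log F_h(1-\tilde\pi_n^\ell)<\infty$, and since $-\log(1-x)\sim x$ as $x\to 0$, this is equivalent to $\sum_n\bigl(1-F_h(1-\tilde\pi_n^\ell)\bigr)<\infty$.

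Finally I would convert the tail of $F_h$ near $1$ into a tail of $F_\ell$ near $0$, now using the symmetry of the \emph{true} pair $(F_\ell,F_h)$: $1-F_h(1-q)=F_\ell(q)$, so $1-F_h(1-\tilde\pi_n^\ell)=F_\ell(\tilde\pi_n^\ell)$. By tail-regularity and Lemma~\ref{lem: Psi_h_asymptotics}, $F_\ell(x)=\Theta(x^{\alpha})$, and combined with $\tilde\pi_n^\ell=\Theta(n^{-1/\tilde\alpha})$ this yields $1-F_h(1-\tilde\pi_n^\ell)=\Theta\bigl(n^{-\alpha/\tilde\alpha}\bigr)$. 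Hence the series converges iff $\alpha/\tilde\alpha>1$, i.e.\ iff $\tilde\alpha<\alpha$; equivalently, $\xi(\pi)=0$ iff $\tilde\alpha-\alpha\ge 0$, which is the claim (the boundary case $\tilde\alpha=\alpha$ gives the divergent harmonic-type series, so $\xi(\pi)=0$, consistent with the stated inequality).

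I expect the only step needing genuine care to be the first one: verifying that the low-action belief path is the exact reflection of a high-action path, so that the envelope lemmas of the previous section apply verbatim. The subtlety is bookkeeping the two distinct uses of Assumption~\ref{assum: symmetry} — symmetry of the \emph{perceived} CDFs governs the belief dynamics in step one, while symmetry of the \emph{true} CDFs is what converts the tail near $1$ in step three. The remaining asymptotic manipulations are routine and parallel the proof of Lemma~\ref{lem:S_h_parameter}.
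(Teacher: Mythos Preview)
Your proposal is correct and follows essentially the same approach as the paper. The only cosmetic difference is that the paper invokes model symmetry at the level of the probability measure, writing $\xi(\pi)=\BP_{1-\pi,\ell}(\bar a=h)=\prod_n\bigl(1-F_\ell(1-\tilde\pi_n^h(1-\pi))\bigr)$ directly, whereas you derive the equivalent reflection $\tilde\pi_n^\ell(\pi)=1-\tilde\pi_n^h(1-\pi)$ by hand from the CDF symmetry in the belief recursion; after that both arguments reduce to the same series $\sum_n n^{-\alpha/\tilde\alpha}$.
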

\begin{proof}
Because of the symmetry assumption, we have $\xi(\pi) = \BP_{1-\pi,\ell}\left(\bar a = h\right)$. Let $\tilde{\pi}^h_n=\tilde{\pi}^h_n(1-\pi)$ be the misspecified public belief on the high action path, initiated at $\tilde{\pi}_1=1-\pi$. Then, following the same argument of Lemma~\ref{lem:S_h_parameter}, one has
\begin{equation*}
\begin{split}
    \xi(\pi) = \prod_n \big(1-F_\ell(1-\tilde{\pi}^h_{n}(1-\pi))\big)\,.
\end{split}
\end{equation*}
Therefore, $\xi(\pi)>0$ if and only if $-\sum_n \log\big(1-F_\ell(1-\tilde{\pi}^h_n)\big)<\infty$. Since on the high action path $\tilde{\pi}^h_n \to 1$, then $-\log\big(1-F_\ell(1-\tilde{\pi}^h_n)\big) \sim F_\ell(1-\tilde{\pi}^h_n)$, and the previous sum is finite if and only if
\begin{equation*}
    \sum_n F_\ell\left(1-\tilde{\pi}^h_n\right) < \infty\,.
\end{equation*}
Lemma~\ref{lem: NIP} implies that $F_\ell(q)=\Theta(q^\alpha)$. Also, as $n\to \infty$, we have $1-\tilde{\pi}^h_n \sim \ee^{-\tilde{r}^h_n}$, therefore the above sum is finite if and only if
\begin{equation*}
    \sum_n \ee^{-\alpha\tilde{r}^h_n} < \infty\,.
\end{equation*}
It was shown in Lemmas~\ref{lem: r_tilde_lower_bound} and \ref{lem: r_tilde_upper_bound} that $\ee^{\tilde{r}_n}=\Theta(n^{1/\tilde{\alpha}})$, thus one can deduce that the above sum is finite if and only if $\alpha > \tilde{\alpha}$. Therefore, $\xi(\pi)=0$ if and only if $\tilde{\alpha} - \alpha \geq 0$.
\end{proof}
Observe that $\xi(1)=0$. Therefore, the second and the third implications of Proposition~\ref{prop:low_state_equiv}, namely (\rn{2}) $\Rightarrow$ (\rn{3}) $\Rightarrow$ (\rn{1}), respectively follow from the above lemma, thereby concluding the proof of Proposition~\ref{prop:low_state_equiv}.


\subsection{Parametric Characterization for Efficient Learning}

In this section we characterize the range of condescension where efficient learning is achieved.
\begin{proposition}
\label{prop:efficient_learning}
Assume $\tilde\alpha \neq \alpha$. The following are equivalent:
\begin{enumerate}[label = (\roman*)]
    \item Efficient learning;
    \item $\tilde\alpha-\alpha \in (0,1)$.
\end{enumerate}
\end{proposition}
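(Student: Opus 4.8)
The plan is to follow the outline of Section~\ref{subs:efficient_learning}. The implication (\rn{1})$\Rightarrow$(\rn{2}) is immediate: efficient learning forces $W<\infty$ $\BP$-almost surely, i.e.\ asymptotic learning, so by Proposition~\ref{prop:asymptotic_learning} we have $\tilde\alpha-\alpha\in[0,1)$, and the standing hypothesis $\tilde\alpha\neq\alpha$ upgrades this to $\tilde\alpha-\alpha\in(0,1)$. The substance is the converse: assuming $\tilde\alpha-\alpha\in(0,1)$ --- in particular $\tilde\alpha>\alpha$, so that asymptotic learning holds --- show $\BE[W]<\infty$. By the symmetry assumption it suffices to bound $\BE_h[W]$ for an arbitrary fixed prior (the state $\ell$ then follows by relabelling with prior $1-\pi$). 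Since $a_n\to h$ $\BP_h$-a.s., I decompose the a.s.\ finite set of wrong actions into \emph{bad runs} of consecutive $\ell$-actions, flanked by correct actions, and bound (i) the expected number $N$ of bad runs and (ii) the conditional expected length of each bad run, uniformly over the preceding history.

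For (i) I reuse the argument in the proof of Lemma~\ref{lem:asympt_learning_suff_condition}: let $\delta=\inf_{\pi'\geq1/2}\BP_{\pi',h}(\bar a=h)$, which is strictly positive by Proposition~\ref{prop:high_state_equiv} because $\tilde\alpha-\alpha<1$. Immediately after a bad run ends the current action is $h$, so the overturning principle gives $\tilde\pi\geq1/2$, and by stationarity the conditional probability that no further wrong action ever occurs is at least $\delta$. Hence $N$ is stochastically dominated by a geometric random variable, so $\BE[N]\leq 1/\delta<\infty$.

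For (ii), consider a bad run $a_k=\dots=a_{k+L-1}=\ell$ with $a_{k-1}=h$, so that $\tilde\pi_k\in[1/2,1)$ by the overturning principle (the first run, when the prior lies below $1/2$, is handled separately using that the prior is a fixed interior point). Writing the odds form of \eqref{eq:misspecified_belief_l} and using tail-regularity together with Lemmas~\ref{lem: NIP} and~\ref{lem: Psi_h_asymptotics} --- which give $\widetilde F_h(q)/\widetilde F_\ell(q)=\Theta(q)$ near $0$ and $\widetilde F_h(q)/\widetilde F_\ell(q)\leq q/(1-q)$ --- one checks that the belief right after the first $\ell$, call it $\tilde\pi_{k+1}=p$, lies in a fixed interval $[c_1,1/2]\subset(0,1/2]$ with $c_1>0$ depending only on the tail-regularity data. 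Conditioned on $p$, the run proceeds deterministically along the all-$\ell$ path $\tilde\pi^\ell_m(p)$, and using symmetry ($F_h(1-q)=1-F_\ell(q)$),
\begin{equation*}
  \BP_h\bigl(L\geq j \bigm| \tilde\pi_{k+1}=p\bigr) \;=\; \prod_{m=1}^{j-1}F_h\bigl(1-\tilde\pi^\ell_m(p)\bigr) \;=\; \prod_{m=1}^{j-1}\bigl(1-F_\ell(\tilde\pi^\ell_m(p))\bigr).
\end{equation*}
The recursion \eqref{eq:misspecified_belief_l} is monotone in its starting point, so $\tilde\pi^\ell_m(p)\geq\tilde\pi^\ell_m(c_1)$; and by the symmetry assumption the all-$\ell$ path from $c_1$ is the mirror image of an all-$h$ path, so Lemmas~\ref{lem: r_tilde_lower_bound} and~\ref{lem: r_tilde_upper_bound} give $\tilde\pi^\ell_m(c_1)=\Theta(m^{-1/\tilde\alpha})$ and hence $F_\ell(\tilde\pi^\ell_m(p))\geq c''\,m^{-\alpha/\tilde\alpha}$ for all large $m$, with fixed $c''$. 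Since $1-x\leq\ee^{-x}$ and $\alpha/\tilde\alpha<1$,
\begin{equation*}
  \BP_h\bigl(L\geq j \bigm| \tilde\pi_{k+1}=p\bigr) \;\leq\; \exp\Bigl(-\sum_{m=1}^{j-1} F_\ell(\tilde\pi^\ell_m(p))\Bigr) \;\leq\; C\exp\bigl(-c\, j^{1-\alpha/\tilde\alpha}\bigr),
\end{equation*}
which is summable in $j$ uniformly over $p\in[c_1,1/2]$; thus $\BE_h[L\mid\tilde\pi_{k+1}=p]\leq M<\infty$ with $M$ independent of the history. Combining, if $L_i$ is the length of the $i$-th bad run then $\{N\geq i\}$ is measurable with respect to the history up to the start of that run, on which $\BE_h[L_i\mid\cdot]\leq M$, so $\BE_h[W]=\sum_{i\geq1}\BE_h[L_i\mathbbm{1}_{\{N\geq i\}}]\leq M\sum_{i\geq1}\BP_h(N\geq i)=M\,\BE_h[N]\leq M/\delta<\infty$, and likewise in state $\ell$, giving $\BE[W]<\infty$.

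I expect the main obstacle to be step (ii): in the well-specified model the uniform control of the public belief at the onset of a run comes for free from the martingale property of $\{(1-\pi_n)/\pi_n\}$, which fails in the misspecified setting, so it must be replaced by the tail-regularity computation that pins $\tilde\pi_{k+1}$ into the fixed interval $[c_1,1/2]$. The remaining envelope/summability bookkeeping is routine, but it is essential that $\tilde\alpha>\alpha$ \emph{strictly}, so that the exponent $1-\alpha/\tilde\alpha$ is positive --- at $\tilde\alpha=\alpha$ the bound degrades to $\BP_h(L\geq j\mid\cdot)\lesssim j^{-c''}$, whose summability would depend on the implied constant, which is exactly the Rosenberg--Vieille borderline the hypothesis $\tilde\alpha\neq\alpha$ avoids.
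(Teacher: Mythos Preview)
Your proposal is correct and structurally identical to the paper's: (\rn{1})$\Rightarrow$(\rn{2}) via Proposition~\ref{prop:asymptotic_learning}, geometric domination of the number of bad runs via $\delta=\inf_{\pi'\geq 1/2}\BP_{\pi',h}(\bar a=h)>0$, and pinning the onset belief of a bad run into a fixed interval $[c_1,1/2]$ by tail regularity (this is exactly Proposition~\ref{prop:finite_exp_duration}(\rn{2})). The one substantive difference is how you bound the expected length of a bad run. The paper (Proposition~\ref{prop:finite_exp_duration}(\rn{1})) changes measure via Bayes' law to the \emph{correctly specified} log-likelihood, writing $\BP_{1-\pi,\ell}(\tau_\ell>n)\leq\frac{1-\pi}{\pi}\,\ee^{-r^h_{n+1}}$, and then shows $\sum_n\ee^{-r^h_n}<\infty$ by Raabe's criterion, using $r^h_{n+1}-r^h_n=U(\tilde r^h_n)=\Theta(\ee^{-\alpha\tilde r^h_n})$ together with the upper envelope for $\tilde r^h_n$. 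You stay with the misspecified belief throughout: bounding $\BP_h(L\geq j)=\prod_m\bigl(1-F_\ell(\tilde\pi^\ell_m)\bigr)$ directly and feeding in $\tilde\pi^\ell_m=\Theta(m^{-1/\tilde\alpha})$ yields the stretched-exponential tail $\exp\bigl(-c\,j^{1-\alpha/\tilde\alpha}\bigr)$. Your route is more elementary---no Raabe, no need to track $r^h_n$---and in fact delivers a sharper tail estimate than the paper's argument.

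One point to tighten: the assertion that ``the recursion \eqref{eq:misspecified_belief_l} is monotone in its starting point'' is not immediate, because the threshold $1-\tilde\pi_n$ moves with $\tilde\pi_n$, and the paper only establishes (and only uses) that the one-step map $\Psi(r)=r+\widetilde U(r)$ is increasing for \emph{large} $r$. You do not actually need global monotonicity. Since the onset beliefs lie in the compact interval $[c_1,1/2]$, Lemma~\ref{lem: uniformity} pushes every trajectory $\tilde r^h_n(1-p)$ past a common threshold $\bar r$ (on which $\Psi$ is increasing) by a fixed time $n_0$, and continuity of $\Psi^{n_0-1}$ on the compact set of initial conditions bounds $\tilde r^h_{n_0}(1-p)$ above by some $R$ uniformly in $p$; from $n_0$ on, the upper-envelope construction of Lemma~\ref{lem: r_tilde_upper_bound} then runs with a single $p$-independent constant. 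This gives the uniform lower bound $\tilde\pi^\ell_m(p)\geq c'\,m^{-1/\tilde\alpha}$ you need, without the monotonicity claim.
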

\begin{proof}
The implication (\rn{1}) $\Rightarrow$ (\rn{2}) follows immediately from Proposition~\ref{prop:asymptotic_learning}, since efficient learning implies asymptotic learning.

Towards sufficiency, assume $\tilde\alpha-\alpha <1$. Following the same logic as in the proof of Lemma~\ref{lem:asympt_learning_suff_condition}, one obtains that conditioned on $\theta=h$, the probability of having $m$ bad runs is at most $(1-\delta)^m$ for some $\delta>0$, and hence the number of bad runs has a finite expectation.

By Proposition~\ref{prop:finite_exp_duration} below, conditioned on the high state, $\tilde\alpha-\alpha >0$ implies that the expected length of the first bad run is bounded by $C_0\frac{1-\pi}{\pi}$, for some constant $C_0>0$. This proposition also implies, by stationarity, that conditioned on $\theta=h$ and on any prior history, the expected length of any future bad run is at most $C_0 B$, where $B>0$ is another constant. It thus follows from the fact that signals are conditionally i.i.d.\ that the expected total number of low actions in the high state is finite. The argument is analogous to the one that appears in Appendix~B.3 of \cite{Rosenberg_Vieille_ECMA_2019}.

Finally, by symmetry, the expected number of high actions in the low state is also finite, and thus we have efficient learning.

\end{proof}

We end this second with the following proposition, which is the main ingredient of the proof above. It shows that  $\tilde\alpha-\alpha > 0$ implies that the expected length of a bad run is uniformly bounded. 

Define $\tau_\theta:= \min\{n: a_n=\theta\}$. Note that conditioned on $\theta=h$, $\tau_h$ is the length of the first bad run.
\begin{proposition}
\label{prop:finite_exp_duration}
Assume $\tilde{\alpha}-\alpha >0$, then the following statements hold:
\begin{enumerate}[label = (\roman*)]
    \item Let $\pi\leq 1/2$. There exists a constant $C_0>0$ (independent of $\pi$) such that
    \begin{equation*}
        \BE_{\pi,h}\left[\tau_h\right] \leq C_0\, \frac{1-\pi}{\pi}\,.
    \end{equation*}
    \item Let $\tilde{\pi}_{n+1}$ be the misspecified public belief after observing a history ending with $a_{n-1}=h$ and $a_n=\ell$. Then $\frac{1-\tilde{\pi}_{n+1}}{\tilde{\pi}_{n+1}} \leq B$ for some constant $B<\infty$ that does not depend on the history.
\end{enumerate}
\end{proposition}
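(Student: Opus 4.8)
I would dispatch this first. Since the history ends with $a_{n-1}=h$, the overturning principle gives $\tilde\pi_n\ge 1/2$, so $x:=1-\tilde\pi_n\in(0,1/2]$, and the observation $a_n=\ell$ updates the belief via \eqref{eq:misspecified_belief_l} to
\[
\frac{\tilde\pi_{n+1}}{1-\tilde\pi_{n+1}}=\frac{1-x}{x}\cdot\frac{\widetilde F_h(x)}{\widetilde F_\ell(x)}.
\]
It therefore suffices to bound $\frac{1-x}{x}\cdot\frac{\widetilde F_h(x)}{\widetilde F_\ell(x)}$ below by a positive constant uniformly over $x\in(0,1/2]$. By Lemmas~\ref{lem: NIP} and~\ref{lem: Psi_h_asymptotics} applied to $(\widetilde F_\ell,\widetilde F_h)$, $\widetilde F_h(x)=\Theta(x^{\tilde\alpha+1})$ and $\widetilde F_\ell(x)=\Theta(x^{\tilde\alpha})$ near $0$, so $\widetilde F_h(x)\ge c_1 x^{\tilde\alpha+1}$ and $\widetilde F_\ell(x)\le C_1 x^{\tilde\alpha}$ there; the same inequalities persist on any interval $[\varepsilon,1/2]$ by continuity and strict positivity of $\widetilde F_h,\widetilde F_\ell$. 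Hence $\widetilde F_h(x)/\widetilde F_\ell(x)\ge(c_1/C_1)x$ on $(0,1/2]$, giving $\frac{\tilde\pi_{n+1}}{1-\tilde\pi_{n+1}}\ge(c_1/C_1)(1-x)\ge c_1/(2C_1)$; so $B:=2C_1/c_1$ works, and it depends only on the perceived signal distribution, not on the history.

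\textbf{Part (i), reduction.} Condition on $\theta=h$. On $\{\tau_h>n\}=\{a_1=\dots=a_n=\ell\}$ the public beliefs are deterministic — write $\tilde\pi^\ell_k,\pi^\ell_k$ for their values, governed by \eqref{eq:misspecified_belief_l} and \eqref{eq:rational_belief_l}; by symmetry $\tilde\pi^\ell_k=1-\tilde\pi^h_k(1-\pi)$, so Lemmas~\ref{lem: r_tilde_lower_bound}--\ref{lem: r_tilde_upper_bound} give $1-\tilde\pi^\ell_k\asymp\ee^{-\tilde r^h_k(1-\pi)}$ with $\ee^{\tilde r^h_k(1-\pi)}=\Theta\big((m+k)^{1/\tilde\alpha}\big)$ where $m:=(\tfrac{1-\pi}{\pi})^{\tilde\alpha}$. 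A short Bayes computation gives $\BP_{\pi,h}(\tau_h>n)=\tfrac{\pi^\ell_{n+1}}{\pi}\,\BP(\tau_h>n)\le\tfrac1\pi\pi^\ell_{n+1}$, so $\BE_{\pi,h}[\tau_h]=\sum_{n\ge0}\BP_{\pi,h}(\tau_h>n)\le\tfrac1\pi\sum_{k\ge1}\pi^\ell_k$, and the task becomes an upper bound on $\sum_{k\ge1}\pi^\ell_k$. To that end I lower-bound the \emph{well-specified} likelihood $L^\ell_k:=\tfrac{1-\pi^\ell_k}{\pi^\ell_k}$: iterating \eqref{eq:rational_belief_l} with symmetry gives $L^\ell_k=\tfrac{1-\pi}{\pi}\prod_{j<k}\tfrac{1-F_h(\tilde\pi^\ell_j)}{1-F_\ell(\tilde\pi^\ell_j)}$, and since $F_h(x)=\Theta(x^{\alpha+1})$, $F_\ell(x)=\Theta(x^{\alpha})$ with $F_h<F_\ell$ (Lemma~\ref{lem: Psi_h_asymptotics} and continuity/positivity), $\tfrac{1-F_h(x)}{1-F_\ell(x)}\ge\ee^{c x^{\alpha}}$ on $(0,1/2]$ for a suitable $c>0$; as $\tilde\pi^\ell_j$ decreases from $\pi\le1/2$, this yields $\pi^\ell_k\le\tfrac{\pi}{1-\pi}\exp\!\big(-c\sum_{j<k}(\tilde\pi^\ell_j)^{\alpha}\big)$, hence
\[
\BE_{\pi,h}[\tau_h]\ \le\ \frac{1}{1-\pi}\sum_{k\ge1}\exp\!\Big(-c\textstyle\sum_{j<k}(\tilde\pi^\ell_j)^{\alpha}\Big).
\]

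\textbf{Part (i), the obstacle.} The crux is to control this last sum and conclude $\BE_{\pi,h}[\tau_h]\le C_0\tfrac{1-\pi}{\pi}$. Using the envelope estimate, $(\tilde\pi^\ell_j)^{\alpha}\asymp(m+j)^{-\alpha/\tilde\alpha}$, and since agents are condescending we have $\alpha/\tilde\alpha<1$, so $\sum_{j<k}(\tilde\pi^\ell_j)^{\alpha}\asymp(m+k)^{\gamma}-m^{\gamma}$ with $\gamma:=1-\alpha/\tilde\alpha\in(0,1)$; the sum is then comparable to $\int_0^\infty\exp\!\big(-c'((m+t)^{\gamma}-m^{\gamma})\big)\d t=\ee^{c'm^{\gamma}}\int_m^\infty\ee^{-c'u^{\gamma}}\d u$, a Laplace-type integral whose asymptotics in the large parameter $m$ must be extracted and then recombined with the factors of $\pi$ and $1-\pi$. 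This is the step I expect to be the most delicate: the misspecified trajectory $\tilde\pi^\ell_j$ enters the exponent while the prefactor comes from the well-specified recursion, so the two must be reconciled by hand rather than through a martingale identity — exactly the difficulty flagged in Section~\ref{subs:efficient_learning} — and, crucially, the implied constants in the envelope estimates and in this comparison must be taken uniform over the initial belief $\pi\le1/2$, for which Lemma~\ref{lem: uniformity} is the key input. The bookkeeping parallels the argument in Appendix~B.3 of \cite{Rosenberg_Vieille_ECMA_2019}, with the martingale used there replaced by these mechanical tail-regularity estimates.
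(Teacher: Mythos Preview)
\textbf{Part (ii)} is argued exactly as in the paper.

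\textbf{Part (i).} Your reduction is equivalent to the paper's, reached by a slightly different route. The paper first applies symmetry, $\BE_{\pi,h}[\tau_h]=\BE_{1-\pi,\ell}[\tau_\ell]$, and then a Bayes identity comparing the two \emph{conditional} measures $\BP_{1-\pi,\ell}$ and $\BP_{1-\pi,h}$ (rather than your conditional-to-unconditional identity), arriving at
\[
\BE_{1-\pi,\ell}[\tau_\ell]\ \le\ 1+\frac{1-\pi}{\pi}\sum_{n\ge 1}\ee^{-r^h_{n+1}}.
\]
Since $\pi^\ell_k=1-\pi^h_k(1-\pi)\le \ee^{-r^h_k}$, this is the same sum you end up controlling; the two Bayes identities are interchangeable here.

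The substantive difference is how the convergence of $\sum_n \ee^{-r^h_n}$ is established. The paper bypasses your explicit exponent buildup and the Laplace-type integral entirely, using Raabe's criterion: the ratio of consecutive terms is $\ee^{U(\tilde r^h_n)}$, so
\[
n\Big(\tfrac{\ee^{-r^h_n}}{\ee^{-r^h_{n+1}}}-1\Big)\ \ge\ nU(\tilde r^h_n)\ \ge\ cn\,\ee^{-\alpha\tilde r^h_n}\ \ge\ cn\,\ee^{-\alpha z(n)},
\]
where $z$ is the upper envelope for $\tilde r^h_n$ from Lemma~\ref{lem: r_tilde_upper_bound}; since $\ee^{\alpha z(n)}=\Theta(n^{\alpha/\tilde\alpha})$ and $\alpha/\tilde\alpha<1$, the right side diverges and Raabe gives convergence in one line. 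This is much shorter than carrying out the integral $\ee^{c'm^\gamma}\int_m^\infty \ee^{-c'u^\gamma}\d u$ and tracking its $m$-dependence, which is exactly the step you leave open. For uniformity in $\pi$, the paper anchors the upper envelope at a $\pi$-independent level $\bar r+\widetilde U(0)$ (using that $\widetilde U$ is decreasing and invoking the uniform hitting time from Lemma~\ref{lem: uniformity}), so that the constants entering Raabe's test do not depend on the prior---this is the mechanical replacement for the martingale argument you allude to, and it is cleaner than threading $m=(\tfrac{1-\pi}{\pi})^{\tilde\alpha}$ through a Laplace asymptotic.
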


\begin{proof}
To see  (\rn{1}), observe that because of symmetry, one has $\BE_{\pi,h}\left[\tau_h\right]=\BE_{1-\pi,\ell}\left[\tau_\ell\right]$. Also, it holds that
\begin{equation*}
    \BE_{1-\pi,\ell}\left[\tau_\ell\right] = 1+\sum_{n\geq 1}\BP_{1-\pi,\ell}\left(\tau_\ell > n\right)\,.
\end{equation*}
By Bayes Law
\begin{equation*}
    \BP_{1-\pi,\ell}\left(\tau_\ell > n\right) = \frac{1-\pi}{\pi} \times \frac{1-\pi_{n+1}^h}{\pi_{n+1}^h} \, \BP_{1-\pi,h}\left(a_1=\cdots=a_n=h\right)\,,
\end{equation*}
where $\pi^h_n$ is the correctly specified public belief on the high action path, starting at $1-\pi$ and following the dynamics in equation \eqref{eq:rational_belief}. Recall that $r_n^h$ represents the correctly specified public log-likelihood on the high action path, which follows the dynamics in equation \eqref{eq: r_difference}, namely $r^h_{n+1}-r^h_n=U(\tilde{r}^h_n)$. Hence, the above expression implies that
\begin{equation}
\label{eq: tau_ell_bound}
    \BE_{1-\pi,\ell}\left[\tau_\ell\right] \leq 1 + \frac{1-\pi}{\pi}\, \sum_{n\geq 1} \ee^{-r^h_{n+1}}\,.
\end{equation}
Since the initial belief is set to $1-\pi$ and it is assumed in part (\rn{1}) that $\pi \leq 1/2$, then $r_n^h \geq 0$.
Next, observe that on the path $\bar a=h$, the misspecified public log-likelihood follows the difference equation \eqref{eq:r_tilde_difference}, namely $\tilde{r}^h_{n+1}-\tilde{r}^h_n=\widetilde{U}(\tilde{r}^h_n)$. Additionally, because of Lemma~\ref{lem: U_F_approx}, there exists $C>0$ such that $\widetilde{U}(\tilde{r}^h_n) \leq C \ee^{-\tilde{\alpha}\tilde{r}^h_n}$. We continue by finding a continuous time upper envelope for $\tilde{r}^h_n$---analogous to Lemma~\ref{lem: r_tilde_upper_bound} with a slight catch in selecting the initial condition. Choose $\bar{r}>0$ such that the mapping $r\mapsto r+C\ee^{-\tilde{\alpha}r}$ becomes increasing on $[\bar{r},\infty)$. Let $n_0:=\min\{n: \tilde{r}^h_n \geq \bar{r}\}$ that is finite because $\tilde{r}_n \to \infty$ on the high action path. Since $\widetilde{U}(\cdot)$ is a decreasing function, then $\bar{r}\leq \tilde{r}^h_{n_0}\leq \bar{r}+\widetilde{U}(0)$. Let $z(t)$ be the solution to the following differential equation
\begin{equation*}
    \frac{\d z(t)}{\d t} = 2C \ee^{-\tilde{\alpha}z(t)}\,,
\end{equation*}
starting at $z(0)=\bar{r}+\widetilde{U}(0)$. Therefore, $z(t)=\tilde{\alpha}^{-1}\log\left(\ee^{\bar{r}+\widetilde{U}(0)}+2C \tilde{\alpha}t\right)$. Following the recipe of Lemma~\ref{lem: r_tilde_upper_bound}, one can show by induction that $z(k) \geq \tilde{r}^h_{k_0+n}$ for all $k\geq 0$.
Next, we examine Raabe's criterion\footnote{\label{fn:raabe} Raabe's criterion for convergence of sums states that $\sum_n \nu_n$ converges if $\liminf_n \rho_n>1$ and diverges if $\limsup_n \rho_n<1$, where $\rho_n = n\big(\nu_n/\nu_{n+1}-1\big)$. It is inconclusive when $\lim_n\rho_n=1$. This latter case corresponds to $\tilde\alpha=\alpha$.} for the infinite sum $\sum_{n\geq 1} \ee^{-r^h_{n}}$, that is to examine the limit of the following expression:
\begin{equation*}
    n\left(\frac{\ee^{-r^h_{n}}}{\ee^{-r^h_{n+1}}}-1\right) = n\left(\ee^{U(\tilde{r}^h_n)}-1\right) \geq n U(\tilde{r}^h_n)\,.
\end{equation*}
Note that $U$ is decreasing, that $z(k) \geq \tilde{r}^h_{k_0+n}$ for all $k\geq 0$, and that there exists $c>0$ such that $U(z) \geq c\, \ee^{-\alpha z}$. Hence
\begin{equation*}
    \liminf_{n \to \infty} n\left(\frac{\ee^{-r^h_{n}}}{\ee^{-r^h_{n+1}}}-1\right) \geq \limsup_{k \to \infty} c k \, \ee^{-\alpha z(k)}=\limsup_{k \to \infty} \frac{c\, k}{\left(\ee^{\bar{r}+\widetilde{U}(0)}+2C \tilde{\alpha}\,k\right)^{\alpha/\tilde{\alpha}}}\,.
\end{equation*}
Since $\tilde{\alpha}>\alpha$, the limit superior on the right hand side above is infinite and thus the sum $\sum_{n\geq 1} \ee^{-r^h_{n}}$ is convergent. Together with \eqref{eq: tau_ell_bound}, this implies that there exists a constant $C_0>0$ such that $\BE_{\pi,h}[\tau_h] \leq C_0 \, \left(\frac{1-\pi}{\pi}\right)$. This establishes (\rn{1}).

To see (\rn{2}), condition on the event $\{a_{n-1}=h, a_{n}=\ell\}$. Equivalently, $\tilde\pi_{n} \geq 1/2$ and $\tilde\pi_{n+1} \leq 1/2$, by the overturning principle. Then, Bayes law implies
\begin{equation*}
    \frac{1-\tilde{\pi}_{n+1}}{\tilde{\pi}_{n+1}} = \frac{1-\tilde{\pi}_n}{\tilde{\pi}_n} \times \frac{\widetilde{F}_\ell(1-\tilde{\pi}_n)}{\widetilde{F}_h(1-\tilde{\pi}_n)}\,.
\end{equation*}
Since $\widetilde{F}_\ell(q) = \Theta(q^{\tilde{\alpha}})$ and $\widetilde{F}_h(q) = \Theta(q^{\tilde{\alpha}+1})$, then, there are constants $C>0$ and $c>0$ such that $\widetilde{F}_\ell(q) \leq C q^{\tilde{\alpha}}$ and $\widetilde{F}_h(q) \geq c q^{\tilde{\alpha}+1}$ for all $q\in [0,1/2]$. Therefore, 
\begin{equation*}
    \frac{1-\tilde{\pi}_{n+1}}{\tilde{\pi}_{n+1}} \leq \frac{1-\tilde{\pi}_n}{\tilde{\pi}_n} \, \frac{C(1-\tilde{\pi}_n)^{\tilde{\alpha}}}{c(1-\tilde{\pi}_n)^{\tilde{\alpha}+1}} \leq \frac{2C}{c}\,.
\end{equation*}
This establishes  (\rn{2}).
\end{proof}

\section{Proofs of Propositions~\ref{thm:anti-cond} and~\ref{thm:over-cond}}
\label{app:propositions}
\begin{proof}[Proof of Proposition~\ref{thm:anti-cond}]

Suppose that $\tilde \alpha < \alpha$. Then, by Proposition~\ref{prop:low_state_equiv} one has $\BP_h(\bar a = \ell)>0$, so that a wrong herd forms immediately with positive probability.
\end{proof}

\begin{proof}[Proof of Proposition~\ref{thm:over-cond}]

Suppose that $\tilde \alpha \geq \alpha + 1$. Condition on the high state. Then, by Proposition~\ref{prop:high_state_equiv}, for any prior $\pi'<1$, the probability of an immediate herd on the high action is zero. Hence, by stationarity, the probability that $a_n \to h$ is zero. By Proposition~\ref{prop:low_state_equiv}, the probability of an immediate herd on the low action is also zero, and hence, again by stationarity, the probability that $a_n \to \ell$ is zero. Thus the agents take both actions infinitely many times. The same argument applies when conditioning on the low state.
\end{proof}
\end{document}